\documentclass[11pt,letterpaper]{article}
\usepackage[lmargin=1.0in,rmargin=1.0in,bottom=1.0in,top=1.0in,twoside=False]{geometry}

\usepackage[utf8]{inputenc}
\usepackage[T1]{fontenc}
\usepackage{lmodern}

\usepackage{fullpage,amssymb,amsmath}
\usepackage{graphicx}
\usepackage{enumerate}
\usepackage{paralist}
\usepackage{tikz}
\usepackage{authblk}
\usetikzlibrary{shapes,patterns}

\usepackage{xcolor}
\usepackage{mathtools}
\usepackage{microtype}
\usepackage{amsfonts}
\usepackage{comment}
\usepackage[english]{babel}
\usepackage{mathrsfs}
\usepackage[ruled,vlined]{algorithm2e}
\usepackage{blindtext}
\usepackage{thmtools}

\usepackage{trimspaces}
\usepackage{nccfoots}
\usepackage{setspace}
\usepackage{inconsolata}
\usepackage{libertine}
\usepackage[absolute]{textpos}

\usepackage{enumitem}
\usepackage{todonotes}
\usepackage{longtable}

\definecolor{blue}{rgb}{0.1,0.2,0.5}
\definecolor{brown}{rgb}{0.6,0.6,0.2}
\usepackage[ocgcolorlinks, linkcolor={blue}, citecolor={brown}]{hyperref}
\usepackage[amsmath,amsthm,thmmarks,hyperref]{ntheorem}
\usepackage{enumerate}
\usepackage{latexsym}


\usepackage{cleveref}

\crefformat{page}{#2page~#1#3}%
\Crefformat{page}{#2Page~#1#3}%
\crefformat{equation}{#2(#1)#3}%
\Crefformat{equation}{#2(#1)#3}%
\crefformat{figure}{#2Figure~#1#3}%
\Crefformat{figure}{#2Figure~#1#3}%
\crefformat{section}{#2Section~#1#3}
\Crefformat{section}{#2Section~#1#3}
\crefformat{chapter}{#2Chapter~#1#3}
\Crefformat{chapter}{#2Chapter~#1#3}
\crefformat{chapter*}{#2Chapter~#1#3}
\Crefformat{chapter*}{#2Chapter~#1#3}
\crefformat{part}{#2Part~#1#3}
\Crefformat{part}{#2Part~#1#3}
\crefformat{enumi}{#2(#1)#3}
\Crefformat{enumi}{#2(#1)#3}
\crefformat{cthmin}{#2Theorem~#1#3}
\Crefformat{cthmin}{#2Theorem~#1#3}
\newenvironment{cthm}[1]
  {\cthmin}
  {\endcthmin}


\usepackage[mathlines]{lineno}

\newcommand*\patchAmsMathEnvironmentForLineno[1]{%
  \expandafter\let\csname old#1\expandafter\endcsname\csname #1\endcsname
  \expandafter\let\csname oldend#1\expandafter\endcsname\csname end#1\endcsname
  \renewenvironment{#1}%
     {\linenomath\csname old#1\endcsname}%
     {\csname oldend#1\endcsname\endlinenomath}}%
\newcommand*\patchBothAmsMathEnvironmentsForLineno[1]{%
  \patchAmsMathEnvironmentForLineno{#1}%
  \patchAmsMathEnvironmentForLineno{#1*}}%
\AtBeginDocument{%
  \patchBothAmsMathEnvironmentsForLineno{equation}%
  \patchBothAmsMathEnvironmentsForLineno{align}%
  \patchBothAmsMathEnvironmentsForLineno{flalign}%
  \patchBothAmsMathEnvironmentsForLineno{alignat}%
  \patchBothAmsMathEnvironmentsForLineno{gather}%
  \patchAmsMathEnvironmentForLineno{split}
  \patchBothAmsMathEnvironmentsForLineno{multline}}


\theoremnumbering{arabic}
\theoremstyle{plain}
\theoremsymbol{}
\theorembodyfont{\itshape}
\theoremheaderfont{\normalfont\bfseries}
\theoremseparator{.}

\newtheorem{theorem}{Theorem}
\crefformat{theorem}{#2Theorem~#1#3}
\Crefformat{theorem}{#2Theorem~#1#3}

\newcommand{\newtheoremwithcrefformat}[2]{%
  \newtheorem{#1}[theorem]{#2}%
  \crefformat{#1}{##2\MakeUppercase#1~##1##3}%
  \Crefformat{#1}{##2\MakeUppercase#1~##1##3}%
}
\newcommand{\newseptheoremwithcrefformat}[2]{%
  \newtheorem{#1}{#2}%
  \crefformat{#1}{##2\MakeUppercase#1~##1##3}%
  \Crefformat{#1}{##2\MakeUppercase#1~##1##3}%
}

\newtheoremwithcrefformat{lemma}{Lemma}

\newtheoremwithcrefformat{proposition}{Proposition}
\newtheoremwithcrefformat{observation}{Observation}
\newseptheoremwithcrefformat{conjecture}{Conjecture}
\newtheoremwithcrefformat{corollary}{Corollary}
\newtheorem*{claim*}{Claim}

\theoremstyle{definition}
\newtheorem*{example*}{Example}
\theorembodyfont{\upshape}

\theoremstyle{nonumberplain}
\theoremheaderfont{\scshape}
\theorembodyfont{\normalfont}
\theoremsymbol{\ensuremath{\square}}

\theoremsymbol{\ensuremath{\lrcorner}}

%


\newcommand{\cT}{ \mathcal{T} }

\renewcommand{\phi}{\varphi}
\renewcommand{\epsilon}{\varepsilon}
\newcommand{\Oh}{\mathcal{O}}
\newcommand{\Ohs}{\mathcal{O}^*}

\renewcommand{\leq}{\leqslant}
\renewcommand{\geq}{\geqslant}

\newcommand{\homo}[1]{\textsc{Hom}(\ensuremath{#1})\xspace}
\newcommand{\lhomo}[1]{\textsc{LHom}(\ensuremath{#1})\xspace}
\newcommand{\coloring}[1]{\ensuremath{#1}-\textsc{Coloring}\xspace}

\newcommand{\id}{\operatorname{id}}
\newcommand{\tw}[1]{{\operatorname{tw}(#1)}}
\newcommand{\pw}[1]{{\operatorname{pw}(#1)}}
\newcommand{\og}{\operatorname{og}}
\newenvironment{inproof}{\noindent {\emph{Proof of Claim.}}}{\hfill$\blacksquare$\smallskip}

\begin{document}
\title{Fine-grained complexity of the graph homomorphism problem\\ for bounded-treewidth graphs\thanks{
The extended abstract of this work was presented during the conference SODA 2020~\cite{DBLP:conf/soda/OkrasaR20}}}


\author[1,2]{Karolina Okrasa\thanks{E-mail: \texttt{k.okrasa@mini.pw.edu.pl}.   Supported by the ERC grant CUTACOMBS (no. 714704).}}
\author[1,2]{Pawe\l{}~Rz\k{a}\.zewski\thanks{E-mail: \texttt{p.rzazewski@mini.pw.edu.pl}. Supported by Polish National Science Centre grant no. 2018/31/D/ST6/00062.}}

\affil[1]{Faculty of Mathematics and Information Science, Warsaw University of Technology, Poland}
\affil[2]{Faculty of Mathematics, Informatics and Mechanics, University of Warsaw}

\begin{titlepage}
\def\thepage{}
\thispagestyle{empty}
\maketitle

\begin{abstract}
For graphs $G$ and $H$, a \emph{homomorphism} from $G$ to $H$ is an edge-preserving
mapping from the vertex set of $G$ to the vertex set of $H$.
For a fixed graph $H$, by \textsc{Hom($H$)} we denote the computational problem
which asks whether a given graph $G$ admits a homomorphism to $H$.
If $H$ is a complete graph with $k$ vertices, then  \textsc{Hom($H$)} is equivalent 
to the $k$-\textsc{Coloring} problem, so graph homomorphisms can be seen as generalizations
of colorings.
It is known that \textsc{Hom($H$)} is polynomial-time solvable if $H$ is bipartite
or has a vertex with a loop, and NP-complete otherwise [Hell and Ne\v{s}et\v{r}il,
JCTB 1990].

In this paper we are interested in the complexity of the problem, parameterized
by the treewidth of the input graph $G$. If $G$ has $n$ vertices and is given along
with its tree decomposition of width $\mathrm{tw}(G)$, then the problem can be solved
in time $|V(H)|^{\mathrm{tw}(G)} \cdot n^{\mathcal{O}(1)}$, using a straightforward
dynamic programming. We explore whether this bound can be improved.
We show that if $H$ is a \emph{projective core}, then the existence of such a faster
algorithm is unlikely: assuming the Strong Exponential Time Hypothesis (SETH),
the \textsc{Hom($H$)} problem cannot be solved in time
$(|V(H)|-\epsilon)^{\mathrm{tw}(G)} \cdot n^{\mathcal{O}(1)}$, for any $\epsilon > 0$.
This result provides a full complexity characterization for a large class of graphs $H$,
as almost all graphs are projective cores.

We also notice that the naive algorithm can be improved for some graphs $H$, and show a 
complexity classification for all graphs $H$, assuming two conjectures from algebraic 
graph theory. In particular, there are no known graphs $H$ which are not covered by our
result.

In order to prove our results, we bring together some tools and techniques from algebra
and from fine-grained complexity.
\end{abstract}
\end{titlepage}

\section{Introduction}
Many problems that are intractable for general graphs become significantly easier if the structure of the input instance is ``simple''. One of the most successful measures of such a structural simplicity is the \emph{treewidth} of a graph, whose notion was rediscovered by many authors in different contexts \cite{bertele1972nonserial,Halin1976,DBLP:journals/jct/RobertsonS84,Arnborg1987}.
Most classic NP-hard problems, including \textsc{Independent Set}, \textsc{Dominating Set}, \textsc{Hamiltonian Cycle}, or \textsc{Coloring}, can be solved in time $\Ohs(f(\tw{G}))$, where $\tw{G}$ is the treewidth of the input graph $G$ (in the $\Ohs(\cdot)$ notation we suppress factors polynomial in the input size)~\cite{DBLP:journals/dam/ArnborgP89,DBLP:journals/cj/BodlaenderK08,DBLP:journals/iandc/Courcelle90,cygan2015parameterized}. In other words, many problems become polynomially solvable for graphs with bounded treewidth.

In the past few years the notion of \emph{fine-grained complexity} gained popularity, and the researchers became interested in understanding what is the optimal dependence on the treewidth, i.e., the function $f$ in the complexity of algorithms solving particular problems. This led to many interesting algorithmic results and lower bounds~\cite{DBLP:conf/esa/RooijBR09,DBLP:journals/iandc/BodlaenderCKN15,DBLP:journals/corr/KociumakaP17,DBLP:journals/eatcs/LokshtanovMS11,10.1007/978-3-642-22993-0_47,cygan2015parameterized}. Note that the usual assumption that P $\neq$ NP is not strong enough to obtain tight bounds for the running times of algorithms. In the negative results we usually assume the Exponential Time Hypothesis (ETH), or the Strong Exponential Time Hypothesis (SETH)~\cite{DBLP:journals/jcss/ImpagliazzoP01,DBLP:journals/jcss/ImpagliazzoPZ01}. Informally speaking, the ETH asserts that 3-\textsc{Sat} with $n$ variables and $m$ clauses cannot be solved in time $2^{o(n+m)}$, while the SETH implies that \textsc{CNF-Sat} with $n$ variables and $m$ clauses cannot be solved in time $(2 - \epsilon)^n \cdot m^{\Oh(1)}$, for any $\epsilon > 0$.

For example, it is known that for every fixed $k$, the \coloring{k} problem can be solved in time $\Ohs(k^{\tw{G}})$, if a tree decomposition of $G$ of width $\tw{G}$ is given~\cite{bodlaender2013fine, cygan2015parameterized}. On the other hand, Lokshtanov, Marx, and Saurabh showed that this result is essentially optimal, assuming the SETH.

\begin{theorem}[Lokshtanov, Marx, Saurabh~\cite{DBLP:journals/talg/LokshtanovMS18}] \label{thm:LMS}
Let $k \geq 3$ be a fixed integer. Assuming the SETH, the \coloring{k} problem on a graph $G$ cannot be solved in time $\Oh^*\left((k-\epsilon)^{\tw{G}}\right)$ for any $\epsilon>0$.
\end{theorem}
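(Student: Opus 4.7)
The plan is to reduce \textsc{CNF-Sat} on $n$ variables and $m$ clauses to \coloring{k} on a graph $G$ with $|V(G)|=\mathrm{poly}(n+m)$ and $\tw{G}\le \lceil n/\log_2 k\rceil+\Oh(1)$. Granted such a reduction, a hypothetical algorithm solving \coloring{k} in time $\Ohs((k-\epsilon)^{\tw{G}})$ would solve \textsc{CNF-Sat} in time
$(k-\epsilon)^{n/\log_2 k+\Oh(1)}\cdot\mathrm{poly}(n,m)=2^{(1-\delta)n}\cdot\mathrm{poly}(n,m)$
for some $\delta=\delta(\epsilon,k)>0$, contradicting the SETH.

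The encoding step is the natural one: partition the $n$ Boolean variables into $g=\lceil n/\log_2 k\rceil$ blocks of at most $\lfloor\log_2 k\rfloor$ variables each, and fix an injection from the $2^{\lfloor\log_2 k\rfloor}\le k$ truth assignments of a block into the colour set $\{1,\ldots,k\}$. The colour of a single vertex then encodes a partial assignment of one block. I would build $G$ with $m+1$ layers, each layer consisting of $g$ ``register'' vertices, one per block, so that a proper $k$-colouring of $G$ determines, for each clause $C_i$, a global assignment represented by the registers of layer $i$.

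Between layer $i$ and layer $i+1$ two kinds of gadgets are attached. \emph{Equality gadgets} force, for every block not appearing in $C_i$, the registers of that block in the two consecutive layers to receive the same colour. \emph{Clause-checking gadgets}, attached only to the registers of the (few) blocks involved in $C_i$, forbid exactly those colour tuples that encode assignments falsifying $C_i$. A path decomposition that sweeps the layers in order, and within each layer replaces block registers one at a time (keeping the previous copy of each block in the bag until the inter-layer gadget attached to it is fully processed), then has bags of size $g+\Oh(1)$, giving $\tw{G}\le g+\Oh(1)$.

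The hard step, and the technical heart of the argument, is the construction of these gadgets for arbitrary $k\ge 3$. One needs, for any small set of forbidden colour tuples on a constant number of port vertices, a graph whose proper $k$-colourings extending a chosen colouring of the ports exist if and only if the tuple is permitted, using only constantly many vertices and a constant-size interface. I would first build a ``propagator'' gadget realising any binary relation on $\{1,\ldots,k\}$ as the set of realisable colour pairs on two designated ports, using carefully layered copies of cliques and independent sets that exploit all $k$ colours symmetrically; equality is the diagonal case, and the clause-checker is obtained by composing propagators with auxiliary ``selector'' vertices that read off the values of the relevant variables from their block-register's colour. Once these gadgets are in hand, the equivalence between proper $k$-colourings of $G$ and satisfying assignments of the formula is routine, and together with the treewidth bound it completes the reduction.
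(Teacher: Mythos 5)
First, note that the paper does not prove this statement at all: it is quoted verbatim from Lokshtanov, Marx and Saurabh and used as a black box, so your sketch can only be judged against the known proof in that reference. Your overall architecture (reduce from \textsc{CNF-Sat}, encode blocks of Boolean variables as colours, one layer per clause, equality gadgets between layers, a sweep giving pathwidth roughly the number of registers per layer) is indeed the architecture of the LMS proof. But there is a concrete quantitative gap in your encoding step that breaks the reduction for almost every $k$, including $k=3$.

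You partition the variables into $g=\lceil n/\log_2 k\rceil$ blocks of at most $\lfloor\log_2 k\rfloor$ variables each, but these two requirements are incompatible: such blocks hold at most $\lceil n/\log_2 k\rceil\cdot\lfloor\log_2 k\rfloor$ variables, which is strictly less than $n$ (for large $n$) whenever $\log_2 k\notin\mathbb{Z}$. If you instead honestly take blocks of $\lfloor\log_2 k\rfloor$ variables, you get $g=\lceil n/\lfloor\log_2 k\rfloor\rceil$ registers per layer, and the hypothetical algorithm then runs in time $(k-\epsilon)^{n/\lfloor\log_2 k\rfloor}\cdot\mathrm{poly}=2^{n\log_2(k-\epsilon)/\lfloor\log_2 k\rfloor}\cdot\mathrm{poly}$, which is \emph{not} of the form $2^{(1-\delta)n}$ whenever $\log_2(k-\epsilon)>\lfloor\log_2 k\rfloor$; this already happens for $k=3$ ($\log_2(3-\epsilon)\approx 1.58>1$) and for $k=5,6,7$, etc. So no contradiction with the SETH is obtained except when $k$ is a power of two. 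The missing idea --- and the real technical point of the LMS proof --- is that one must encode a \emph{large constant} number $q$ of Boolean variables jointly by a \emph{group of} $p$ vertices, with $k^p\geq 2^q$ and $q$ chosen large enough (as a function of $k$ and $\epsilon$) that $p/q$ is close enough to $1/\log_2 k$ to guarantee $(k-\epsilon)^{p/q}<2$; a single vertex per block cannot achieve this ratio. This also complicates the gadgets, since equality and clause checking must now be enforced on $p$-tuples of registers rather than single vertices. A secondary, fixable omission: your relation gadgets need a shared palette clique to break colour symmetry (otherwise any realisable relation is invariant under global colour permutations, which equality is but a clause checker is not); this adds only $\Oh(k)$ to every bag, so it does not affect the width bound.
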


\paragraph{Homomorphisms} 
For two graphs $G$ and $H$, a homomorphism is an edge-preserving mapping from $V(G)$ to $V(H)$. The graph $H$ is called the \emph{target} of the homomorphism.
The existence of a homomorphism from any graph $G$ to the complete graph $K_k$ is equivalent to the existence of a $k$-coloring of $G$. Because of that we often refer to a homomorphism to $H$ as an \emph{$H$-coloring} and think of vertices of $H$ as colors. We also say that a graph $G$ is \emph{$H$-colorable} if it admits a homomorphism to $H$.
For a fixed graph $H$, by \homo{H} we denote the computational problem which asks whether a given instance graph $G$ admits a homomorphism to $H$. Clearly \homo{K_k} is equivalent to \coloring{k}.

Since \coloring{k} is arguably one of the best studied computational problems, it is interesting to investigate how these results generalize to \homo{H} for non-complete targets $H$. For example, it is known that \coloring{k} is polynomial-time solvable for $k \leq 2$, and NP-complete otherwise. A celebrated result by Hell and Ne\v{s}et\v{r}il~\cite{DBLP:journals/jct/HellN90} states that \homo{H} is polynomially solvable if $H$ is bipartite or has a vertex with a loop, and otherwise is NP-complete. The polynomial part of the theorem is straightforward and the main contribution was to prove hardness for all non-bipartite graphs $H$.
The difficulty comes from the fact that the local structure of the graph $H$ is not very helpful, but we need to consider $H$ as a whole. This is the reason why the proof of Hell and Ne\v{s}et\v{r}il uses a combination of combinatorial and algebraic arguments. Several alternative proofs of the result have appeared~\cite{DBLP:journals/tcs/Bulatov05,Siggers}, but none of them is purely combinatorial.

When it comes to the running times of algorithms for \coloring{k}, it is well-known that the trivial $\Ohs(k^n)$ algorithm for \coloring{k}, where $n$ is the number of vertices of the input graph, can be improved to $\Ohs(c^n)$ for a constant $c$ which does not depend on $k$ (currently the best algorithm of this type has running time $\Ohs(2^n)$~\cite{DBLP:journals/siamcomp/BjorklundHK09}).
Analogously, we can ask whether the trivial $\Ohs(|H|^n)$ algorithm for \homo{H} can be improved, where by $|H|$ we mean the number of vertices of $H$.
There are several algorithms with running times $\Ohs({c(H)}^n)$, where $c(H)$ is some structural parameter of $H$, which could be much smaller than $|H|$~\cite{DBLP:journals/mst/FominHK07,DBLP:journals/mst/Wahlstrom11,DBLP:journals/ipl/Rzazewski14}. However, the question whether there exists an absolute constant $c$, such that for every $H$ the \homo{H} problem can be solved in time $\Ohs(c^n)$, remained open. Finally, it was answered in the negative by Cygan {\em et al.}~\cite{DBLP:journals/jacm/CyganFGKMPS17}, who proved that the $\Ohs(|H|^n)$ algorithm is essentially optimal, assuming the ETH.

Using a standard dynamic programming approach, \homo{H} can be solved in time $\Ohs(|H|^t)$, if an input graph is given along with its tree decomposition of width $t$~\cite{bodlaender2013fine,cygan2015parameterized}.
\cref{thm:LMS} asserts that this algorithm is optimal if $H$ is a complete graph with at least 3 vertices, unless the SETH fails. A natural extension of this result would be to provide analogous tight bounds for non-complete targets $H$.

Egri, Marx, and Rzążewski~\cite{DBLP:conf/stacs/EgriMR18} considered this problem in the setting of \emph{list homomorphisms}. Let $H$ be a fixed graph. The input of the \lhomo{H} problem consists of a graph $G$, whose every vertex is equipped with a \emph{list} of vertices of the target $H$. We ask if $G$ has a homomorphism to $H$, respecting the lists.
Egri \emph{et al.} provided a full complexity classification for the case if $H$ is reflexive, i.e., every vertex has a loop. It is perhaps worth mentioning that a P / NP-complete dichotomy for \lhomo{H} was first proved for reflexive graphs as well: If $H$ is a reflexive graph, then the \lhomo{H} problem is polynomial time-solvable if $H$ is an interval graph, and NP-complete otherwise~\cite{FEDER1998236}. Egri {\em et al.} defined a new graph invariant $i^*(H)$, based on incomparable sets of vertices, and a new graph decomposition, and proved the following.

\begin{theorem}[Egri, Marx, Rzążewski~\cite{DBLP:conf/stacs/EgriMR18}] \label{thm:lhomo}
Let $H$ be a fixed non-interval reflexive graph with $i^*(H) = k$. Let $t$ be the treewidth of an instance graph $G$.
\begin{compactenum}[(a)]
\item Assuming a tree decomposition of $G$ of width $t$ is given, the \lhomo{H} problem can be solved in time~$\Ohs(k^t)$.
\item There is no algorithm solving the \lhomo{H} problem in time $\Oh^*\left((k-\epsilon)^{t}\right)$ for any $\epsilon >0$, unless the SETH fails.
\end{compactenum}
\end{theorem}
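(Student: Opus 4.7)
My plan is to handle the two parts in sequence, with the algorithm suggesting the structure of the matching lower bound.

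For part (a), the plan is to run a refined dynamic programming along a tree decomposition of $G$. The naive DP assigns every vertex in a bag one of $|V(H)|$ colors, producing $|V(H)|^{t+1}$ states. The improvement must come from collapsing ``redundant'' states using the structure of $H$. Since $H$ is reflexive, the natural quasi-order on $V(H)$ is neighborhood containment: $u \preceq v$ iff $N_H[u] \subseteq N_H[v]$. If a partial $H$-coloring $\phi$ of a bag $X$ is pointwise replaced by $\phi'$ with $\phi(x) \succeq \phi(x)$ for all $x \in X$ (and all remaining lists allow the substitution), then $\phi'$ extends whenever $\phi$ does. The plan is to define, for each bag and for each list-assignment along the bag, a canonical representative family of partial colorings so that in every state only \emph{pairwise incomparable} assignments at each vertex appear. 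The invariant $i^*(H)$ would then measure the largest antichain arising in any bag during this computation, bounding the number of stored states per vertex by $k$ and hence the total number by $k^{t+1}$. The new decomposition of $H$ promised by Egri, Marx, and Rz\k{a}\.zewski is presumably what lets one organize lists bag-by-bag so this bound is uniform; I would use it to show that merging tables across join nodes of the tree decomposition does not blow up the antichain size.

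For part (b), the plan is to reduce from $k$-\textsc{Coloring} on a graph $G'$ of treewidth $t'$, which by \cref{thm:LMS} cannot be solved in $(k-\eps)^{t'} \cdot n^{\Oh(1)}$ under SETH. The key is a vertex gadget and an edge gadget in $H$: I would fix an antichain $\{h_1,\ldots,h_k\} \subseteq V(H)$ witnessing $i^*(H)=k$, and for each vertex $v \in V(G')$ introduce a constant-size gadget whose lists force exactly one of the $h_i$ to appear at a designated portal, so there are precisely $k$ ``colors'' per portal. For each edge of $G'$ I would design an edge gadget on constantly many new vertices whose lists enforce that the two portals receive \emph{different} $h_i$'s; the incomparability of the $h_i$'s, together with the ``non-interval'' hypothesis (which, by the Feder--Hell--Huang dichotomy, furnishes an obstruction like an asteroidal triple or induced cycle in the bipartite double cover), is exactly what makes such a propagating inequality gadget realizable. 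Finally I would bound the treewidth of the resulting instance $G$ by $t'+\Oh(1)$ by attaching gadgets inside a constant number of extra bag vertices, so a putative $(k-\eps)^{\tw{G}}$-algorithm for \lhomo{H} would yield a $(k-\eps)^{t'}$-algorithm for $k$-\textsc{Coloring}, contradicting SETH via \cref{thm:LMS}.

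The main obstacle, I expect, is the gadget construction in the lower bound. Getting $k$ ``colors'' per portal is easy once $k$ incomparable list-vertices are chosen, but realizing a \emph{faithful} inequality constraint by a \emph{low-pathwidth} gadget uses the non-interval structure in a subtle way. One must show that a non-interval reflexive graph always admits such an edge gadget whose capacity is exactly $\{(h_i,h_j) \sth i\ne j\}$, and that the gadget's internal treewidth is $\Oh(1)$ rather than growing with $|V(H)|$. A secondary obstacle in part~(a) is proving that the antichain bound is preserved through join nodes: two partial colorings of $X$ which are individually minimal in their sub-tables may combine into a non-minimal one, so one has to argue that the combined table can always be re-reduced to an antichain of size $\le k^{|X|}$ without losing solutions, which is where the invariant $i^*(H)$ and the new decomposition of $H$ must be used in concert.
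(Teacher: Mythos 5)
First, a point of order: the paper contains no proof of \cref{thm:lhomo} at all --- it is quoted verbatim from Egri, Marx, and Rz\k{a}\.zewski~\cite{DBLP:conf/stacs/EgriMR18} as background, and the invariant $i^*(H)$ and the accompanying decomposition of $H$ are never defined here. So there is no in-paper proof to compare against, and your sketch has to be judged on its own terms as an attempted reconstruction of the cited result.

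Judged that way, it has genuine gaps rather than just missing details. The central one is that you never pin down $i^*(H)$ and in effect identify it with the maximum size of an incomparable set (an antichain in the closed-neighborhood containment order); the whole point of the cited result is that this naive antichain bound is \emph{not} the right invariant, which is why a new decomposition of $H$ is needed. Both halves of your plan inherit this problem: in (a), the claim that ``only pairwise incomparable assignments need be stored, hence $k$ states per vertex'' is circular without the definition of $i^*$, and the domination step itself is unsound as stated --- if $N_H[\phi(x)]\subseteq N_H[\phi'(x)]$ then $\phi'$ is at least as extendable \emph{forward} into the not-yet-processed part of $G$, but it need not be realizable on the part of $G$ already forgotten below the bag, so one cannot simply discard $\phi$; handling this (and the join-node blow-up you correctly worry about) is exactly where the decomposition of $H$ must enter, and you leave that entirely open. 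In (b), a reduction from $k$-\textsc{Coloring} via \cref{thm:LMS} requires an inequality gadget whose ``capacity'' is exactly the off-diagonal pairs of a $k$-element incomparable set; proving that every non-interval reflexive $H$ admits such a gadget \emph{for a set of size $i^*(H)$} is the entire difficulty, and asserting that the non-interval obstruction ``furnishes'' it is not an argument. (Also, for reflexive targets the relevant dichotomy is Feder--Hell~\cite{FEDER1998236}, interval versus non-interval, not the Feder--Hell--Huang bipartite-double-cover characterization; and the lower bound in~\cite{DBLP:conf/stacs/EgriMR18} is a direct SETH-style reduction in the spirit of Lokshtanov--Marx--Saurabh rather than a black-box reduction from $k$-\textsc{Coloring}, precisely because realizing $i^*(H)$ states per vertex requires the decomposition structure.) In short: the skeleton is plausible, but every step that would make the bound come out to $i^*(H)$ on both sides is missing.
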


In this paper we are interested in showing tight complexity bounds for the complexity of the non-list variant of the problem.
Let us point out that despite the obvious similarity of \homo{H} and \lhomo{H} problems, they behave very differently when it comes to showing hardness results. Note that if $H'$ is an induced subgraph of $H$, then any instance of \lhomo{H'} is also an instance of \lhomo{H}, where the vertices of $V(H) \setminus V(H')$ do not appear in any list.
Thus in order to prove hardness of \lhomo{H}, it is sufficient to find a ``hard part'' $H'$ of $H$, and  perform a reduction for the \lhomo{H'} problem.
The complexity dichotomy for \lhomo{H} was proven exactly along these lines~\cite{FEDER1998236,DBLP:journals/combinatorica/FederHH99,DBLP:journals/jgt/FederHH03}. Also the proof of \cref{thm:lhomo} (b) heavily uses the fact that we can work with some local subgraphs of $H$ and ignore the rest of vertices. In particular, all these proofs are purely combinatorial.

On the other hand, in the \homo{H} problem, we need to capture the structure of the whole graph $H$, which is difficult using only combinatorial tools. This is why typical tools used in this area come from abstract algebra and algebraic graph theory.

For more information about graph homomorphisms we refer the reader to the comprehensive monograph by Hell and Ne\v{s}et\v{r}il~\cite{hell2004graphs}.

\paragraph{Our contribution} It is well known that in the study of graph homomorphisms the crucial role is played by the graphs that are \emph{cores}, i.e., they do not have a homomorphism to any of its proper subgraphs. In particular, in order to provide a complete complexity classification of \homo{H}, it is sufficient to consider the case that $H$ is a {connected} core (we explain this in more detail in \cref{sec:algo}). Also, the complexity dichotomy by Hell and Ne\v{s}et\v{r}il~\cite{DBLP:journals/jct/HellN90} implies that \homo{H} is polynomial-time solvable if $H$ is a graph on at most two vertices. So from now on let us assume that $H$ is a fixed core which is non-trivial, i.e., has at least three vertices.

We split the analysis into two cases, depending on the structure of $H$. First, in \cref{sec:proj}, we consider targets $H$ that are \emph{projective} (the definition of this class is rather technical, so we postpone it to \cref{sec:pre-prod}).
We show that for projective cores the straightforward dynamic programming on a tree decomposition is optimal, assuming the SETH.

\begin{restatable}{theorem}{projective}\label{thm:projective}
Let $H$ be a non-trivial projective core on $k$ vertices, and let $n$ and $t$ be, respectively, the number of vertices and the treewidth of an instance graph $G$.
\begin{compactenum}[(a)]
\item Even if $H$ is given as a input, the \homo{H} problem can be solved in time $\Oh(k^4+k^{t+1} \cdot n)$, assuming a tree decomposition of $G$ of width $t$ is given.
\item Even if $H$ is fixed, there is no algorithm solving the \homo{H} problem in time $\Oh^*\left((k-\epsilon)^{t}\right)$ for any $\epsilon >0$, unless the SETH fails.
\end{compactenum}
\end{restatable}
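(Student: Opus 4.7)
The plan is to prove (a) by the standard Hom-DP on a tree decomposition of $G$, and (b) by a fine-grained reduction from \coloring{k} that exploits the algebraic rigidity of homomorphisms to a projective core.

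For (a), I would run the standard dynamic programming on a nice tree decomposition of $G$. For each bag $B_x$ of size at most $t+1$ and each map $\phi\colon B_x\to V(H)$, I store one bit recording whether $\phi$ extends to a homomorphism of the subgraph processed so far. Introduce, forget and join nodes can each be handled in time $O(k^{t+1})$ after a one-time preprocessing of the adjacency information of $H$ and a few auxiliary lookup tables of total size $O(k^4)$ (e.g., common-neighbour structure used to speed up join and introduce transitions). Summing over the $O(n)$ nodes of the decomposition gives the claimed bound $O(k^4+k^{t+1}\cdot n)$.

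For (b), the plan is to reduce from \coloring{k} on a graph $G_0$ of treewidth $t$ supplied with its tree decomposition $(T,\{B_x\})$; by \cref{thm:LMS}, \coloring{k} cannot be solved in time $(k-\epsilon)^t\cdot n^{O(1)}$ under the SETH. The goal is to build a graph $G'$ with $\tw{G'}\le t+O(1)$ such that $G_0$ is $k$-colorable if and only if $G'\to H$. The construction proceeds bag-by-bag: for each bag $B_x$ I attach a \emph{vertex-gadget} whose $H$-colorings are constrained, via the projectivity of $H$, to behave like an assignment $B_x\to\{1,\dots,k\}=V(H)$; for each edge $uv$ of $G_0$ I attach a small \emph{difference-gadget} between the two relevant ``slots'' that admits an $H$-coloring only when the canonical colors assigned to $u$ and $v$ differ. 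Since each gadget touches only one bag (plus, in the case of the difference-gadgets, the constant-size interface with the parent bag), a tree decomposition of $G'$ of width $t+O(1)$ can be read off from $(T,\{B_x\})$, and a solution to \homo{H} on $G'$ can be decoded into a $k$-coloring of $G_0$, and vice versa.

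The algebraic heart of the argument is the classical rigidity theorem for projective cores: every homomorphism from a finite power $H^n$ to $H$ is a coordinate projection composed with an automorphism of $H$. This is exactly what forces the vertex-gadget to offer only $k$ ``canonical'' colorings per bag vertex, making the encoding well-defined. I expect the main technical obstacle to be the explicit construction of the vertex- and difference-gadgets that work uniformly for an \emph{arbitrary} projective core $H$ (not merely for $K_k$), together with the translation of the algebraic rigidity statement about $H^n\to H$ into a purely combinatorial property of the bag-gadget that is preserved under gluing along the tree decomposition. Once these gadgets are in place, a hypothetical $(k-\epsilon)^{\tw{G'}}\cdot n^{O(1)}$ algorithm for \homo{H} on $G'$ would yield a $(k-\epsilon')^t\cdot n^{O(1)}$ algorithm for \coloring{k} on $G_0$ for some $\epsilon'>0$, contradicting \cref{thm:LMS} and hence the SETH.
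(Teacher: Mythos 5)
Your part (a) coincides with the paper's, which simply invokes the standard dynamic programming of Bodlaender et al.\ (\cref{thm:algo}). Your part (b) has the right skeleton --- reduce from \coloring{k} via \cref{thm:LMS}, exploiting the fact that for a projective core every homomorphism $H^m\to H$ is an automorphism composed with a coordinate projection --- but the step you yourself flag as ``the main technical obstacle,'' namely the explicit gadget construction, is precisely the entire technical content of the proof, and you do not supply it. Without a concrete graph $F$ with the two required properties the reduction does not exist. For the record, the paper's construction is short: writing $V(H)=\{z_1,\dots,z_k\}$, take $F:=H^{k(k-1)}$, let $u^*$ be the $k(k-1)$-tuple obtained by concatenating, for $i=1,\dots,k$, the constant $(k-1)$-tuple $(z_i,\dots,z_i)$, and let $v^*$ be obtained by concatenating the tuples $(z_1,\dots,z_{i-1},z_{i+1},\dots,z_k)$. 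Then for every ordered pair of \emph{distinct} vertices $(x,y)$ some coordinate $j$ satisfies $\pi_j(u^*)=x$ and $\pi_j(v^*)=y$, which gives the completeness property via the projection $\pi_j$; and since every $f\colon F\to H$ equals $g\circ\pi_j$ for an automorphism $g$ (\cref{obs:composition}) while $\pi_j(u^*)\neq\pi_j(v^*)$ for \emph{every} $j$, injectivity of $g$ gives $f(u^*)\neq f(v^*)$. Note that your ``difference-gadget'' must satisfy both directions: not only must it forbid equal images, it must \emph{admit} a homomorphism realizing every pair of distinct images --- this is the harder half to guarantee for an arbitrary projective core, and it is exactly what the coordinate structure of $u^*$ and $v^*$ above is designed for.

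Your reduction architecture is also more complicated than necessary. No per-bag ``vertex-gadget'' is needed and no bag-by-bag bookkeeping is required: the paper keeps one copy of $V(G_0)$ and replaces each edge $xy$ by a fresh copy of $F$ with $u^*,v^*$ identified with $x,y$. An $H$-coloring of the resulting graph then restricts to a proper $k$-coloring of $G_0$ and conversely extends gadget by gadget. Since $|F|$ is a constant depending only on $H$, inserting one extra bag per edge (the bag of that edge together with $V(F_{xy})$) turns the given decomposition of width $t$ into one of width $t+|F|$, and the lower bound follows from \cref{thm:LMS} (the paper in fact uses the pathwidth version, \cref{thm:LMS-pw}, to get a slightly stronger statement).
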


The proof brings together some tools and ideas from algebra and fine-grained complexity theory. The main technical ingredient is the construction of a so-called \emph{edge gadget}, i.e., a graph $F$ with two specified vertices $u^*$ and $v^*$, such that:
\begin{compactenum}[(a)]
\item for any distinct vertices $x,y$ of $H$, there is a homomorphism from $F$ to $H$, which maps $u^*$ to $x$ and $v^*$ to $y$, and
\item in any homomorphism from $F$ to $H$, the vertices $u^*$ and $v^*$ are mapped to distinct vertices of $H$.
\end{compactenum}

Using this gadget, we can perform a simple and elegant reduction from \coloring{k}. If $G$ is an instance of \coloring{k}, we construct an instance $G^*$ of \homo{H} by taking a copy of $G$ and replacing each edge $xy$ with a copy of the edge gadget, whose $u^*$-vertex is identified with $x$, and $v^*$-vertex is identified with $y$.
By the properties of the edge gadget it is straightforward to observe that $G^*$ is $H$-colorable if and only if $G$ is $k$-colorable. Since the size of $F$ depends only on $H$, we observe that the treewidth of $G^*$ differs from the treewidth of $G$ by an additive constant, which is sufficient to obtain the desired lower bound.

Although the statement of \cref{thm:projective} might seem quite specific, it actually covers a large class of graphs.
We say that a property $P$ holds for \emph{almost all graphs}, if the probability that a graph chosen at random from the family of all graphs with vertex set $\{1,2,\ldots,n\}$ satisfies $P$ tends to 1 as $n \to \infty$.
Hell and Ne\v{s}et\v{r}il observed that almost all graphs are cores~\cite{DBLP:journals/dm/HellN92}, see also \cite[Corollary 3.28]{hell2004graphs}. Moreover, Łuczak and Ne\v{s}et\v{r}il proved that almost all graphs are projective~\cite{luczak2004note}. From these two results, we can  obtain that almost all graphs are projective cores. This, combined by \cref{thm:projective}, implies the following.

\begin{corollary}
For almost all graphs $H$, the \homo{H} problem on instance graphs with treewidth~$t$ cannot be solved in time $\Oh^*\left((|H|-\epsilon)^{t}\right)$ for any $\epsilon >0$, unless the SETH fails.
\end{corollary}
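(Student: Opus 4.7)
The plan is to combine the two cited probabilistic results with \cref{thm:projective}(b) via a straightforward union-bound argument. Let $\mathcal{G}_n$ denote the set of all graphs on vertex set $\{1,\dots,n\}$, equipped with the uniform distribution, and for a property $P$ write $\Pr_n[P]$ for the probability that a graph drawn from $\mathcal{G}_n$ satisfies $P$. Let $P_{\mathrm{core}}$ be the property ``$H$ is a core'' and $P_{\mathrm{proj}}$ the property ``$H$ is projective''.

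First, I would invoke the two structural results: by Hell and Ne\v{s}et\v{r}il~\cite{DBLP:journals/dm/HellN92}, $\Pr_n[P_{\mathrm{core}}] \to 1$ as $n \to \infty$, and by {\L}uczak and Ne\v{s}et\v{r}il~\cite{luczak2004note}, $\Pr_n[P_{\mathrm{proj}}] \to 1$ as $n \to \infty$. A union bound on the complements yields
\[
\Pr_n[\,P_{\mathrm{core}} \wedge P_{\mathrm{proj}}\,] \;\geq\; 1 - \Pr_n[\neg P_{\mathrm{core}}] - \Pr_n[\neg P_{\mathrm{proj}}] \;\longrightarrow\; 1,
\]
so almost all graphs $H$ are simultaneously projective and cores. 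The ``non-trivial'' hypothesis of \cref{thm:projective} (namely $|V(H)|\geq 3$) is automatically satisfied for all sufficiently large $n$, so it contributes nothing to the failure probability.

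Next, for any such $H$, \cref{thm:projective}(b) applies: under the SETH, there is no algorithm that solves \homo{H} on instance graphs of treewidth $t$ in time $\Oh^*\left((|H|-\epsilon)^{t}\right)$ for any fixed $\epsilon > 0$. Since the class of graphs $H$ for which this conclusion holds contains the intersection of the two properties above, and that intersection has asymptotic density $1$, the corollary follows.

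The only mildly delicate point is that \cref{thm:projective}(b) is stated for a \emph{fixed} target $H$, and we are quantifying over a family of targets of growing size. This is not an obstacle: the statement ``for almost all $H$ the conclusion holds'' is exactly a statement about the set of $H$ for which a fixed-$H$ lower bound is true, so no uniformity in $H$ is required and the argument is immediate.
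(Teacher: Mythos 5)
Your proposal is correct and matches the paper's own (implicit) argument: the paper likewise combines the Hell--Ne\v{s}et\v{r}il and \L{}uczak--Ne\v{s}et\v{r}il results to conclude that almost all graphs are projective cores, and then applies \cref{thm:projective}~(b). The union-bound step and the remark about fixed $H$ are exactly the routine details the paper leaves unstated.
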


In \cref{sec:nonproj} we consider the case that $H$ is a non-projective core. First, we show that the approach that we used for projective cores cannot work in this case: it appears that one can construct the edge gadget for a core $H$ with the properties listed above if and only if $H$ is projective.
What makes studying non-projective cores difficult is that we do not understand their structure well.
In particular, we know that a graph $H = H_1 \times H_2$, where $H_1$ and $H_2$ are non-trivial and $\times$ denotes the \emph{direct product} of graphs (see \cref{sec:pre-prod} for a formal definition), is non-projective, and by choosing $H_1$ and $H_2$ appropriately, we can ensure that $H$ is a core.
However, we do not know whether there are any non-projective non-trivial connected cores that are \emph{indecomposable}, i.e., they cannot be constructed using direct products.
This problem was studied in a slightly more general setting by Larose and Tardif~\cite[Problem 2]{larose2001strongly}, and it remains wide open.
We restate it here, only for restricted case that $H$ is a core, which is sufficient for our purpose.

\begin{restatable}{conjecture}{conprojprime}
\label{con:proj-iff-prime}
Let $H$ be a connected non-trivial core. Then $H$ is projective if and only if it is indecomposable.
\end{restatable}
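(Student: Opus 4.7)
The plan is to prove the two implications separately, with the forward direction being essentially routine and the reverse direction constituting the substantive content. For the forward direction, suppose $H \cong H_1 \times H_2$ with both $H_i$ non-trivial. Define the binary map $f \colon H \times H \to H$ by $f((a_1, b_1), (a_2, b_2)) = (a_1, b_2)$. A direct check confirms that $f$ preserves the coordinate-wise adjacency of $H \times H$ and is idempotent on the diagonal, yet genuinely depends on both arguments whenever $H_1$ and $H_2$ are non-trivial. Thus $H$ admits an idempotent polymorphism that is not a projection, so $H$ is non-projective.

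For the converse, suppose $H$ is a connected non-trivial core that is non-projective, and let $f \colon H^n \to H$ be an idempotent polymorphism of minimum arity $n \geq 2$ that is not a projection. Because $H$ is a core, every endomorphism is an automorphism; in particular, each one-variable slice of $f$, obtained by fixing $n - 1$ coordinates to specific vertices of $H$, is an automorphism of $H$. The strategy is to extract from this rigid slice structure two non-trivial edge-compatible equivalence relations $\theta_1, \theta_2$ on $V(H)$ satisfying $\theta_1 \cap \theta_2 = \mathrm{id}$ and $\theta_1 \circ \theta_2 = V(H) \times V(H)$; such a pair is equivalent to a product decomposition $H \cong (H/\theta_1) \times (H/\theta_2)$, matching the universal-algebra characterisation of direct factorisations via pairs of complementary congruences.

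The main obstacle, and the reason the conjecture has remained open since Larose and Tardif raised it, is the gap between the a priori weak hypothesis (existence of one essential idempotent polymorphism) and the strong structural conclusion (a global direct-product decomposition of $H$). Polymorphisms can mix coordinates in ways that never cleanly factor through independent quotients, so extracting $\theta_1$ and $\theta_2$ directly from $f$ is delicate. Plausible lines of attack include invoking Bulatov-style absorption theory from the algebraic CSP dichotomy programme to locate non-trivial absorbing subgraphs of $H^n$; exploiting the Imrich--Klav\v{z}ar unique prime factorisation theorem for the categorical product, combined with endomorphism rigidity of cores, to argue that non-projectivity forces a non-trivial prime factor; or analysing the spectrum of the adjacency matrix, since $A_{H^n} = A_H^{\otimes n}$ and a tensor factorisation of $A_H$ would be reflected in the polymorphism lattice. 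Any successful approach must ultimately produce an honest product decomposition, which is precisely what has resisted the combined efforts of algebraic graph theorists and universal algebraists for two decades.
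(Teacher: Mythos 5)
This statement is labelled as a \emph{conjecture} in the paper, not a theorem: it is a restriction to cores of an open problem of Larose and Tardif, and the paper never proves it --- it is only ever \emph{assumed} (together with \cref{con:strongly}) as a hypothesis of \cref{thm:everything}. Your forward direction is correct and coincides exactly with the paper's own remark in \cref{sec:nonproj}: if $H = H_1 \times H_2$ with both factors non-trivial, the map $f\bigl((x,y),(x',y')\bigr) = (x,y')$ is an idempotent homomorphism $H^2 \to H$ that is not a projection, so $H$ is not projective. That much is routine and uncontested.

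The converse direction is where the genuine gap lies, and you have in effect conceded it yourself: what you offer is a list of ``plausible lines of attack,'' not an argument. Concretely, the step that is missing is the passage from ``$H$ admits an essential idempotent polymorphism'' to ``$H$ decomposes as a direct product.'' Your proposed route via a pair of complementary edge-compatible equivalence relations $\theta_1, \theta_2$ is not as clean as the factor-congruence picture from universal algebra suggests: for the direct (categorical) product of graphs, adjacency in $H_1 \times H_2$ is the \emph{conjunction} of adjacency in the factors, so an equivalence relation being the kernel of a projection requires recovering the edge set of $H$ as exactly this conjunction of the quotient adjacencies --- a condition that does not follow from edge-compatibility alone, and which no one has been able to extract from a single essential polymorphism. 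McKenzie's unique prime factorization theorem (\cref{thm:unique-fact}) governs the structure of decomposable graphs but says nothing about which graphs are decomposable, so it cannot bridge this gap either. In short: the easy implication is fine, the hard implication is an open problem, and no proof of the statement exists in the paper or in your proposal.
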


Since we do not know any counterexample to \cref{con:proj-iff-prime}, in the remainder we consider cores $H$ that are built using the direct product.
If $H= H_1 \times \ldots \times H_m$ and each $H_i$ is non-trivial and indecomposable, we call $H_1 \times \ldots \times H_m$ a \emph{prime factorization} of $H$. For such $H$ we show a lower complexity bound for \homo{H}, under an additional assumption that one of the factors $H_i$ of $H$ is \emph{truly projective}.

The definition of truly projective graphs is rather technical and we present it in \cref{sec:nonproj}. 
Graphs with such a property (actually, a slightly more restrictive one) were studied by Larose~\cite[Problems 1b. and 1b'.]{Larose2002FamiliesOS} in connection with some problems related to unique colorings, considered by Greenwell and Lov{\'a}sz~\cite{greenwell1974applications}. Larose~\cite{Larose2002FamiliesOS,larose2002strongly} defined and investigated even more restricted class of graphs, called \emph{strongly projective} (see \cref{sec:conclusion} for the definition). We know that every strongly projective graph is truly projective, and every truly projective graph is projective.
Larose~\cite{Larose2002FamiliesOS,larose2002strongly} proved that all known projective graphs are in fact strongly projective. This raises a natural question whether projectivity and strong projectivity are in fact equivalent~\cite{Larose2002FamiliesOS,larose2002strongly}. Of course, an affirmative answer to this question would in particular mean that all projective cores are truly projective. Again, we state the problem in this weaker form, which is sufficient for our application.

\begin{restatable}{conjecture}{conjstrongly}\label{con:strongly}
Every projective core is truly projective.
\end{restatable}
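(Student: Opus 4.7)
The plan is to try to leverage the core assumption on $H$ to upgrade the projectivity condition to the stronger condition of true projectivity. Projectivity of $H$ constrains the behavior of certain idempotent polymorphisms $H^n \to H$, forcing each to be a projection onto one coordinate; true projectivity (as defined in the forthcoming \cref{sec:nonproj}) asks for the analogous conclusion under a weaker hypothesis on the polymorphism. My approach would be to take an arbitrary polymorphism $f$ witnessing the failure of true projectivity and manufacture from it an idempotent polymorphism which still fails to be a projection, thereby contradicting projectivity of $H$.

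The construction would go through the core structure of $H$. Since $H$ is a core, every endomorphism of $H$ is in fact an automorphism. Given a polymorphism $f \colon H^n \to H$, the diagonal $d(x) := f(x, \ldots, x)$ is an endomorphism of $H$, hence an automorphism; composing with $d^{-1}$ yields a new polymorphism $d^{-1} \circ f$ whose diagonal is the identity, i.e., which is idempotent. The next step would be to verify that this normalization preserves whatever combinatorial pattern witnesses ``not a projection,'' after which the assumed projectivity of $H$ supplies the contradiction. To extend this further, one could try to close off the $n$-th power $H^n$ under the coordinate-wise action of $\operatorname{Aut}(H)$ and argue that the orbit of $f$ contains a genuinely non-projection idempotent.

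The main obstacle, and the reason this statement is posed only as a conjecture attributed to Larose, is that true projectivity is genuinely stronger, so the idempotent normalization above can fail to preserve the witnessing pattern: the condition defining truly projective polymorphisms admits more general behavior on the diagonal than idempotency permits, and composing with $d^{-1}$ may collapse precisely the structure that was problematic. Making the reduction work would likely require showing that, for cores, the polymorphism clone of $H$ is generated in a suitable sense by $\operatorname{Aut}(H)$ together with projections, which is itself a deep clone-theoretic statement. An alternative and more ambitious route is to classify all non-trivial projective cores directly and verify case-by-case, via Larose's work~\cite{Larose2002FamiliesOS,larose2002strongly}, that each is in fact strongly projective and therefore truly projective. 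However, no such classification is currently known, which is precisely why the statement remains open and must be used here as an assumption rather than a theorem.
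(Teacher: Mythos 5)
The statement you were asked about is \cref{con:strongly}, which the paper explicitly poses as an open conjecture (a weakening of a question of Larose) and never proves; it is used only as a hypothesis, e.g.\ in \cref{thm:everything}. You correctly recognized this, so there is no gap in the usual sense: declining to prove an open conjecture is the right outcome, and your conclusion that the statement ``must be used here as an assumption rather than a theorem'' matches the paper exactly. One correction to your diagnosis of the obstruction, though: the diagonal-normalization trick you describe is precisely how the paper already handles homomorphisms $f\colon H^m\to H$ for projective cores (\cref{obs:composition}), and the difficulty with true projectivity is not that this normalization ``collapses the witnessing pattern.'' Rather, the definition of truly projective in \cref{sec:nonproj} concerns homomorphisms $f\colon H^s\times W\to H$ where $W$ is a connected core \emph{incomparable} with $H$; the hypothesis $f(x,\dots,x,y)=x$ is already an idempotency-style normalization, and the set $\{(x,\dots,x,y)\}$ induces a copy of $H\times W$ rather than of $H$, so there is no diagonal automorphism of $H$ to invert. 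Since projectivity (via \cref{thm:proj-char}) only constrains homomorphisms whose domain is a power of $H$, and $W$ admits no homomorphism to or from $H$, there is no evident way to transport a hypothetical non-projection witness on $H^s\times W$ into the setting that projectivity governs --- that extra, homomorphically unrelated factor $W$ in the domain is the actual reason the implication is open.
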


Actually, if we assume both \cref{con:proj-iff-prime} and \cref{con:strongly}, we are able to provide a full complexity classification for the \homo{H} problem, parameterized by the treewidth of the input graph.

\begin{restatable}{theorem}{thmeverything}\label{thm:everything}
Assume that \cref{con:proj-iff-prime} and \cref{con:strongly} hold. 
Let $H$ be a non-trivial connected core with prime factorization $H_1 \times \ldots \times H_m$, and define $k:= \max_{i \in [m]}|H_i|$. Let $n$ and $t$ be, respectively, the number of vertices and the treewidth of an instance graph $G$. \begin{compactenum}[(a)]
\item Even if $H$ is given as an input, the \homo{H} problem can be solved in time $\Oh(|H|^4+k^{t+1}\cdot n)$, assuming a tree decomposition of $G$ of width $t$ is given.
\item Even if $H$ is fixed, there is no algorithm solving the \homo{H} problem in time $\Oh^*\left((k-\epsilon)^t\right)$ for any $\epsilon >0$, unless the SETH fails.
\end{compactenum}
\end{restatable}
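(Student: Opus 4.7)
For part (a), I would exploit the classical fact that homomorphisms into a direct product decompose coordinate-wise: for any graph $G$, there is a bijection $\mathrm{Hom}(G,H) \cong \prod_{i=1}^{m} \mathrm{Hom}(G,H_i)$ given in one direction by composing with the projections $\pi_i : H \to H_i$ and in the other by bundling $f_i : G \to H_i$ into $v \mapsto (f_1(v),\ldots,f_m(v))$. In particular, $G \to H$ iff $G \to H_i$ for every $i$. The algorithm is then to first compute the prime factorization of $H$ in time $\Oh(|H|^4)$ via a standard direct-product factorization algorithm, and then independently run the textbook tree-decomposition DP for each \homo{H_i} in time $\Oh(|H_i|^{t+1} \cdot n) \le \Oh(k^{t+1} \cdot n)$, accepting iff every call accepts.

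For part (b), pick an index $\ell$ with $|H_\ell|=k$. A standard retraction argument shows that any factor of a core is itself a non-trivial core, so \cref{con:proj-iff-prime} gives that $H_\ell$ is projective and \cref{con:strongly} upgrades this to truly projective. The technical heart of the proof, carried out using the tools of \cref{sec:nonproj}, is the construction of a product-aware edge gadget: a graph $F$ of constant size (depending only on $H$) with distinguished vertices $u^*, v^*$ such that
\begin{compactenum}[(i)]
\item every homomorphism $h : F \to H$ satisfies $\pi_\ell(h(u^*)) \ne \pi_\ell(h(v^*))$, and
\item for every $X,Y \in V(H)$ with $\pi_\ell(X) \ne \pi_\ell(Y)$, some homomorphism $F \to H$ sends $u^* \mapsto X$ and $v^* \mapsto Y$.
\end{compactenum}
Given $F$, the reduction mirrors the proof of \cref{thm:projective}: from an instance $G$ of \coloring{k} (with colors identified with $V(H_\ell)$), form $G^*$ by replacing each edge $xy \in E(G)$ with a fresh copy of $F$ glued so that $u^* = x$ and $v^* = y$. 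Property (i) implies that $\pi_\ell \circ h$ restricted to $V(G)$ is a proper $k$-coloring of $G$ for any $h : G^* \to H$; conversely, any $k$-coloring $c$ of $G$ lifts to a homomorphism $G^* \to H$ by first setting $h(x) := (c(x), z_2, \ldots, z_m)$ for fixed $z_i \in V(H_i)$ and then invoking (ii) inside each gadget. Since $|F|$ depends only on $H$, we have $\tw{G^*} \le \tw{G} + \Oh(1)$, so an $\Oh^*((k-\epsilon)^t)$ algorithm for \homo{H} would violate \cref{thm:LMS}.

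The only substantive obstacle is the construction of $F$. Projectivity of $H_\ell$ alone, as used in \cref{thm:projective}, only produces an edge gadget inside $H_\ell$, and the characterization alluded to at the beginning of \cref{sec:nonproj} shows that no edge gadget in the earlier sense can exist inside the full non-projective $H$; the stronger ``truly projective'' hypothesis must therefore be put to genuine use to obtain the rigidity of (i) in the presence of the other factors while simultaneously preserving the flexibility of (ii) across them. All remaining steps are standard bookkeeping.
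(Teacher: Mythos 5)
Your part (a) is exactly the paper's argument (\cref{obs:projprop}~(e) plus the $\Oh(|H|^4)$ factorization algorithm and the standard DP on each factor, i.e.\ \cref{thm:non-prime}), and your reduction scheme for part (b) correctly identifies where the two conjectures enter: \cref{con:proj-iff-prime} makes the indecomposable factor $H_\ell$ projective and \cref{con:strongly} upgrades it to truly projective. But the step you defer as ``the only substantive obstacle'' is where essentially all of the paper's work lies, and it is not routine. The definition of truly projective only constrains homomorphisms $H_\ell^s\times W\to H_\ell$ that are \emph{idempotent on the diagonal}, whereas a homomorphism $h\colon F\to H$ from your gadget gives you $\pi_\ell\circ h$ with no such normalization. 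The paper's \cref{lem:nonprojective-gadget} takes $F:=H_\ell^{N}\times R$ (with $R$ the product of the remaining factors, so that $R$ is a connected core incomparable with $H_\ell$, which is needed to even invoke true projectivity), restricts $h$ to the diagonal copy of $H$ inside $F$, composes with an automorphism $\mu$ of $H$ to reduce to the idempotent case, and then needs D\"orfler's theorem (\cref{thm:auth}, via \cref{col:auto}) to know that $\mu$ acts coordinatewise on the prime factors, so that the conclusion ``$(\mu\circ h)_1$ is a projection'' can be pulled back to a statement about $h_1=\pi_\ell\circ h$ itself. None of this is supplied or even hinted at in your write-up, so as it stands property (i) of your gadget is unproved; you should either carry out this construction or explicitly import \cref{lem:nonprojective-gadget}.

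Modulo that gap, your route differs mildly but harmlessly from the paper's. You reduce \coloring{k} directly to \homo{H} with a gadget forcing $\pi_\ell(h(u^*))\neq\pi_\ell(h(v^*))$, whereas the paper composes two reductions: \coloring{k} to \homo{H_\ell} (the projective case, \cref{thm:projective-lower-pw}, using the inequality gadget $H_\ell^{k(k-1)}$) and then \homo{H_\ell} to \homo{H} (\cref{thm:nonprojective-lower-h-pw}, using a gadget that forces $h_1(u^*)h_1(v^*)\in E(H_\ell)$ rather than mere inequality). Your one-step variant would go through with $F:=H_\ell^{k(k-1)}\times R$ and the same normalization argument, and it is arguably slightly more direct; the paper's factored version has the advantage of isolating a reusable statement about \homo{H_\ell} versus \homo{H}. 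Two small cautions: state property (ii) only for pairs $X=(a,\bar z)$, $Y=(b,\bar z)$ agreeing outside coordinate $\ell$ (which is all your lift uses) --- the version quantified over all $X,Y$ with $\pi_\ell(X)\neq\pi_\ell(Y)$ is not obviously achievable, since $R$ is a core and the second component of the gadget homomorphism cannot send the fixed basepoint $w$ to two different vertices of $R$; and note that the case $m=1$ degenerates to \cref{thm:projective} and needs only projectivity, not true projectivity.
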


Let us point out that despite some work on both conjectures~\cite{larose2001strongly,Larose2002FamiliesOS,larose2002strongly}, we know no graph $H$ for which the bounds from \cref{thm:everything} do not hold.

\section{Notation and preliminaries} \label{sec:prelim}
For $n \in \mathbb{N}$, we denote the set $\{1,2,\ldots,n\}$ by~$[n]$.
All graphs considered in this paper are finite, undirected and do not contain parallel edges.
For a graph $G$, by $V(G)$ and $E(G)$ we denote the set of vertices and the set of edges of $G$, respectively, and we write $|G|$ for the number of vertices of $G$.
Let $K_1^*$ be the single-vertex graph with a loop. 
A graph is \emph{ramified} if it has no two distinct vertices $u$ and $v$ such that the open neighborhood of $u$ is contained in the open neighborhood of $v$. An \emph{odd girth} of a graph $G$, denoted by $\og(G)$, is the length of a shortest odd cycle in $G$. For a graph $G$, denote by $\omega(G)$ and $\chi(G)$, respectively, the size of the largest clique contained in $G$ and the chromatic number of $G$.

A \emph{tree decomposition} of a graph $G$ is a pair $\left(\cT, \{X_a\}_{a \in V(\cT)}\right)$, in which $\cT$ is a tree, whose vertices are called \emph{nodes} and $\{X_a\}_{a \in V(\cT)}$ is the family of subsets (called \emph{bags}) of $V(G)$, such that
\begin{compactenum}
\item every $v \in V(G)$ belongs to at least one bag $X_a$,
\item for every $uv \in E(G)$ there is at least one bag $X_a$ such that $u,v \in X_a$,
\item for every $v \in V(G)$ the set $\cT_v:=\{a \in V(\cT)~|~ v \in X_a\}$ induces a connected subgraph of~$\cT$.
\end{compactenum}
The \emph{width} of a tree decomposition $\left(\cT, \{X_a\}_{a \in V(\cT)}\right)$ is the number $\max_{a\in V(\cT)}|X_a|-1$. The minimum possible width of a tree decomposition of $G$ is called the \emph{treewidth} of $G$ and denoted by $\tw{G}$.
In particular, if $\cT$ is a path, then a tree decomposition $\left(\cT, \{X_a\}_{a \in V(\cT)}\right)$ is called a \emph{path decomposition}. The \emph{pathwidth} of $G$, denoted by $\pw{G}$, is the minimum possible width of a path decomposition of $G$. Clearly for every graph $G$ it holds that $\tw{G} \leq \pw{G}$.

\subsection{Graph homomorphisms and cores} \label{sec:pre-homo}
For graphs $G$ and $H$, a function $f: V(G) \to V(H)$ is a \emph{homomorphism} if it preserves edges, i.e., for every $uv \in E(G)$ it holds that $f(u)f(v) \in E(H)$ (see \cref{fig:homo}).
If $G$ admits a homomorphism to $H$, we denote this fact by $G \to H$ and we write $f:G \to H$ if $f$ is a homomorphism from $G$ to $H$. If there is no homomorphism from $G$ to $H$, we write $G \not\to H$. 
Graphs $G$ and $H$ are \emph{homomorphically equivalent} if $G \to H$ and $H \to G$, and \emph{incomparable} if $G \not\to H$ and $H \not\to G$. Observe that homomorphic equivalence is an equivalence relation on the class of all graphs. An \emph{endomorphism} of $G$ is any homomorphism $f \colon G \to G$.

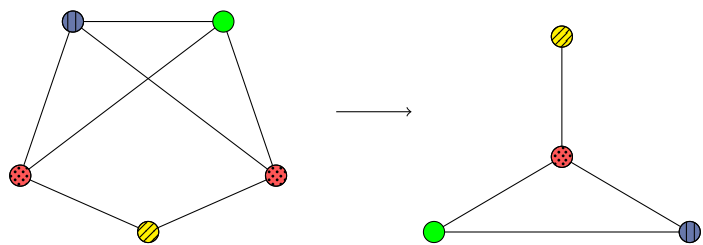
\begin{figure}[ht]
\centering{\begin{tikzpicture}[every node/.style={draw,circle,fill=white,inner sep=0pt,minimum size=8pt},every loop/.style={}]
\draw (2,0) -- (0.3,.75)-- (1,2.8) -- (3, 2.8)  -- (3.7,.75)  -- (2,0);
\draw (0.3,.75) -- (3, 2.8);
\draw (3.7,.75) -- (1, 2.8);
\node[fill=green] at (3,2.8) {};
\draw[->] (4.5,1.6) -- (5.5,1.6);
\draw (5.8,0) -- (9.2,0) -- (7.5,1) -- (5.8,0);
\draw (7.5,1) -- (7.5,2.6);
\node[fill=yellow] at (2,0) {}; \node[pattern=north east lines] at (2,0) {}; 
\node[fill=red!66] at (0.3,0.75) {}; \node[pattern=crosshatch dots] at (0.3,0.75) {}; 
\node[fill=red!66] at (7.5,1) {}; \node[pattern=crosshatch dots] at (7.5,1) {}; 
\node[fill=red!66] at (3.7,0.75) {}; \node[pattern=crosshatch dots] at (3.7,0.75) {}; 
\node[fill=green] at (5.8,0) {}; 
\node[fill=blue!66] at (1,2.8) {}; \node[pattern=vertical lines] at (1,2.8) {}; 
\node[fill=blue!66] at (9.2,0) {}; \node[pattern=vertical lines] at (9.2,0) {}; 
\node[fill=yellow] at (7.5,2.6) {}; \node[pattern=north east lines] at (7.5,2.6) {};
\end{tikzpicture}}
\caption{An example of a homomorphism from $G$ (left) to $H$ (right).  Patterns on the vertices indicate the mapping.}
\label{fig:homo}
\end{figure}

A graph $G$ is a \emph{core} if $G \not\to H$ for every proper subgraph $H$ of $G$.
Equivalently, we can say $G$ is a core if and only if every endomorphism of $G$ is an automorphism (i.e., an isomorphism from $G$ to $G$).
Note that a core is always ramified.
If $H$ is a subgraph of $G$ such that $G \to H$ and $H$ is a core, we say that $H$ is a core of~$G$. Notice that if $H$ is a subgraph of $G$, then it always holds that $H \to G$, so every graph is homomorphically equivalent to its core. Moreover, if $H$ is a core of $G$, then $H$ is always an induced subgraph of $G$, because every endomorphism $f:G \to H$ restricted to $H$ must be an automorphism. It was observed by Hell and Ne\v{s}et\v{r}il that every graph has a unique core (up to an isomorphism)~\cite{DBLP:journals/dm/HellN92}. 
Note that if $f:G \to H$ is a homomorphism from $G$ to its core $H$, then it must be surjective. 

We say that a core is \emph{trivial} if it is isomorphic to $K_1$, $K_1^*$, or $K_2$. It is easy to observe that these three graphs are the only cores with fewer than 3 vertices. In general, finding a core of a given graph is computationally hard; in particular, deciding if a graph is a core is coNP-complete~\cite{DBLP:journals/dm/HellN92}. However, the graphs whose cores are trivial are simple to describe.

\begin{observation}\label{obs:trivial}
Let $G$ be a graph, whose core $H$ is trivial.
\begin{compactenum}[(a)]
\item $H \simeq K_1$ if and only if $\chi(G)=1$, i.e., $G$ has no edges,
\item $H \simeq K_2$ if and only if $\chi(G)=2$, i.e., $G$ is bipartite and has at least one edge,
\item $H \simeq K_1^*$ if and only if $G$ has a vertex with a loop.{\hfill$\square$\smallskip}
\end{compactenum}
\end{observation}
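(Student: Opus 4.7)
The plan is to use two standard facts about cores recorded in the preceding paragraphs: every graph $G$ is homomorphically equivalent to its core $H$ (so in particular $G \to H$), and $H$ appears as an induced subgraph of $G$. Each of the three candidate cores $K_1$, $K_2$, and $K_1^*$ is itself a core, since its only endomorphism is the identity, which is trivially an automorphism. Combined with uniqueness of the core up to isomorphism, proving each equivalence reduces to pinning down the structural condition on $G$ that matches each candidate.

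For part (a), the observation is that $G \to K_1$ holds if and only if $G$ has no edges: an edge of $G$ would have to be sent to a non-existent edge of $K_1$, and conversely the constant map is a valid homomorphism when $G$ is edgeless; so in this case the core is $K_1$. For part (b), if $H \simeq K_2$, then $G \to K_2$ yields $\chi(G) \le 2$, and $\chi(G) \ne 1$ since otherwise part (a) would force the core to be $K_1$. Conversely, if $G$ is bipartite with at least one edge, then $G \to K_2$ exhibits a homomorphism into the core $K_2$, while the presence of an edge together with part (a) excludes $K_1$ from being the core. For part (c), if $H \simeq K_1^*$, then since the core is an induced subgraph, the looped vertex of $K_1^*$ is a looped vertex of $G$; conversely, if $v \in V(G)$ has a loop, the constant map to $v$ is a homomorphism from $G$ into the induced subgraph on $\{v\}$, which is isomorphic to $K_1^*$.

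The only mild subtlety is in each converse direction: I need to use uniqueness of the core to conclude that the exhibited small graph really is \emph{the} core of $G$, rather than merely one of many homomorphic images. This follows at once from the fact that each of $K_1$, $K_2$, $K_1^*$ is itself a core, so $G$ being homomorphically equivalent to it forces it to coincide with the core of $G$ up to isomorphism. No other obstacles arise, as the argument proceeds entirely from the definitions and the uniqueness statement already quoted.
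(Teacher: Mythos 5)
Your argument is correct: the paper states this observation without proof (it is marked as immediate from the definitions), and your elaboration via homomorphic equivalence with the unique core, the fact that $K_1$, $K_2$, $K_1^*$ are themselves cores, and the core being an induced subgraph is exactly the intended routine verification. No gaps worth noting beyond the trivially excluded empty graph.
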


In particular, there are no non-trivial cores with loops. The following conditions are necessary for $G$ to have a homomorphism into $H$.
\begin{observation}[\cite{hell2004graphs}]\label{obs:incompcores}
Assume that $G \to H$ and $G$ and $H$ have no loops. Then $\omega(G) \leq \omega(H)$, $\chi(G) \leq \chi(H)$, and $\og(G)\geq \og(H)$. {\hfill$\square$\smallskip}
\end{observation}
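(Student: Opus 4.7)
The plan is to verify each of the three inequalities separately by transporting a structure across the homomorphism $f\colon G\to H$, using the loop-freeness of $H$ to guarantee that adjacent vertices of $G$ have distinct images.

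For the clique bound, I would pick a clique $K$ in $G$ of size $\omega(G)$. Any two vertices of $K$ are adjacent in $G$, so their $f$-images are adjacent in $H$; since $H$ has no loops, this forces $f|_K$ to be injective. Therefore $f(K)$ is a clique in $H$ of size $\omega(G)$, giving $\omega(H)\ge\omega(G)$.

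For the chromatic number bound, I would use the well-known reformulation that a proper $\chi(H)$-coloring of $H$ is the same thing as a homomorphism $c\colon H\to K_{\chi(H)}$ (this is the identification that lets us view \coloring{k} as $\homo{K_k}$). The composition $c\circ f\colon G\to K_{\chi(H)}$ is then itself a homomorphism, which is exactly a proper $\chi(H)$-coloring of $G$, so $\chi(G)\le\chi(H)$.

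For the odd girth bound, I would take a shortest odd cycle $C=v_0v_1\cdots v_{2\ell}v_0$ in $G$, so $2\ell+1=\og(G)$. Applying $f$ vertex by vertex yields a closed walk $f(v_0)f(v_1)\cdots f(v_{2\ell})f(v_0)$ in $H$ of odd length $2\ell+1$ (here loop-freeness of $H$ again ensures consecutive vertices differ, though this is not strictly needed for what follows). The one nontrivial (but standard) step is then the classical fact that every closed walk of odd length in a graph contains an odd cycle, which I would prove by induction: if the walk has a repeated vertex, split it at that vertex into two shorter closed walks, one of which must be odd, and recurse; if no vertex repeats, the walk itself is an odd cycle. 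This produces an odd cycle in $H$ of length at most $\og(G)$, so $\og(H)\le\og(G)$. The main conceptual point is this last lemma about closed odd walks; the other two items are immediate once one uses the $K_k$ reformulation of coloring and the injectivity on cliques.
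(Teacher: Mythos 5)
Your proof is correct. The paper states this observation without proof, simply citing the Hell--Ne\v{s}et\v{r}il monograph, and your three arguments (injectivity of $f$ on a clique via loop-freeness of $H$, composing $f$ with a homomorphism $H \to K_{\chi(H)}$, and extracting an odd cycle from the odd closed walk $f(C)$) are exactly the standard ones; the only case worth a passing remark is that $\og(G) \geq \og(H)$ holds vacuously when $G$ is bipartite, which your argument implicitly covers.
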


We denote by $H_1 + \ldots + H_m$ a disconnected graph with connected components $H_1, \dots, H_m$.
Observe that if $f$ is a homomorphism from $G=G_1 + \ldots + G_\ell$ to $H=H_1 + \ldots + H_m$, then it maps every connected component of $G$ into some connected component of $H$. Also note that a graph does not have to be connected to be a core, in particular the following characterization follows directly from the definition of a core.

\begin{observation} \label{obs:components}
A disconnected graph $H$ is a core if and only if its connected components are pairwise incomparable cores.{\hfill$\square$\smallskip}
\end{observation}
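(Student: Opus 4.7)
The plan is to prove both implications by directly unwinding the core-equals-``every-endomorphism-is-an-automorphism'' characterization, exploiting the key structural fact that any homomorphism between disconnected graphs sends each connected component of the source into a single connected component of the target.

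For the forward direction, I would argue the contrapositive: assume that either some $H_i$ fails to be a core, or some pair $H_i, H_j$ with $i \ne j$ is not incomparable (i.e., $H_i \to H_j$ without loss of generality, possibly with $i = j$ collapsed into the previous case). In either situation I construct an endomorphism $f \colon H \to H$ that is not surjective, hence not an automorphism. In the first case, take a homomorphism $g \colon H_i \to H_i'$ to a proper subgraph $H_i' \subsetneq H_i$ and define $f$ to act as $g$ on $H_i$ and as the identity on every other component. In the second case, define $f$ to send $H_i$ into $H_j$ via the given homomorphism and to be the identity elsewhere; the image then misses $H_i$ entirely (since $i \ne j$ and both are proper components of $H$). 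Either way $H$ admits a non-surjective endomorphism, contradicting that $H$ is a core.

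For the reverse direction, assume every $H_i$ is a core and that the $H_i$ are pairwise incomparable, and let $f \colon H \to H$ be an endomorphism. Since each $H_i$ is connected, there is an index $\sigma(i)$ such that $f(V(H_i)) \subseteq V(H_{\sigma(i)})$, giving a homomorphism $H_i \to H_{\sigma(i)}$. Incomparability of distinct components then forces $\sigma(i) = i$ for every $i$, so $f$ restricts to an endomorphism of each core $H_i$, which must be an automorphism of $H_i$. Assembling these component-wise automorphisms gives an automorphism of $H$, so $H$ is a core.

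I do not expect any real obstacle here; the main thing to be careful about is ensuring that the constructed map in the forward direction is genuinely not surjective (which is why the ``$i \ne j$'' distinction matters) and that $\sigma$ in the reverse direction is forced to be the identity rather than just a permutation — incomparability, rather than mere non-isomorphism, is what delivers this.
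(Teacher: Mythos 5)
Your argument is correct and is exactly the routine unwinding of the definitions that the paper intends: the Observation is stated there without proof, as an immediate consequence of the definition of a core, and both of your directions (the component-wise construction of a non-automorphism endomorphism, and the use of incomparability to pin each component in place) are the standard way to make that explicit. One micro-quibble: in your first case the constructed endomorphism need not be vertex-non-surjective if the proper subgraph $H_i'$ of $H_i$ is spanning (same vertices, fewer edges), but it still fails to be an automorphism, which is all the ``every endomorphism is an automorphism'' characterization requires.
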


An example of a pair of incomparable cores is shown in  \cref{fig:grotzsch}: it is the \emph{Grötzsch graph}, denoted by $G_G$, and the clique $K_3$. Clearly, $\og(G_G) > \og(K_3)$ and $\chi(G_G) > \chi(K_3)$, so by \cref{obs:incompcores}, they are incomparable. Therefore, by \cref{obs:components}, the graph $G_G + K_3$ is a core. 

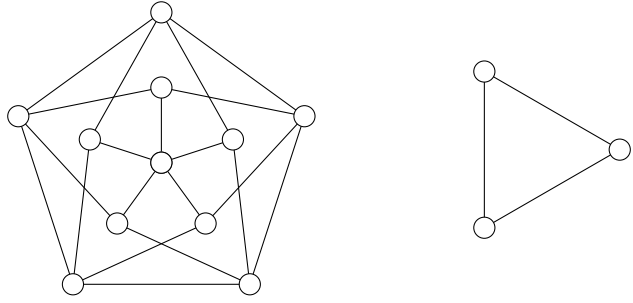
\begin{figure}[ht]
\centering{\begin{tikzpicture}[every node/.style={draw,circle,fill=white,inner sep=0pt,minimum size=8pt},every loop/.style={}, rotate = 360/20]
\def\n{5}
\node (a) at (0,0) {};
\foreach \i in {1,...,\n}
{
	\draw (360/\n*\i-360/\n:2) -- (360/\n*\i:2);
	\draw (360/\n*\i-360/\n:1) -- (360/\n*\i:2);
	\draw (360/\n*\i+360/\n:1) -- (360/\n*\i:2);
	\draw (360/\n*\i-360/\n:1) -- (a);
}
\foreach \i in {1,...,\n}
{
	\node (a\i) at (360/\n*\i:2) {};	
	\node (a\i) at (360/\n*\i:1) {};
}
\node (a) at (0,0) {};
\end{tikzpicture} \hskip 2cm
\begin{tikzpicture}[every node/.style={draw,circle,fill=white,inner sep=0pt,minimum size=8pt},every loop/.style={}]
\def\n{3}
\foreach \i in {1,...,\n}
{
	\draw (360/\n*\i-360/\n:1.2) -- (360/\n*\i:1.2);
}
\foreach \i in {1,...,\n}
{
	\node (a\i) at (360/\n*\i:1.2) {};	
}
\node[fill=none,draw=none] at (0,-1.8) {};
\end{tikzpicture}}
\caption{An example of incomparable cores, the \emph{Grötzsch graph} (left) and $K_3$ (right).}
\label{fig:grotzsch}
\end{figure}

Finally, let us observe that we can construct arbitrarily large families of pairwise incomparable cores. Let us start the construction with an arbitrary non-trivial core $H_0$. Now suppose we have constructed pairwise incomparable cores $H_0,H_1,\ldots,H_k,$ and we want to construct $H_{k+1}$. Let $\ell= \max_{i \in \{0,\ldots, k\}} \og(H_i)$, $r= \max_{i \in \{0,\ldots, k\}} \chi(H_i)$. By the classic result of Erd\H{o}s \cite{erdos1959graph}, there is a graph $H$ with $\og(H)>\ell$ and $\chi(H)>r$.
We set $H_{k+1}$ to be the core of $H$. Observe that $\og(H_{k+1}) = \og(H) > \ell$ and $\chi(H_{k+1}) = \chi(H) > r$, so, by \cref{obs:incompcores}, we have that for every $i \in \{0,\ldots, k\}$ the core $H_{k+1}$ is incomparable with $H_i$.

\subsection{Graph products and projectivity} \label{sec:pre-prod}
Define the \emph{direct product} of graphs $H_1$ and $H_2$, denoted by $H_1 \times H_2$, as follows:
\begin{align*}
V(H_1 \times H_2)  = & \{(x,y)~|~ x \in V(H_1)\textrm{ and }y \in V(H_2)\} \textrm{ and }\\
 E(H_1 \times H_2) = & \{(x_1,y_1)(x_2,y_2)~|~ x_1x_2 \in E(H_1) \textrm{ and } y_1y_2 \in E(H_2)\}.
\end{align*}
If $H=H_1 \times H_2$, then  $H_1 \times H_2$ is a \emph{factorization} of $H$, and $H_1$ and $H_2$ are its  \emph{factors}.
Clearly, the binary operation $\times$ is commutative, so will identify $H_1 \times H_2$ and $H_2 \times H_1$.
Since $\times$ is also associative, we can extend the definition for more than two factors: 
\[H_1 \times \dots \times H_{m-1} \times H_m := (H_1 \times \dots \times H_{m-1}) \times H_m.\] 
Moreover, in the next sections, we will sometimes consider products of graphs, that are products themselves. Formally, the vertices of such graphs are tuples of tuples.
If it does not lead to confusion, for $\bar{x}:=(x_1,\dots, x_{k_1})$ and $\bar{y}:=(y_1,\dots, y_{k_2})$, we will treat tuples $(\bar{x},\bar{y}),$ $(x_1,\dots, x_{k_1},y_1,\dots, y_{k_2})$, $(\bar{x},y_1,\dots, y_{k_2})$, and $(x_1,\dots, x_{k_1},\bar{y})$ as equivalent. This notation is generalized to more factors in a natural way. We denote by $H^m$ the product of $m$ copies of $H$.

The direct product appears in the literature under different names: \emph{tensor product}, \emph{cardinal product}, \emph{Kronecker product}, \emph{relational product}. It is also called \emph{categorical product}, because it is the product in the category of graphs (see \cite{hammack2011handbook,miller_1968} for details).

Note that if $H_1 \times H_2$ has at least one edge, then $H_1 \times H_2 \simeq H_1$ if and only if $H_2 \simeq K_1^*$.
We say that a graph $H$ is \emph{directly indecomposable} (or \emph{indecomposable} for short) if the fact that $H = H_1 \times H_2$ implies that either $H_1 \simeq K_1^*$ or $H_2 \simeq K_1^*$. A graph that is not indecomposable, is \emph{decomposable}. 
A factorization, where each factor is directly indecomposable and not isomorphic to $K_1^*$, is called a \emph{prime factorization}.  Clearly, $K_1^*$ does not have a prime factorization.

The following property will be very useful (see also Theorem 8.17 in~\cite{hammack2011handbook}).
\begin{theorem}[McKenzie~\cite{McKenzie1971}]\label{thm:unique-fact}
Any connected non-bipartite graph with more than one vertex has a unique prime factorization into directly indecomposable factors (with possible loops).
\end{theorem}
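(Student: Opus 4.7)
The plan is to prove existence and uniqueness of the prime factorization separately. Existence is the easy part: I would induct on $|V(H)|$. If $H$ is already directly indecomposable, there is nothing to do. Otherwise $H \simeq A \times B$ where neither factor is $K_1^*$, so each has strictly fewer vertices than $H$. An application of Weichsel's theorem, together with the observation that projections of odd closed walks are odd closed walks, shows that both $A$ and $B$ must themselves be connected and non-bipartite (otherwise $H$ would be disconnected or bipartite). Hence the inductive hypothesis yields prime factorizations of $A$ and of $B$, and concatenating them produces one for $H$.

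For uniqueness, my plan is to exploit the interaction between the direct product and the hom-counting function. Lov\'asz's classical theorem asserts that a graph is determined up to isomorphism by the ``hom-profile'' $T \mapsto \hom(T, H)$, and $\hom(T, A \times B) = \hom(T, A) \cdot \hom(T, B)$ for every $T$. Two prime factorizations $H_1 \times \cdots \times H_m \simeq K_1 \times \cdots \times K_n$ therefore translate into the identity $\prod_i \hom(T, H_i) = \prod_j \hom(T, K_j)$ in the semigroup of non-negative integer-valued functions on graphs, and the task becomes recovering the multisets $\{H_i\}$ and $\{K_j\}$ from this identity. Given such recovery in the form of a cancellation law, the uniqueness proof proper is a straightforward induction on $m+n$: one picks any $H_i$, uses the fact that a surjective homomorphism from an indecomposable graph into a product must ``factor through'' some coordinate projection, to conclude $H_i \simeq K_j$ for some $j$, then cancels $H_i$ from both sides and recurses.

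The main obstacle, and the technical core of McKenzie's argument, is precisely this cancellation law: if $A \times C \simeq B \times C$ with $C$ connected non-bipartite, then $A \simeq B$. A naive hom-counting proof runs into trouble on test graphs $T$ with $T \not\to C$, since there both $\hom(T, A \times C)$ and $\hom(T, B \times C)$ collapse to zero, so we cannot simply cancel $\hom(T, C)$. The workaround is to show that the class of connected non-bipartite test graphs already determines a connected non-bipartite target up to isomorphism; one way to phrase this is via a careful structural analysis of the idempotent endomorphisms (equivalently, the factor congruences) of $A \times C$, which must respect the factorization and thereby recover $A$ intrinsically from $A \times C$. The hypotheses that $H$ be connected and non-bipartite enter exactly at this point: both are essential (disjoint unions and bipartite factors admit well-known counterexamples to uniqueness, e.g.\ $K_2 \times (K_2+K_2) \simeq K_2 \times K_2 \times K_2$), and both are used to ensure that the factors inherit the structural hypotheses needed to iterate the argument.
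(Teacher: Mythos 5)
The paper does not prove this statement at all: it is imported as a black box from McKenzie~\cite{McKenzie1971} (see also Theorem~8.17 of~\cite{hammack2011handbook}), so there is no internal proof to compare yours against. Judged on its own merits, your existence argument (induction on $|V(H)|$, using Weichsel's theorem and the projections to check that the factors inherit connectivity and non-bipartiteness) is fine. The uniqueness half, however, has two genuine gaps, and they are exactly where all the difficulty of McKenzie's theorem lives.

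First, you correctly isolate the cancellation law ($A\times C\simeq B\times C$ with $C$ connected non-bipartite implies $A\simeq B$) as the technical core, but you do not prove it; ``a careful structural analysis of the idempotent endomorphisms \ldots which must respect the factorization and thereby recover $A$ intrinsically'' names a place where a proof might live, not a proof. This cancellation property is itself a nontrivial theorem (due to Lov\'asz), and without it the identity $\prod_i \hom(T,H_i)=\prod_j\hom(T,K_j)$ gives you nothing, for precisely the $\hom(T,C)=0$ degeneracy you point out. Second, and more seriously, your induction step rests on the assertion that ``a surjective homomorphism from an indecomposable graph into a product must factor through some coordinate projection.'' As stated this does not even typecheck (a coordinate projection has the product, not $A$, as its domain), and in any reading strong enough to force $H_i\simeq K_j$ it is unjustified: the principle that indecomposability alone controls how a graph maps into or out of products is essentially a projectivity-type statement, and whether indecomposable connected non-trivial cores are projective is precisely the open problem the paper records as \cref{con:proj-iff-prime}. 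McKenzie's actual argument avoids any such claim: it works with factor congruences (pairs of complementary equivalence relations on $V(H)$ whose quotients reconstitute $H$), shows that under the connectivity and non-bipartiteness hypotheses these form a Boolean algebra, and deduces uniqueness of the prime factorization from the unique decomposition of that algebraic object rather than by matching factors one homomorphism at a time. So your outline correctly identifies the obstacles but overcomes neither of them.
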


Let $H_1 \times \ldots \times H_m$ be some factorization of $H$ (not necessary prime) and let $i \in [m]$. A function $\pi_i:V(H) \to V(H_i)$ such that for every $(x_1,\dots,x_m) \in V(H)$ it holds that $\pi_i(x_1,\dots,x_m)=x_i$ is a \emph{projection on the $i$-th coordinate}. It follows from the definition of the direct product that every projection $\pi_i$ is a homomorphism from $H$ to $H_i$.

Below we summarize some basic properties of direct products.
\begin{observation}\label{obs:projprop}Let $H$ be a graph on $k$ vertices. Then
\begin{compactenum}[(a)]
\item \label{obs:kone} $H \times K_1$ consists of $k$ isolated vertices, in particular its core is $K_1$,
\item \label{obs:ktwo} if $H$ has at least one edge, then the core of $H \times K_2$ is $K_2$,
\item \label{obs:sub} the graph $H^m$ contains a subgraph isomorphic to $H$, which is induced by the set $\{(x,\dots,x)~|~ x \in V(H)\}$; in particular, if $m \geq 2$, then $H^m$ is never a core, 
\item \label{obs:connected} if $H=H_1 \times \ldots \times H_m$ and $H_1,H_2,\ldots,H_m$ are connected, then $H$ is connected if and only if at most one $H_i$ is bipartite,
\item \label{obs:homprod}if $H=H_1 \times \ldots \times H_m$, then for every $G$ it holds that $G \to H$ if and only if $G \to H_i$ for all $i \in [m]$.
\end{compactenum}
\end{observation}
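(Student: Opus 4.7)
The plan is to verify the five parts essentially independently: (a), (b), (c), and (e) each reduce to a short direct calculation from the definition of $\times$, while (d) is a classical Weichsel-type parity argument.

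First I would dispose of (a) and (b). For (a), $K_1$ has no edges, so by the definition of $\times$ no pair $(x_1,v)(x_2,v)$ is ever an edge of $H \times K_1$; hence $H \times K_1$ consists of $k$ isolated vertices and its core is $K_1$. For (b), the second-coordinate map $(x,y)\mapsto y$ is a homomorphism $H \times K_2 \to K_2$, so $H\times K_2$ is bipartite; it has at least one edge because $H$ has one and $K_2$ has one, and \cref{obs:trivial}(b) then forces its core to be $K_2$. For (c), I would introduce the diagonal $\delta(x) := (x,\ldots,x)$ and check from the definition that $\delta$ embeds $H$ as an induced subgraph of $H^m$: a pair $(x,\ldots,x)(y,\ldots,y)$ is an edge of $H^m$ iff $xy \in E(H)$ holds in every coordinate, which is just $xy \in E(H)$. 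In the only non-trivial case $|H| \geq 2$, the composition $\delta \circ \pi_1$ is then an endomorphism of $H^m$ whose image is a proper diagonal copy of $H$, so $H^m$ is not a core. Part (e) is the universal property of the direct product: the forward direction uses $\pi_i \circ f$, and the backward direction assembles $v \mapsto (f_1(v),\ldots,f_m(v))$, with edge preservation immediate from the definition of $\times$.

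The main work is part (d), which I would approach as follows. For the ``only if'' direction, suppose two factors $H_i, H_j$ are bipartite with bipartitions $(A_i,B_i)$ and $(A_j,B_j)$. Then every edge of $H=H_1\times\ldots\times H_m$ simultaneously flips the $A/B$ side in both coordinates $i$ and $j$, so the partition of $V(H)$ according to whether $x_i \in A_i \Leftrightarrow x_j \in A_j$ splits $H$ into two nonempty unions of connected components, and $H$ is disconnected. For the converse, suppose at most one factor is bipartite. Given $\bar x, \bar y \in V(H)$, connectivity of each $H_i$ yields a walk $W_i$ from $x_i$ to $y_i$ in $H_i$. The presence of an odd closed walk in every non-bipartite factor lets me inflate each $W_i$ to a common length while independently controlling its parity on each coordinate; the resulting equal-length walks paste coordinate-wise into a single walk in $H$ from $\bar x$ to $\bar y$, establishing connectivity.

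The hard part will be the simultaneous length-and-parity adjustment in (d); everything else is routine bookkeeping from the definitions of $\times$ and of a core.
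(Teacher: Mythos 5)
Your proposal is correct, and for parts (a), (b), (c), and (e) it coincides with the paper's treatment: the paper declares (a)--(c) ``straightforward to observe'' and proves (e) exactly as you do, projecting via $\pi_i \circ f$ in one direction and assembling $x \mapsto (f_1(x),\ldots,f_m(x))$ in the other. The genuine divergence is part (d): the paper does not prove it but cites Weichsel's theorem (see Corollary 5.10 in the Hammack--Imrich--Klav\v{z}ar handbook), whereas you reprove it from scratch with the standard parity argument --- two bipartite factors yield an edge-invariant two-class partition of $V(H)$, and with at most one bipartite factor one equalizes the lengths of the coordinate walks using odd closed walks. Your route buys self-containedness at the cost of the one nontrivial argument in the observation. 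Two small remarks. First, in the converse of (d) you cannot ``independently control the parity'' of the walk in a bipartite factor, since there the parity of any $x_i$--$y_i$ walk is forced by the bipartition; rather, that forced parity dictates the parity of the common target length $L$, and the non-bipartite factors then adjust to match --- your argument goes through, but the phrasing should reflect this. Second, you are right to restrict (c) to $|H| \ge 2$: the image of $\delta \circ \pi_1$ is a proper subgraph precisely when $|H|^m > |H|$, and indeed for $|H| = 1$ the parenthetical claim fails as stated (e.g.\ $(K_1^*)^m \simeq K_1^*$ is a core); the paper glosses over this degenerate case, while you at least flag it.
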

\begin{proof} Items \eqref{obs:kone}, \eqref{obs:ktwo}, \eqref{obs:sub} are straightforward to observe. Item \eqref{obs:connected} follows from a result of Weichsel~\cite{weichsel1962kronecker}, see also \cite[Corollary 5.10]{hammack2011handbook}.
To prove \eqref{obs:homprod}, consider a homomorphism $f: G \to H$. Clearly, $H \to H_i$ for every $i \in [m]$ because each projection $\pi_i: H \to H_i$ is a homomorphism. So $\pi_i \circ f$ is a homomorphism from $G$ to $H_i$. On the other hand, if we have some $f_i:G \to H_i$ for every $i \in [m]$, then we can define a homomorphism $f: G \to H$ by $f(x):=(f_1(x),\dots,f_m(x))$.
\end{proof}  

A homomorphism $f:H^m \to H$ is \emph{idempotent}, if for every $x \in V(H)$ it holds that $f(x,x,\dots,x)=x$.
One of the main characters of the paper is the class of \emph{projective graphs}, considered e.g. in \cite{Larose2002FamiliesOS,larose2002strongly,larose2001strongly}.
A graph $H$ is \emph{projective} (or \emph{idempotent trivial}), if for every $m \geq 2$, every idempotent homomorphism from $H^m$ to $H$ is a projection. 

\begin{observation} \label{obs:composition}
If $H$ is a projective core and $f: H^m \to H$ is a homomorphism, then $f \equiv g \circ \pi_i$ for some $i \in [m]$ and some automorphism $g$ of $H$.
\end{observation}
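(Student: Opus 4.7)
The plan is to reduce an arbitrary homomorphism $f: H^m \to H$ to an idempotent one, so that we can apply the definition of projectivity directly. The key observation is that $f$ restricted to the ``diagonal copy'' of $H$ in $H^m$ gives an endomorphism of $H$, which must be an automorphism because $H$ is a core. Composing $f$ with the inverse of this automorphism yields an idempotent homomorphism, to which projectivity applies.

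More precisely, first I would define $\phi: V(H) \to V(H)$ by
\[
\phi(x) := f(x, x, \ldots, x).
\]
Note that the diagonal embedding $\Delta: V(H) \to V(H^m)$ given by $\Delta(x) = (x, \ldots, x)$ is a homomorphism from $H$ to $H^m$, because $xy \in E(H)$ implies $(x,\ldots,x)(y,\ldots,y) \in E(H^m)$ by the definition of the direct product. Hence $\phi = f \circ \Delta$ is an endomorphism of $H$. Since $H$ is a core, $\phi$ must be an automorphism of $H$; in particular $\phi^{-1}$ exists and is also an automorphism.

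Next, I would consider the map $\phi^{-1} \circ f : H^m \to H$, which is a homomorphism as a composition of homomorphisms. For every $x \in V(H)$,
\[
(\phi^{-1} \circ f)(x,x,\ldots,x) = \phi^{-1}(\phi(x)) = x,
\]
so $\phi^{-1} \circ f$ is idempotent. Since $H$ is projective, $\phi^{-1} \circ f = \pi_i$ for some $i \in [m]$. Rearranging gives
\[
f = \phi \circ \pi_i,
\]
so taking $g := \phi$ proves the claim.

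There is essentially no obstacle here once one spots the trick of restricting to the diagonal and inverting the resulting automorphism; everything else is assembling standard facts, namely that cores have only automorphisms as endomorphisms, that $\Delta$ is a homomorphism, and the defining property of projectivity.
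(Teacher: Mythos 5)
Your proof is correct and follows essentially the same route as the paper: restrict $f$ to the diagonal to get an automorphism $g$ of the core $H$, observe that $g^{-1}\circ f$ is idempotent, and invoke projectivity to conclude it is a projection. The only difference is cosmetic — you spell out why the diagonal embedding is a homomorphism, which the paper leaves implicit.
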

\begin{proof}
If $f$ is idempotent, then it is a projection and we are done. Assume $f$ is not idempotent and define $g: V(H) \to V(H)$ by $g(x) = f(x, \dots, x)$. The function $g$ is an endomorphism of $H$ and $H$ is a core, so  $g$ is in fact an automorphism of $H$. Observe that $g^{-1} \circ f$ is an idempotent homomorphism, so it is equal to $\pi_i$ for some $i \in [m]$, because $H$ is projective. From this we get that $f \equiv g \circ \pi_i$.
\end{proof}

It is known that projective graphs are always connected~\cite{larose2001strongly}.
Observe that the definition of projective graphs does not imply that their recognition is decidable. However, an algorithm to recognize these graphs follows from the following, useful characterization.
\begin{theorem}[Larose, Tardif~\cite{larose2001strongly}]\label{thm:proj-char}
A connected graph $H$ with at least three vertices is projective if and only if every idempotent homomorphism from $H^2$ to $H$ is a projection. \end{theorem}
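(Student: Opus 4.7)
The forward direction is immediate: the definition of projective applied with $m = 2$ gives the stated property. For the converse, I would proceed by strong induction on $m \geq 2$. The base case $m = 2$ is exactly the hypothesis, so fix $m \geq 3$ and assume the conclusion for every $m' \in \{2, \ldots, m-1\}$; given an idempotent $f \colon H^m \to H$, the goal is to show $f$ is a projection.

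The plan is to extract structural information by composing $f$ with diagonal maps that merge subsets of coordinates. Two families of auxiliary homomorphisms are central. First, for each pair $i \neq j$ in $[m]$, substituting a single variable for both the $i$-th and $j$-th coordinates of $f$ yields an idempotent homomorphism $f_{ij} \colon H^{m-1} \to H$, which by the inductive hypothesis is a projection $\pi_{p(i,j)}$. Second, for each partition $[m] = A \sqcup B$ with $A, B \neq \emptyset$, the map $g_{A,B}(x,y) := f(z_1, \ldots, z_m)$ with $z_i = x$ for $i \in A$ and $z_i = y$ for $i \in B$ is an idempotent homomorphism $H^2 \to H$, hence by the base hypothesis a projection; so $g_{A,B}(x,y)$ equals $x$ identically or $y$ identically.

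Define $\mathcal{F} := \{A \subsetneq [m] : A \neq \emptyset,\ g_{A, [m] \setminus A}(x, y) = x \text{ for all } x, y\}$; by construction, $\mathcal{F}$ contains exactly one of each pair $\{A, [m] \setminus A\}$ of complementary nonempty proper subsets. The crux is to show that $\mathcal{F}$ is the principal filter generated by some singleton $\{i^*\}$. A natural route is to show $\mathcal{F}$ is closed under intersection: given $A, A' \in \mathcal{F}$, consider the four-variable idempotent polymorphism $H^4 \to H$ obtained from $f$ by grouping its coordinates according to the four blocks $\{A \cap A',\, A \setminus A',\, A' \setminus A,\, [m] \setminus (A \cup A')\}$; by the inductive hypothesis (for sufficiently large $m$) or a direct analysis (for small $m$), this map is a projection, and the fact that $g_{A, [m] \setminus A}$ and $g_{A', [m] \setminus A'}$ both project to their ``first'' sides forces the four-variable polymorphism to project on the coordinate corresponding to $A \cap A'$, yielding $A \cap A' \in \mathcal{F}$. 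Once intersection-closure is established, $\mathcal{F}$ must be the principal filter of a unique minimum element $\{i^*\}$.

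Knowing $i^*$ determines $f$ on every two-valued input, namely $f(z_1, \ldots, z_m) = z_{i^*}$ whenever $|\{z_1, \ldots, z_m\}| \leq 2$. To conclude $f \equiv \pi_{i^*}$ on all of $H^m$, I would propagate this equality along edges of $H^m$, using the homomorphism property of $f$ together with the fact that $H$ is connected and $|H| \geq 3$. The main obstacle I anticipate is precisely the combinatorial bookkeeping in the intersection-closure step and the final propagation: while the two-valued slice information constrains $f$ heavily, extending it requires care, especially in corner cases such as $H$ having loops (though the hypothesis already rules out worst cases like $H = K_k^*$, which would admit many non-projection idempotent binary polymorphisms such as $\max$ or $\min$).
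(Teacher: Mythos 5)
The paper does not actually prove this statement---it imports it verbatim from Larose and Tardif~\cite{larose2001strongly}---so there is no internal proof to compare against; judged on its own terms, your induction has a genuine gap, and it sits exactly where the content of the theorem lives. The crux of your argument is the intersection-closure of $\mathcal{F}$, but your proposed proof of that step is circular precisely at the arities that anchor the induction. For $m=3$ with $A=\{1,2\}$, $A'=\{1,3\}$ the four blocks are $\{1\},\{2\},\{3\},\emptyset$, and for $m=4$ with all four blocks nonempty they are singletons; in both cases the ``grouped'' map is $f$ itself, so you are assuming what you want to prove, and no ``direct analysis'' is supplied. Concretely, at $m=3$ the failure of intersection-closure is witnessed exactly by a majority-pattern polymorphism ($f(x,x,y)=f(x,y,x)=f(y,x,x)=x$ yields $\mathcal{F}=\{\{1,2\},\{1,3\},\{2,3\}\}$, which has no minimum) and by a Maltsev-pattern one. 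Excluding these is a substantive graph-theoretic fact, not a formal consequence of variable identifications: for general relational structures the implication ``all idempotent binary polymorphisms are projections $\Rightarrow$ all idempotent polymorphisms are projections'' is \emph{false} (there are structures, e.g.\ with polymorphism clone generated by a majority operation, whose binary idempotent polymorphisms are exactly the two projections). So any correct proof must use that $H$ is a graph, connected, with at least three vertices, in an essential way at this step, and your outline does not.

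The second gap is the final propagation. Even if you upgraded the two-valued information to ``$f$ agrees with $\pi_{i^*}$ on every non-injective tuple'' (which is what a \'Swierczkowski-type analysis of the identifications would give), this does not determine $f$ on injective tuples---again the majority operation agrees with a projection on all degenerate tuples without being one. The only mechanism you name is that a two-valued neighbour $\bar w$ of $\bar z$ in $H^m$ forces $f(\bar z)$ to be adjacent to $w_{i^*}$; pushed as far as it goes, this pins $f(\bar z)$ down only to the set of vertices whose neighbourhood contains $N(z_{i^*})$, which is a singleton only if $H$ is ramified (not an available hypothesis), and it moreover needs common neighbourhoods of the $z_i$'s that a general connected $H$ need not provide. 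The Larose--Tardif proof avoids this induction altogether, working instead with the characterization of projectivity via constructible subsets and subalgebras of $H^2$. To salvage your outline you would need, at minimum, a direct exclusion of majority- and Maltsev-pattern idempotent polymorphisms under the binary hypothesis, and a genuine extension lemma from degenerate tuples to injective ones; neither is routine bookkeeping.
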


Recall from the introduction that almost all graphs are projective cores~\cite{hell2004graphs,luczak2004note}. 
It appears that the properties of projectivity and being a core are independent. In particular, the graph in \cref{fig:nocore} is not a core, as it can be mapped to a triangle. However, Larose~\cite{Larose2002FamiliesOS} proved that all non-bipartite, connected, ramified graphs which do not contain $C_4$ as a (non-necessarily induced) subgraph, are projective (this will be discussed in more detail in \cref{sec:conclusion}, see \cref{thm:squarefreee}).
On the other hand, there are also non-projective cores, an example is $G_G \times K_3$, see \cref{fig:grotzsch}. 
We discuss such graphs in detail in \cref{sec:nonproj}.

\begin{figure}[ht]
\centering{\begin{tikzpicture}[scale=0.9,every node/.style={draw,circle,fill=white,inner sep=0pt,minimum size=8pt},every loop/.style={}]
\draw (2,1) -- (0,0) -- (0,2) -- (2,1) -- (4,0) -- (4,2) -- (2,1);
\node at (0,0) {};
\node at (4,0) {};
\node at (0,2) {};
\node at (4,2) {};
\node at (2,1) {};
\end{tikzpicture}}
\caption{An example of a projective graph which is not a core.}
\label{fig:nocore}
\end{figure}
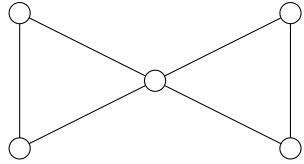

\section{Complexity of finding graph homomorphisms} \label{sec:algo}
Note that if two graphs $H_1$ and $H_2$ are homomorphically equivalent, then the \homo{H_1} and \homo{H_2} problems are also equivalent. So in particular, because every graph is homomorphically equivalent to its core, we may restrict our attention to graphs $H$ which are cores.
Also, recall from \cref{obs:trivial} that \homo{H} can be solved in polynomial time if $H$ is isomorphic to $K_1^*$, $K_1$, or $K_2$. So we will be interested only in non-trivial cores $H$. In particular, we will assume that $H$ is non-bipartite and has no loops.

We are interested in understanding the complexity bound of the \homo{H} problem, parameterized by the treewidth of the input graph.
The dynamic programming approach (see Bodlaender et al.~\cite{bodlaender2013fine}) gives us the following upper bound.


\begin{theorem}[Bodlaender et al.~\cite{bodlaender2013fine}]\label{thm:algo}
Let $H$ be a graph on $k$ vertices. Even if $H$ is a part of the input, the \homo{H} problem can be solved in time $\Oh(k^{t+1} \cdot n)$, assuming a tree decomposition of width $t$ of the instance graph on $n$ vertices is given.
\end{theorem}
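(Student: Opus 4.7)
The plan is a standard dynamic programming on a nice tree decomposition. First I would transform the given tree decomposition into a \emph{nice tree decomposition} with leaf, introduce-vertex, introduce-edge, forget-vertex, and join nodes, of the same width $t$ and with $\Oh(n)$ nodes in total; this transformation is classical and can be done in time $\Oh(t \cdot n)$. I also store $H$ as an adjacency matrix in $\Oh(k^2)$ time so that edge queries in $H$ cost $\Oh(1)$.

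For each node $a$ of the nice decomposition let $V_a$ be the set of vertices introduced in the subtree rooted at $a$ and let $G_a$ be the subgraph of $G$ on $V_a$ consisting of the edges introduced in that subtree. I keep a boolean table $D_a$ indexed by functions $\phi \colon X_a \to V(H)$, with the intended meaning that $D_a[\phi] = \mathrm{true}$ iff $\phi$ extends to a homomorphism from $G_a$ to $H$. The transitions are routine. At a leaf, $D_a[\emptyset] = \mathrm{true}$. At an introduce-vertex node that adds $v$ with child $a'$, set $D_a[\phi] := D_{a'}[\phi|_{X_{a'}}]$; no edge incident to $v$ has been introduced yet, so there is nothing to verify. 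At an introduce-edge node that adds $uv$ with child $a'$, set $D_a[\phi] := D_{a'}[\phi]$ if $\phi(u)\phi(v) \in E(H)$, and $\mathrm{false}$ otherwise. At a forget-vertex node that drops $v$ with child $a'$, set $D_a[\phi] := \bigvee_{x \in V(H)} D_{a'}[\phi \cup \{v \mapsto x\}]$. At a join node with children $a'$ and $a''$, set $D_a[\phi] := D_{a'}[\phi] \wedge D_{a''}[\phi]$. Correctness follows by induction on the subtree, using the standard property of tree decompositions that vertices forgotten in disjoint subtrees are non-adjacent in $G$, so partial homomorphisms on the two sides can be combined whenever they agree on $X_a$. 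The answer is $D_r[\emptyset]$, where $r$ is the root, whose bag we take to be empty.

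For the running time, each table has at most $k^{t+1}$ entries. Introduce-vertex and introduce-edge nodes process one entry at a time with $\Oh(1)$ edge lookups in $H$, and a join node performs a single conjunction per entry; all of these cost $\Oh(k^{t+1})$ per node. A forget-vertex node with child $a'$ computes each of the at most $k^{|X_a|}$ parent entries as a disjunction of $k$ child entries, so its cost is $\Oh(k \cdot k^{|X_a|}) = \Oh(k^{t+1})$. Summed over the $\Oh(n)$ nodes this gives the claimed $\Oh(k^{t+1} \cdot n)$ bound. The main subtlety I anticipate is insisting on a nice decomposition that uses \emph{introduce-edge} nodes and has only $\Oh(n)$ nodes: without that refinement one would pay an extra factor of $t$ when verifying all edges incident to a newly introduced vertex, which would spoil the target running time.
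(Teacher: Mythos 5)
The paper offers no proof of this theorem; it is quoted verbatim from Bodlaender, Bonsma and Lokshtanov, so your proposal has to be judged on its own merits. The dynamic program itself --- tables indexed by mappings $X_a \to V(H)$ over a nice decomposition with introduce-edge nodes, with the transitions and the inductive correctness argument you give --- is the standard approach and is correct. The gap is in the running-time accounting, and it sits exactly where you yourself flag the ``main subtlety'': the claim that the nice tree decomposition has $\Oh(n)$ nodes in total. The classical transformation produces $\Oh(t\cdot n)$ nodes, not $\Oh(n)$. There are exactly $n$ forget nodes and at most $n$ join nodes, but every vertex lying in the bag of a join node must be introduced separately in each of the two subtrees below it, so the number of introduce-vertex nodes can be $\Theta(t\cdot n)$; and there is one introduce-edge node per edge of $G$, while a graph of treewidth $t$ can have $\Theta(t\cdot n)$ edges. (Both phenomena occur simultaneously for $K_{t,n-t}$ with the star-shaped decomposition whose $n-t$ leaf bags each consist of the $t$ left-side vertices plus one right-side vertex.) Since each such node forces a pass over a table with up to $k^{t+1}$ entries, your accounting yields $\Oh\bigl(k^{t+1}\cdot t\cdot n\bigr)$, not $\Oh\bigl(k^{t+1}\cdot n\bigr)$.

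Removing that extra factor of $t$ is not a routine refinement one gets for free from ``niceness''; it is precisely the nontrivial content of the cited reference (hence its title), which reorganizes the computation so that the work at introduce and join nodes can be amortized against the table sizes rather than paid in full at each of the $\Theta(t\cdot n)$ nodes. For the purposes of this paper the weaker bound $\Oh(k^{t+1}\cdot t\cdot n)$ would still support every $\Oh^*(\cdot)$ statement and every lower bound, but it does not establish the theorem in the precise form stated, nor the precise upper bounds of the form $\Oh(k^4 + k^{t+1}\cdot n)$ that are derived from it. You should either cite the refined construction for the missing amortization step or weaken the claimed bound accordingly.
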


By \cref{thm:LMS}, this bound is tight (up to the polynomial factor) if $H$ is a complete graph with at least three vertices, unless the SETH fails. We are interested in extending this result for other graphs $H$.

First, let us observe that there are cores, for which the bound from \cref{thm:algo} can be improved.
Indeed, let $H$ be a decomposable core, isomorphic to $H_1 \times \ldots \times H_m$ (see discussion in \cref{sec:nonproj} for more about cores that are products.).
Recall from \cref{obs:projprop} {(\ref{obs:homprod})} that for every graph $G$ it holds that \[G \to H\textrm{ if and only if }G \to H_i\textrm{ for every }i \in [m].\]
So, given an instance $G$ of \homo{H}, we can call the algorithm from \cref{thm:algo} to solve \homo{H_i} for each $i \in [m]$ and return a positive answer if and only if we get a positive answer in each of the calls. 
Since Hammack and Imrich~\cite{hammack2009cartesian} presented an algorithm for finding the prime factorization of $H$ in time $\Oh(|H|^4)$, we obtain the following result.

\begin{theorem}\label{thm:non-prime}
Let $H$ be a core with prime factorization $H_1 \times \ldots \times H_m$. 
Even if $H$ is given as a part of the input, the \homo{H} problem can be solved in time $\Oh\left(|H|^4+\max_{j \in [m]}|H_j|^{t+1}\cdot n\right)$, assuming a tree decomposition of width $t$ of the instance graph with $n$ vertices is given. {\hfill$\square$\smallskip}
\end{theorem}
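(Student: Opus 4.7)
The plan is to combine three ingredients that are already in place: the algorithm of Hammack and Imrich~\cite{hammack2009cartesian} for computing a prime factorization, the product property in \cref{obs:projprop}\eqref{obs:homprod}, and the dynamic programming algorithm from \cref{thm:algo}. There is essentially no real obstacle, since the theorem is a clean algorithmic consequence of what has been established; the only delicate point is to account carefully for the sum over factors in the running time.

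First, I would invoke the Hammack--Imrich algorithm on $H$ to compute its prime factorization $H_1 \times \ldots \times H_m$ in time $\Oh(|H|^4)$. Since $H$ is a non-trivial core, each $H_i$ is non-trivial and in particular $|H_i|\geq 2$, so the number of factors satisfies $m \leq \log_2 |H|$. Next, observe that by \cref{obs:projprop}\eqref{obs:homprod}, an instance $G$ of \homo{H} satisfies $G \to H$ if and only if $G \to H_i$ for every $i \in [m]$. Therefore, the algorithm reduces to testing each of the $m$ instances $\homo{H_i}$ separately on the same input graph $G$, and returning YES precisely when all $m$ tests succeed. Note that the given tree decomposition of $G$ of width $t$ can be reused for every subproblem, so no extra decomposition work is needed.

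For each $i \in [m]$, \cref{thm:algo} solves $\homo{H_i}$ on $G$ in time $\Oh(|H_i|^{t+1} \cdot n)$. Summing over $i$ and writing $k := \max_{j \in [m]}|H_j|$, the total time spent on the dynamic programming phase is
\[
\sum_{i=1}^{m} \Oh\bigl(|H_i|^{t+1} \cdot n\bigr) \;\leq\; \Oh\bigl(m \cdot k^{t+1} \cdot n\bigr) \;=\; \Oh\bigl(\log |H| \cdot k^{t+1} \cdot n\bigr),
\]
which is $\Oh(k^{t+1} \cdot n)$ after absorbing the logarithmic factor into the base (or equivalently into the $\Oh$-notation treating $\log|H|$ as a polynomial factor in the input size). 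Adding the $\Oh(|H|^4)$ cost of the factorization step yields the claimed bound $\Oh\bigl(|H|^4 + k^{t+1} \cdot n\bigr)$, and correctness follows immediately from \cref{obs:projprop}\eqref{obs:homprod}.
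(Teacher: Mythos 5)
Your proposal is correct and follows essentially the same route as the paper: compute the prime factorization via Hammack--Imrich in $\Oh(|H|^4)$ time, reduce to the factors using \cref{obs:projprop}~(e), and run the dynamic programming of \cref{thm:algo} on each factor with the same tree decomposition. Your explicit accounting of the $m \leq \log_2|H|$ factor in the sum is in fact slightly more careful than the paper's own (implicit) treatment of that overhead.
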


Let us conclude this section with two lemmas, which justify why we restrict our attention to  connected cores.
Assume $H=H_1+\ldots+H_m$ is a disconnected core.
The first lemma shows how to solve \homo{H} using an algorithm for finding homomorphisms to connected target graphs. 

\begin{lemma}\label{obs:disconnected-not-fixed}
Consider a disconnected core $H =H_1+\ldots + H_m$.
Assume that for every $i \in [m]$ the \homo{H_i} problem can be solved in time $\Oh(|H_i|^4 + n^d \cdot c(H_i)^t)$ for an instance $G$ with $n$ vertices, given along with its tree decomposition of width $t$, where $c$ is some function and $d$ is a constant.
Then, even in $H$ is given as an input, the \homo{H} problem can be solved in time $\Oh\left(|H|^4+n^d \cdot \max_{i \in [m]}c(H_i)^t\cdot |H|\right)$.
\end{lemma}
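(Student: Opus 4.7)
The plan is to reduce to the connected target case by the structural observation, already noted in the preceding discussion, that every homomorphism $G \to H = H_1+\ldots+H_m$ must map each connected component of $G$ entirely into a single $H_i$, since the image of a connected graph is connected and $H_1,\ldots,H_m$ are precisely the connected components of $H$. Consequently $G \to H$ if and only if every connected component $C$ of $G$ admits $C \to H_i$ for at least one $i \in [m]$, and the algorithm will simply verify this condition by invoking the assumed algorithms on all pairs $(C, H_i)$.

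Concretely, I would first compute the connected components $C_1,\ldots,C_\ell$ of $G$ and $H_1,\ldots,H_m$ of $H$ in linear time. For each $j \in [\ell]$, a tree decomposition of $C_j$ of width at most $t$ is obtained by replacing each bag $X_a$ of the given tree decomposition of $G$ by $X_a \cap V(C_j)$; the three defining properties of a tree decomposition are easily seen to survive this restriction. Then, for each pair $(j,i) \in [\ell]\times[m]$, I would run the assumed \homo{H_i} algorithm on $C_j$ equipped with this restricted decomposition, and accept iff for every $j$ there is some $i$ for which $C_j \to H_i$.

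For the running time I would interpret the assumed bound $\Oh(|H_i|^4 + n^d \cdot c(H_i)^t)$ in the natural way, splitting off a preprocessing phase of cost $\Oh(|H_i|^4)$ that depends only on $H_i$ (as in \cref{thm:non-prime}, where the $|H|^4$ term is used to compute the prime factorization of the target) and is therefore executed only once per $H_i$, from the per-call dynamic programming cost $\Oh(n_{C_j}^d \cdot c(H_i)^t)$ on each component. The preprocessing totals $\sum_{i \in [m]} |H_i|^4 \leq |H|^4$, while the DP phase sums to
\begin{align*}
\sum_{j \in [\ell]} \sum_{i \in [m]} n_{C_j}^d \cdot c(H_i)^t
 \;=\; \Bigl(\sum_{j \in [\ell]} n_{C_j}^d\Bigr)\cdot \Bigl(\sum_{i \in [m]} c(H_i)^t\Bigr)
 \;\leq\; n^d \cdot |H|\cdot \max_{i \in [m]} c(H_i)^t,
\end{align*}
where I use $\sum_j n_{C_j} = n$ together with $n_{C_j}\leq n$ to get $\sum_j n_{C_j}^d \leq n\cdot n^{d-1}=n^d$ for any constant $d \geq 1$, and $m \leq |H|$. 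Adding the two parts yields the claimed total.

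The only real subtlety—and the only place the analysis could go wrong—is this accounting of the preprocessing: a naive reading that re-executes the $\Oh(|H_i|^4)$ work inside each of the $\ell$ calls would introduce an extra factor of $\ell$ in the leading term. Isolating preprocessing from per-call work, which is the way such bounds are usually achieved (and certainly the case for \cref{thm:non-prime}), is the single ingredient needed to match the bound as stated.
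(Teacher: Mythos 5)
Your proposal is correct and follows essentially the same route as the paper: both reduce to the observation that a connected component of $G$ must map into a single component $H_i$, run the assumed algorithm for each pair (component of $G$, component of $H$), and bound the total via $\sum_i |H_i|^4 \le |H|^4$ and $m \le |H|$. Your explicit separation of the $\Oh(|H_i|^4)$ preprocessing from the per-component DP cost is a slightly more careful accounting than the paper, which instead first reduces to connected instances $G$ and analyzes only that case.
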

\begin{proof}
First, observe that if $G$ is disconnected, say $G = G_1 + \ldots + G_\ell$, then $G \to H$ if and only if $G_i \to H$ for every $i \in [\ell]$.
Also, any tree decomposition of $G$ can be easily transformed to a tree decomposition of $G_i$ of at most the same width.
It means that if the instance graph is disconnected, we can just consider the problem separately for each of its connected components. So we assume that $G$ is connected. Then $G \to H$ if and only if $G \to H_i$ for some $i \in [m]$. We find the connected components of $H$ in time $\Oh(|H|^2)$ and then solve \homo{H_i} for each $i \in [m]$ (for the same instance $G$) in time $\Oh(|H_i|^4 + n^d \cdot c(H_i)^t)$. We return a positive answer for \homo{H} if and only if we get a positive answer for at least one $i \in [m]$. The total complexity of this algorithm is $\Oh\left(|H|^2+|H_1|^4+ \ldots + |H_m|^4 +n^d \cdot (c(H_1)^t + \ldots + c(H_m)^t)\right)=\Oh\left(|H|^4+n^d \cdot \max_{i \in [m]}c(H_i)^t\cdot |H|\right), $ as $|H_1|^4+ \ldots + |H_m|^4 \leq (|H_1|+ \ldots + |H_m|)^4 = |H|^4$.
\end{proof}

The second lemma shows that even if we assume $H$ to be fixed, we cannot solve \homo{H} faster than we solve \homo{H_i}, for each connected component $H_i$ of $H$. We state it in a stronger version, parameterized by the pathwidth of an instance graph.

\begin{lemma}\label{lem:disconnected-fixed}
Let $H =H_1+\ldots + H_m$ be a fixed, disconnected core. Assume that the \homo{H} problem can be solved in time $\Ohs(\alpha^\pw{G})$ for an instance $G$ given along with its optimal path decomposition. Then for every $i \in [m]$ the \homo{H_i} problem can be solved in time $\Ohs(\alpha^{\pw{G}})$.
\end{lemma}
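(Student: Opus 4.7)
The plan is to give a pathwidth-preserving reduction from \homo{H_i} to \homo{H}. For each fixed $i \in [m]$ I will construct, from an instance $G$ of \homo{H_i} given with an optimal path decomposition of width $t = \pw{G}$, an instance $G'$ of \homo{H} together with a path decomposition of width $t + O(1)$, such that $G' \to H$ if and only if $G \to H_i$. Running the assumed algorithm for \homo{H} on $G'$ then solves \homo{H_i} on $G$ in time $\Ohs(\alpha^{t + O(1)}) = \Ohs(\alpha^{\pw{G}})$, as required.

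For the construction, first fix an arbitrary vertex $u \in V(H_i)$ and a constant $k$ (depending only on $H_i$) large enough that for every pair $a,b \in V(H_i)$ there is a walk of length exactly $k$ from $a$ to $b$ in $H_i$. Such a $k$ exists because $H_i$ is connected and non-bipartite: by \cref{obs:components} every component of $H$ is a core, and none of them can be isomorphic to $K_1$, $K_1^*$, or $K_2$, since in each of these three cases one easily exhibits an endomorphism of $H$ that collapses the whole graph onto the offending component, contradicting that $H$ is a core. Let $G_1,\dots,G_\ell$ be the connected components of $G$. For each $j$, pick a vertex $x_j \in V(G_j)$, take a fresh copy $F_j$ of $H_i$ with distinguished vertex $u^{(j)}$, and join $x_j$ and $u^{(j)}$ by a new path $P_j$ of length $k$. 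Let $G'$ be the resulting graph, a disjoint union of enlarged connected components $G'_j := G_j \cup P_j \cup F_j$.

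Correctness follows from two observations. First, each $G'_j$ is connected and contains the core $F_j \simeq H_i$, which is incomparable with every other component of $H$ (again by \cref{obs:components}); hence any homomorphism $G' \to H$ sends $G'_j$ entirely into $V(H_i)$, and its restriction to $V(G)$ is a homomorphism $G \to H_i$. Conversely, given $f \colon G \to H_i$, extend it to each $G'_j$ by mapping $F_j$ to itself via the identity (in particular $u^{(j)} \mapsto u$) and mapping $P_j$ along any walk of length $k$ in $H_i$ from $f(x_j)$ to $u$, which exists by choice of $k$. For the pathwidth, choose $x_j$ to lie in the last bag of the given decomposition that contains a vertex of $G_j$; immediately after that bag, insert the short sequence of new bags $\{x_j,p_1\}, \{p_1,p_2\},\dots,\{p_{k-1},p_k\}, \{p_k,u^{(j)}\}, V(F_j)$, where $p_1,\dots,p_k$ are the internal vertices of $P_j$. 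Every inserted bag has size at most $|V(H_i)| = O(1)$, all interval and edge-covering conditions for a path decomposition are preserved, and the resulting decomposition of $G'$ has width at most $t + O(1)$.

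The one delicate point is the forward extension. Had I identified $x_j$ with $u^{(j)}$ directly rather than linking them by a path, the extension to $F_j$ would require $f(x_j)$ to lie in the $\mathrm{Aut}(H_i)$-orbit of $u$, which cannot be assumed, since cores need not be vertex-transitive. The path of length $k$ absorbs this obstruction, using only that $H_i$ is connected and non-bipartite to guarantee walks of any sufficiently large prescribed length between every pair of its vertices.
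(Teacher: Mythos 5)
Your reduction is correct in substance but uses a genuinely different gadget from the paper's. The paper first restricts to connected instances $G$, then glues a copy of the product power $H_i^{k}$ (with $k=|H_i|$) onto $G$ by identifying a vertex $u$ of $G$ with the ``diagonal'' tuple $(z_1,\dots,z_k)$; the power serves both purposes at once, since it contains an induced copy of $H_i$ (forcing, by pairwise incomparability of the components of a disconnected core, that the image lands in $H_i$) and its distinguished vertex can be sent to any $z_j$ by the projection $\pi_j$, so the identification places no constraint on $f(u)$. You instead attach a plain copy of $H_i$ through a path of length $k$, where $k$ is chosen so that every ordered pair of vertices of $H_i$ is joined by a walk of length exactly $k$; this uses connectivity and non-bipartiteness of $H_i$ (which you correctly derive from \cref{obs:components} and the exclusion of trivial components) in place of the product construction, and you explicitly identify the orbit obstruction that direct identification would cause --- the very obstruction the paper's $H_i^k$ trick is designed to absorb. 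Your gadget is more elementary and much smaller ($|H_i|+k$ versus $|H_i|^{|H_i|}$ vertices), though this makes no asymptotic difference; you also handle disconnected $G$ in one pass rather than reducing to connected instances first.

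One step as written does fail: inserting the bags $\{x_j,p_1\},\{p_1,p_2\},\dots,V(F_j)$ immediately after the last bag $X_b$ meeting $G_j$ can break the interval condition for a vertex of a \emph{different} component that occurs both in $X_b$ and in some later bag, since that vertex is absent from the inserted bags. The fix is immediate and costs only a constant in the width: either take each inserted bag to be its union with $X_b$ (this is exactly what the paper does, defining $X_{b'}:=X_b\cup V(H_i^k)$), or first replace the given decomposition by the concatenation of its restrictions to the components of $G$, so that the bags of each component are contiguous. With that repair the argument is complete.
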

\begin{proof}
Again, we may restrict our attention only to connected instances, as otherwise we can solve the problem separately for each instance. Consider a connected instance $G$ of \homo{H_i} on $n$ vertices and pathwidth $t$. Let $V(H_i)=\{z_1,\dots,z_k\}$ and let $u$ be some fixed vertex of $G$. 
We construct an instance $G^*$ of \homo{H} as follows. We take a copy $G'$ of $G$ and a copy $\widetilde{H}_i^k$ of $H_i^k$, and identify the vertex corresponding to $u$ in $G'$ with the vertex corresponding to $(z_1,\dots,z_k)$ in $\widetilde{H}_i^k$. Denote this vertex of $G^*$ by $\bar{z}$. Observe that $H_i$ a connected, non-trivial core, so \cref{obs:projprop} \eqref{obs:connected} implies that $\widetilde{H}_i^k$ is connected. Since $G$ is also connected, $G^*$ must be connected. 

We claim that $G \to H_i$ if and only if $G^* \to H$. Indeed, if $f: G \to H_i$, then there exists $j \in [k]$ such that $f(u)=z_j$, so we can define a homomorphism $g: G^* \to H_i$ (which is also a homomorphism from $G^*$ to $H$) by 
\[g(x) = \begin{cases}
f(x) & \textrm{if $x \in V(G'),$}\\
\pi_j(x) & \textrm{otherwise.}
\end{cases}
\]
Clearly, both $f$ and $\pi_j$ are homomorphisms. Recall that $\bar{z}$ is a cutvertex in $G^*$ obtained by identifying $u$ from $G'$ and  $(z_1,\dots,z_k)$ from $\widetilde{H}_i^k$. Furthermore, we have $g(\bar{z})= f(u)=\pi_j(z_1,\dots,z_k)=z_j$, so $g$ is a homomorphism from $G^*$ to $H$.

Conversely, if we have $g:G^* \to H$, we know that $g$ maps $G^*$ to a connected component $H_j$ of $H$, for some $j \in [m]$, because $G^*$ is connected. But $G^*$ contains an induced copy $\widetilde{H}_i^k$ of $H_i^k$, so also an induced copy of $H_i$, say $\widetilde{H}_i$ (recall \cref{obs:projprop} {(\ref{obs:sub})}). So $g|_{V(\widetilde{H}_i)}$ is in fact a homomorphism from $H_i$ to $H_j$. Recall from \cref{obs:incompcores} that since $H_1 +\ldots + H_m$ is a core, its connected components are pairwise incomparable cores -- so $j$ must be equal to $i$. It means that $g|_{V(G')}$ is a homomorphism from $G'$ to $H_i$, so we conclude that $G \to H_i$. 

Note that the number of vertices of $G^*$ is $n + |H_i^k| -1 \leq |H_i^k| \cdot n$.
Now let $\left(\cT, \{X_a\}_{a \in V(\cT)}\right)$ be a path decomposition of $G$ of width $t$, and let $b$ be a node of $\cT$, such that $u \in X_b$.
Let $\cT^*$ be the path obtained from $\cT$ by inserting a new node $b'$ as the direct successor of $b$.
Define $X_{b'} := X_b \cup V(H_i^k)$. Clearly, $\left( \cT^*, \{X_a\}_{a \in V(\cT^*)}\right)$ is a path decomposition of $G^*$. This means that $\pw{G^*}\leq t+|H_i^k|$. The graph $H_i$ is fixed, so the number of vertices of $H_i^k$ is a constant.
By our assumption we can decide if $G^* \to H$ in time $\alpha^\pw{G^*}\cdot c \cdot |G^*|^d$, so we can decide if $G \to H_i$ in time $\alpha^\pw{G^*}\cdot c \cdot (|H_i^k|n)^{d} \leq \alpha^t \alpha^{|H_i^k|}\cdot c \cdot |H_i^k|^{d} n^{d}=\alpha^t \cdot c' \cdot n^{d}$, where $c'=c \cdot \alpha^{|H_i^k|} \cdot |H_i^k|^{d}$.
\end{proof}

Let us point out that the assumptions in \cref{obs:disconnected-not-fixed} (that $H$ is given as an input) and \cref{lem:disconnected-fixed} (that $H$ is fixed) correspond, respectively, to the assumptions in statements (a) and (b) of \cref{thm:projective} and \cref{thm:everything}.

\section{Lower bounds}
In this section we will investigate the lower bounds for the complexity of \homo{H}. The section is split into two main parts. In \cref{sec:proj} we consider projective cores. Then, in \cref{sec:nonproj}, we consider non-projective cores.

\subsection{Projective cores} \label{sec:proj}
The main result of this section is \cref{thm:projective}.
\projective*

Observe that \cref{thm:projective}~{(a)} follows from \cref{thm:algo}, so we need to show the hardness counterpart, i.e., the statement {(b)}.
A crucial building block in our reduction will be the graph called the \emph{edge gadget}, whose construction is described in the following lemma.

\begin{lemma}\label{lem:projective-gadget}
For every non-trivial projective core $H$, there exists a graph $F$ with two specified vertices $u^*$ and $v^*$, satisfying the following:
\begin{compactenum}[(a)]
\item for every $x,y\in V(H)$ such that $x\neq y$, there exists a homomorphism $f:F \to H$ such that $f(u^*)=x$ and $f(v^*)=y$,
\item for every $f: F \to H$ it holds that $f(u^*)\neq f(v^*)$. 
\end{compactenum}
\end{lemma}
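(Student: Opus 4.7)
The natural candidate for the gadget is $F := H^m$ for a suitably chosen $m$, with $u^*$ and $v^*$ designated as two vertices of $H^m$ whose coordinates we will choose. The motivation is that Observation~\ref{obs:composition} gives a clean classification of all homomorphisms $H^m \to H$ when $H$ is a projective core: every such homomorphism is of the form $g \circ \pi_i$, where $\pi_i$ is a projection and $g \in \mathrm{Aut}(H)$. Hence the entire behaviour of a homomorphism $f: F \to H$ on $u^* = (a_1, \ldots, a_m)$ and $v^* = (b_1, \ldots, b_m)$ is captured by the values $g(a_i), g(b_i)$ for some $i \in [m]$ and some automorphism $g$.

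With this in mind, the plan is to let $m$ be the number of orbits of the diagonal action of $\mathrm{Aut}(H)$ on the set $\{(x,y) : x,y \in V(H),\ x \neq y\}$ of ordered pairs of distinct vertices, and to pick one representative $(a_i,b_i)$ from each orbit. Setting $u^* := (a_1, \ldots, a_m)$ and $v^* := (b_1, \ldots, b_m)$ in $F := H^m$, we then verify the two required properties:

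\begin{compactenum}[(a)]
\item Given $x \neq y$ in $V(H)$, the pair $(x,y)$ lies in some orbit, say the $i$-th, so there exists $g \in \mathrm{Aut}(H)$ with $g(a_i) = x$ and $g(b_i) = y$. The composition $f := g \circ \pi_i$ is a homomorphism $F \to H$ (as a composition of a projection and an automorphism), and it satisfies $f(u^*) = g(a_i) = x$ and $f(v^*) = g(b_i) = y$, as required.
\item Given any homomorphism $f : F \to H$, Observation~\ref{obs:composition} provides $i \in [m]$ and an automorphism $g$ of $H$ with $f \equiv g \circ \pi_i$. Then $f(u^*) = g(a_i)$ and $f(v^*) = g(b_i)$. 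Since $a_i \neq b_i$ by construction and $g$ is a bijection, we conclude $f(u^*) \neq f(v^*)$.
\end{compactenum}

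The main conceptual step is recognising that projectivity reduces the set of possible homomorphisms $F \to H$ to a small, orbit-like family, and that covering every ordered pair $(x,y)$ with $x \neq y$ is then a finite orbit-counting task on $\mathrm{Aut}(H)$. Once the gadget $F = H^m$ is identified, the verifications are essentially bookkeeping. The only mild subtlety to check is that Observation~\ref{obs:composition} is applicable to our $m$: since $H$ is non-trivial (so $|V(H)| \geq 3$) there is at least one orbit of distinct pairs, giving $m \geq 1$; and if $m = 1$ one may either pad by duplicating the representative (so that $m \geq 2$ and Observation~\ref{obs:composition} applies literally), or invoke directly that endomorphisms of a core are automorphisms, giving the same conclusion.
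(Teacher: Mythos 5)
Your proof is correct and takes essentially the same approach as the paper: the gadget is a power of $H$ with $u^*,v^*$ chosen so that every coordinate pair is distinct and every ordered pair of distinct vertices is realised by some projection, and part (b) follows from the classification $f\equiv g\circ\pi_i$ of homomorphisms $H^m\to H$ (Observation~\ref{obs:composition}). The only difference is cosmetic: the paper uses one coordinate per ordered pair of distinct vertices (so $F=H^{k(k-1)}$), whereas you use one coordinate per $\mathrm{Aut}(H)$-orbit and absorb the rest into the automorphism $g$ in part (a), which merely yields a smaller gadget.
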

\begin{proof}
Let $V(H)=\{z_1,\dots,z_k\}$. For $i \in [k]$ denote by $z_i^{k-1}$ the $(k-1)$-tuple $(z_i, \dots, z_i)$ and by $\overline{z_i}$ the $(k-1)$-tuple $(z_1,\dots,z_{i-1},z_{i+1},\dots z_k)$. We claim that $F:=H^{(k-1)k}$ and vertices 
\[u^*:=(z_1^{k-1},\dots,z_k^{k-1}) \ \textrm{ and } \
v^*:=(\overline{z_1},\dots,\overline{z_k})\] satisfy the statement of the lemma. Note that the vertices of $F$ are $k(k-1)$-tuples.

To see that {(a)} holds, observe that if $x$ and $y$ are distinct vertices from $V(H)$, then there always exists $i\in[k(k-1)]$ such that $\pi_i(u^*)=x$ and $\pi_i(v^*)=y$. This means that $\pi_i$ is a homomorphism from $F=H^{k(k-1)}$ to $H$ satisfying $\pi_i(u^*)=x$ and $\pi_i(v^*)=y$.

To prove {(b)}, recall that since $H$ is projective, by \cref{obs:composition}, the homomorphism $f$ is a composition of some automorphism $g$ of $H$ and $\pi_i$ for some $i \in [k(k-1)]$. Observe that $u^*$ and $v^*$ are defined in a way such that $\pi_j(u^*)\neq \pi_j(v^*)$ for every $j \in [k(k-1)]$. As $g$ is an automorphism, it is injective, which gives us $f(u^*)=g(\pi_i(u^*))\neq g(\pi_i(v^*))=f(v^*)$.
\end{proof}

Finally, we are ready to prove \autoref{thm:projective} {(b)}.
The high-level idea is to start with an instance of \coloring{k}, where $k = |H|$, and replace each edge by the gadget constructed in~\cref{lem:projective-gadget}. Then the hardness will follow from \cref{thm:LMS}.
Actually, Lokshtanov, Marx, Saurabh~\cite{DBLP:journals/talg/LokshtanovMS18} proved the following, slightly stronger version of \cref{thm:LMS}.

\begin{cthm}{1'}[Lokshtanov, Marx, Saurabh~\cite{DBLP:journals/talg/LokshtanovMS18}] \label{thm:LMS-pw}
Let $k \geq 3$ be a fixed integer. Assuming the SETH, the \coloring{k} problem on a graph $G$ cannot be solved in time $\Oh^*\left((k-\epsilon)^{\pw{G}}\right)$ for any $\epsilon>0$. 
\end{cthm}

This allows us to prove a slightly stronger version of \autoref{thm:projective} {(b)}, where we consider the problem parameterized by the pathwidth of the instance graph.

\begin{cthm}{3' (b)} \label{thm:projective-lower-pw}
Let $H$ be a fixed non-trivial projective core on $k$ vertices. There is no algorithm solving the \homo{H} problem for instance graph $G$ in time $\Oh^*\left((k-\epsilon)^{\pw{G}}\right)$ for any $\epsilon >0$, unless the SETH fails.
\end{cthm}

\begin{proof}[Proof of \cref{thm:projective-lower-pw}]
Note that since $H$ is non-trivial, we have $k \geq 3$. Recall that since $H$ is projective, it is also connected~\cite{larose2001strongly}.
We reduce from \coloring{k}, let $G$ be an instance with $n$ vertices and pathwidth $t$. We construct an instance $G^*$ of \homo{H} as follows. First, for every $z \in V(G)$ we introduce a vertex $z'$ of $V(G^*)$. Let $V'$ denote the set of these vertices.
Now, for every edge $xy$ of $G$, we introduce to $G^*$ a copy of the edge gadget, constructed in \cref{lem:projective-gadget}, and denote it by $F_{xy}$. We identify the vertices $u^*$ and $v^*$ of $F_{xy}$ with vertices $x'$ and $y'$, respectively. This completes the construction of $G^*$.

We claim that $G$ is $k$-colorable if and only if $G^* \to H$.
Indeed, let $\phi$ be a $k$-coloring of $G$. For simplicity of notation, we label the colors used by $\phi$ in the same way as the vertices of $H$, i.e., $z_1,z_2,\dots, z_k$.
Define $g: V' \to V(H)$ by setting $g(v'):=\phi(v')$ Now consider an edge $xy$ of $G$ and the edge gadget $F_{xy}$.
Since $c$ is a proper coloring, we have $g(x')\neq g(y')$. So by \cref{lem:projective-gadget}~(a), we can find a homomorphism $f_{xy}:F_{xy} \to H$, such that $f_{xy}(x') = g(x')$ and $f_{xy}(y')=g(y')$. Repeating this for every edge gadget, we can extend $g$ to a homomorphism from $G^*$ to $H$.

Conversely, from \cref{lem:projective-gadget}~(b), we know that for any $f \colon G^* \to H$ and every edge $xy$ of $G$ it holds that $f(x') \neq f(y')$, so any homomorphism from $G^*$ to $H$ induces a $k$-coloring of $G$.

The number of vertices of $G^*$ is at most $|F|n^2$.
Now let $\cT$ be a path decomposition of $G$ of width $t$, denote its consecutive bags by $X_1,X_2,\ldots,X_m$. Let us extend it to a path decomposition of $G^*$. For each edge $xy$ of $G$ there exists $b \in [m]$ such that $x,y \in X_b$. We introduce a bag $X_{b'}:=X_b \cup V(F_{xy})$ as a direct successor of $X_b$. It is straightforward to observe that by repeating this step for every edge of $G$, we obtain a path decomposition of $G^*$ of width at most $t + |F|$. Recall that $H$ is fixed, so $|F|$ is a constant. So if we could decide if $G^*\to H$ in time $(k-\epsilon)^\pw{G^*}\cdot c \cdot |G^*|^d \leq (k-\epsilon)^{t + |F|} \cdot c \cdot |F|^d \cdot n^{2d}$, where $c$ and $d$ are some constants, then we would be able to decide if $G$ is $k$-colorable in time $(k-\epsilon)^t \cdot c' \cdot n^{d'}$ for constants $c'=c \cdot (k-\epsilon)^{|F|} \cdot |F|^d$ and $d' = 2d$. By \cref{thm:LMS-pw}, this contradicts the SETH.
\end{proof}

\subsection{Non-projective cores} \label{sec:nonproj}
Now we will focus on non-trivial connected cores, which are additionally non-projective, i.e., they do not satisfy the assumptions of \cref{thm:projective}. First, let us argue that the approach from \cref{sec:proj} cannot work in this case. In particular, we will show that an edge gadget with properties listed in \cref{lem:projective-gadget} cannot be constructed for non-projective graphs $H$.

We will need the definition of \emph{constructible sets}, see~Larose and Tardif~\cite{larose2001strongly}.
For a graph $H$, a set $C \subseteq V(H)$ is constructible if there exists a graph $K$, an $(\ell+1)$-tuple of vertices $x_0, \dots, x_\ell \in V(K)$ and an $\ell$-tuple of vertices $y_1, \dots, y_\ell \in V(H)$ such that
\[ \{y \in V(H) ~|~ \exists f: K \to H \textrm{ such that } f(x_i)=y_i \textrm{ for every }i\in[\ell] \textrm{ and } f(x_0)=y \}=C.\]
We can think of $C$ as the set of colors that might appear on the vertex $x_0$, when we precolor each $x_i$ with the color $y_i$ and try to extend this partial mapping to a homomorphism to $H$. The tuple $(K,x_0,\ldots,x_\ell,y_1,\ldots,y_\ell)$ is called a \emph{construction} of $C$.

It appears that the notion of constructible sets is closely related to projectivity.
\begin{theorem}[Larose, Tardif~\cite{larose2001strongly}]
A graph $H$ on at least three vertices is projective if and only if every subset of its vertices is constructible.
\end{theorem}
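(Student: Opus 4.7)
\emph{Plan for the proof of the characterization.}

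The statement has two directions, which I would prove separately. For the implication $(\Rightarrow)$, assume $H$ is projective with $|V(H)|=k\geq 3$, and let $C\subseteq V(H)$ be arbitrary. My plan is to first build constructions for the ``co-singletons'' $V(H)\setminus\{v\}$ and then obtain an arbitrary $C$ by intersection. To construct $V(H)\setminus\{v\}$, I would reuse (essentially verbatim) the edge-gadget construction from \cref{lem:projective-gadget}: let $K:=H^{k(k-1)}$ with distinguished vertices
\[
x_0:=(z_1^{k-1},\dots,z_k^{k-1}),\qquad x_1:=(\overline{z_1},\dots,\overline{z_k}),
\]
precolour $x_1$ to $v$, and argue via \cref{obs:composition} that any homomorphism $K\to H$ equals $g\circ\pi_j$ for some automorphism $g$ of $H$ and some coordinate $j$. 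Since $\pi_j(x_0)\neq\pi_j(x_1)$ in every coordinate $j$, injectivity of $g$ forces $f(x_0)\neq v$, and conversely every $y\neq v$ is realised by picking a suitable $j$ with $\pi_j(x_0)=y$ and $\pi_j(x_1)=v$.

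Once the co-singletons are constructible, I would observe that constructibility is closed under intersection: given constructions $(K^{(i)},x_0^{(i)},\ldots,y^{(i)}_\ell)$ of $C_1,C_2$, take the disjoint union of the $K^{(i)}$'s, identify $x_0^{(1)}$ with $x_0^{(2)}$, and keep all precolouring constraints. A homomorphism from this combined gadget to $H$ exists iff $x_0$ is assigned a value in $C_1\cap C_2$. Iterating over all $v\notin C$ yields $C=\bigcap_{v\notin C}(V(H)\setminus\{v\})$ as a constructible set.

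For the converse $(\Leftarrow)$, assume every subset of $V(H)$ is constructible and use \cref{thm:proj-char}: it suffices to show that every idempotent homomorphism $f\colon H^2\to H$ is a projection. Suppose for contradiction that $f$ is neither $\pi_1$ nor $\pi_2$; then there exist $(a,b),(a',b')\in V(H^2)$ with $f(a,b)\notin\{a,b\}$ and $f(a',b')\notin\{a',b'\}$ (otherwise $f$ would agree globally with one of the projections). The strategy is to build a rigid test graph exposing this anomaly. Take $K=H^2$, and for each vertex $w\in V(H)$ glue in a copy of the construction of the singleton $\{w\}$, identifying its free vertex $x_0^w$ with the ``diagonal'' vertex $(w,w)\in V(H^2)$; by idempotency this is consistent. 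Now add further copies of singleton-constructions to pin $f(a,b)$ and $f(a',b')$ to single values; the constructibility of every subset guarantees these gadgets exist and their constraints are mutually consistent. The resulting combined instance admits a homomorphism to $H$ (namely $f$), yet the constraints on $(a,b)$ and $(a',b')$ are incompatible with the diagonal constraints unless $f$ agrees with some $\pi_i$ at $(a,b)$ and the same $\pi_i$ at $(a',b')$.

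\emph{Main obstacle.} The $(\Rightarrow)$ direction is essentially a bookkeeping adaptation of \cref{lem:projective-gadget} plus closure under intersection, and I expect it to go through cleanly. The genuinely delicate part is $(\Leftarrow)$: one must convert the combinatorial ``definability'' offered by constructibility into the algebraic statement that \emph{no} non-projection idempotent homomorphism $H^m\to H$ can exist. The crux is to design a single test instance whose consistent homomorphic images into $H$ encode the idempotency of a hypothetical $f$ on all diagonal points simultaneously, and to show that the pp-style constraints collapse any such $f$ onto a single coordinate projection. Getting the gadgets to interact consistently (so that the combined instance actually maps to $H$ under $f$) and to rule out every non-projection (not just a specific witness) is where the proof requires the most care.
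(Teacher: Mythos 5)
First, a point of order: the paper does not prove this statement at all --- it is quoted from Larose and Tardif~\cite{larose2001strongly} --- so there is no in-paper proof to compare yours against, and I can only assess your sketch on its own terms. Your forward direction is essentially sound \emph{for projective cores}: precolouring the vertex $v^*$ of the gadget of \cref{lem:projective-gadget} with $v$ does construct $V(H)\setminus\{v\}$, and closure of constructible sets under intersection (take disjoint copies of the two constructions and identify their $x_0$'s) is correct. But the theorem is stated for arbitrary projective graphs, and both \cref{lem:projective-gadget} and \cref{obs:composition} use the core hypothesis essentially: for a projective non-core such as the graph of \cref{fig:nocore}, the diagonal map $x\mapsto f(x,\dots,x)$ is only an endomorphism, $f$ need not factor as $g\circ\pi_i$ with $g$ injective, and property (b) of the gadget genuinely fails (compose a projection with a retraction that identifies two vertices occupying the same coordinate of $u^*$ and $v^*$). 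So your argument establishes the forward implication only in the special case the paper actually uses, not the statement as written.

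The backward direction is where the real gap lies. The opening step is already invalid: from ``$f$ is neither $\pi_1$ nor $\pi_2$'' you cannot conclude that $f(a,b)\notin\{a,b\}$ at some pair --- a \emph{conservative} non-projection, i.e.\ one with $f(x,y)\in\{x,y\}$ everywhere but switching between coordinates, satisfies neither alternative of your dichotomy, and ruling out exactly such maps is the hard content of the theorem. The gadget-gluing that follows never exhibits a mechanism for a contradiction; the concluding sentence (``the constraints \dots are incompatible \dots unless $f$ agrees with some $\pi_i$'') restates the goal. The step you are missing is the preservation lemma: if $C$ is constructible via $(K,x_0,\dots,x_\ell,y_1,\dots,y_\ell)$, if $f:H^m\to H$ is idempotent, and if $h_1,\dots,h_m:K\to H$ witness $c_1,\dots,c_m\in C$, then $x\mapsto f\bigl(h_1(x),\dots,h_m(x)\bigr)$ is a homomorphism that still fixes the precoloured vertices (by idempotency), so $f(c_1,\dots,c_m)\in C$. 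With all subsets constructible this shows every idempotent polymorphism is conservative --- which is still strictly weaker than being a projection; upgrading conservativity to projectivity requires a further graph-theoretic argument (this is where Larose and Tardif do the real work), and one must also justify that $H$ is connected before \cref{thm:proj-char} may be invoked. As written, the backward direction is not a proof.
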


Now we show that \cref{lem:projective-gadget} cannot work for non-projective graphs $H$.

\begin{proposition}
Let $H$ be a fixed non-trivial connected core.
Then an edge gadget $F$ with properties listed in \cref{lem:projective-gadget}  exists if and only if $H$ is projective.
\end{proposition}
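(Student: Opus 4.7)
The ``$\Rightarrow$'' direction is exactly \cref{lem:projective-gadget}, so the work lies in the converse. My plan is to show that the existence of an edge gadget forces every subset of $V(H)$ to be constructible, and then invoke the characterization of Larose and Tardif to conclude that $H$ is projective (note that, being a non-trivial core, $H$ automatically has $|V(H)|\geq 3$ by \cref{obs:trivial}, so the hypothesis of their theorem is satisfied).

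The first step is to read off, for each vertex $x\in V(H)$, a construction of the set $V(H)\setminus\{x\}$ directly from the gadget. I would take $K:=F$, let $x_0:=v^*$ play the role of the ``free'' vertex, and precolor the single vertex $x_1:=u^*$ with the color $y_1:=x$. By property (a) of \cref{lem:projective-gadget}, every $y\neq x$ is attainable as the image of $v^*$ under some homomorphism $F\to H$ sending $u^*$ to $x$; by property (b), the value $x$ itself is unattainable. Hence the construction $(F,v^*,u^*,x)$ witnesses that $V(H)\setminus\{x\}$ is constructible.

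The second step is the standard closure of constructible sets under intersection. Given constructions $(K_i,x_0^{(i)},\bar x^{(i)},\bar y^{(i)})$ of $C_i$ for $i=1,2$, take the disjoint union of $K_1$ and $K_2$ and identify $x_0^{(1)}$ with $x_0^{(2)}$ into a single vertex $x_0$, keeping the remaining precolorings unchanged. A homomorphism from the glued graph to $H$ respecting all precolorings and sending $x_0$ to $y$ exists if and only if a separate such homomorphism exists on each side, so this object constructs $C_1\cap C_2$.

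Combining these two observations, for any $S\subseteq V(H)$ the set
\[
V(H)\setminus S \;=\; \bigcap_{x\in S}\bigl(V(H)\setminus\{x\}\bigr)
\]
is constructible. Since every subset $T\subseteq V(H)$ can be written as $V(H)\setminus(V(H)\setminus T)$, every subset of $V(H)$ is constructible, and the Larose--Tardif characterization then yields that $H$ is projective. The only step requiring any care is the intersection lemma, but this is a routine disjoint-union-and-identify argument; the genuinely interesting content is the observation that the edge gadget is already, by itself, a construction of the maximal proper subsets $V(H)\setminus\{x\}$.
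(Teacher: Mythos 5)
Your proof is correct and is essentially the paper's argument in modular form: the paper directly builds a construction of an arbitrary $C\subseteq V(H)$ by gluing $|V(H)|-|C|$ copies of $F$ at one distinguished vertex and precoloring the other ends with the complement of $C$, which is exactly the graph your ``co-singleton plus intersection-closure'' decomposition produces (up to swapping the roles of $u^*$ and $v^*$). Both arguments then conclude via the same Larose--Tardif characterization, so there is no substantive difference.
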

\begin{proof}
The `if' statement follows from \cref{lem:projective-gadget}. Let $k:=|H|$ and suppose that there exists a graph $F$ with properties given in \cref{lem:projective-gadget}.
Consider a set $C \subseteq V(H)$ and define $\ell := |C|$. Let $\{y_1, \dots, y_{k-\ell}\}$ be the complement of $C$ in $V(H)$. Take $k-\ell$ copies of $F$, say $F_1, \dots ,F_{k-\ell}$ and denote the vertices $u^*$ and $v^*$ of the $i$-th copy $F_i$  by $u^*_i$ and $v^*_i$, respectively. Identify the vertices $u^*_i$ of all these copies, denote the obtained vertex by $u^*$, and the obtained graph by $K$.
Now set $x_0 := u^*$ and for each $i \in [k-\ell]$ set $x_i:=v^*_i$.

It is easy to verify that this is a construction of the set~$C$. Indeed, observe that if $x \in C$, then, from \cref{lem:projective-gadget}~(a), for each copy $F_i$ there exists a homomorphism $f_i:F_i \to H$ such that $f_i(v^*_i)=f_i(x_i)=y_i$ and $f_i(u^*)=f_i(x_0)=x$. Combining these homomorphisms yields a homomorphism $f: K \to H$. On the other hand, if $x \not\in C$, then $x=y_i$ for some $i \in [k-\ell]$. But from \cref{lem:projective-gadget}~(b) we know that for every homomorphism $f \colon F_i \to H$ it holds that $x=y_i=f(v^*_i)\neq f(u^*)=f(x_0)$, so $x_0$ cannot be mapped to $x$ by any extension to a homomorphism from $K$ to $H$. 
\end{proof}

Observe that if $H$ is projective, then it must be indecomposable. Indeed, assume that for some non-trivial $H$ it holds that $H = H_1 \times H_2$, $H \not\simeq K_1^*$ and $H_2 \not\simeq K_1^*$. Consider a homomorphism $f: (H_1 \times H_2)^2 \to H_1 \times H_2$, defined as $f((x,y),(x',y'))=(x,y')$. Note that it is idempotent, but not a projection, so $H$ is not projective. 

In the light of the observation above, it is natural to ask whether indecomposability implies projectivity.
This problem was already stated e.g. by Larose and Tardif~\cite[Problem 2]{larose2001strongly} and, to the best of our knowledge, no significant progress in this direction was made. Let us recall it here.


\conprojprime*

Since we know no connected non-trivial non-projective cores that are indecomposable, in the remainder of the section we will assume that $H$ is a decomposable, non-trivial connected core. By \cref{thm:unique-fact} we know that $H$ has a unique prime factorization $H_1 \times \ldots \times H_m$ for some $m \geq 2$.
To simplify the notation, for any given homomorphism $f:G \to H_1 \times \ldots \times H_m$ and $i \in [m]$, we define $f_i \equiv\pi_i \circ f$. Then for each vertex $x$ of $G$ it holds that \[f(x)=(f_1(x),\ldots,f_m(x)),\]
and $f_i$ is a homomorphism from $G$ to $H_i$.

The following observation follows from \cref{obs:projprop}.
\begin{observation}\label{obs:necessary-to-core}
Let $H$ be a connected, non-trivial core with factorization $H=H_1 \times  \ldots \times H_m$, such that $H_i \not\simeq K_1^*$ for all $i \in [m]$.
Then for $i \in [m]$ the graph $H_i$ is a connected non-trivial core, incomparable with $H_j$ for $j \in [m] \setminus \{i\}$.{\hfill$\square$\smallskip}
\end{observation}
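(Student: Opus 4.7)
The plan is to verify three properties for each factor $H_i$ separately: connectedness, being a non-trivial core, and incomparability with the remaining factors. A common theme in all three arguments is to lift a structural failure of $H_i$ (disconnectedness, a non-identity endomorphism, or a homomorphism into another factor) to an analogous failure of $H$, contradicting the hypotheses that $H$ is connected and a non-trivial core.

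For connectedness, I would use that the direct product distributes over disjoint union: directly from the definition of the edge set of a direct product one checks $G \times (A+B) = (G\times A) + (G\times B)$. Iterating over the other factors shows that if some $H_i$ were disconnected, then $H$ would be disconnected too, contradicting the connectedness of $H$. For non-triviality of $H_i$, the case $H_i \simeq K_1^*$ is excluded by hypothesis, and two further cases must be ruled out: if $H_i \simeq K_1$ then every edge of $H$ would have to project to an edge of $K_1$, forcing $H$ to be edgeless; if $H_i \simeq K_2$ then the projection $\pi_i \colon H \to H_i$ would witness $H \to K_2$ and force $H$ to be bipartite. Both contradict $H$ being a non-trivial core, since non-trivial cores have at least three vertices and are therefore non-bipartite with at least one edge.

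For the core property, I would argue by contrapositive: suppose $\phi \colon H_i \to H_i$ is an endomorphism that is not an automorphism, and define $\psi \colon H \to H$ by $\psi(x_1, \dots, x_m) := (x_1, \dots, \phi(x_i), \dots, x_m)$, acting as $\phi$ in the $i$-th coordinate and as the identity elsewhere. Edge preservation is immediate from the definition of the direct product. Since on a finite graph an endomorphism is an automorphism if and only if it is bijective, non-injectivity of $\phi$ transfers directly to $\psi$, contradicting that $H$ is a core. For incomparability, suppose $\phi \colon H_i \to H_j$ exists for some $i \neq j$, and now define $\psi(x_1, \dots, x_m) := (x_1, \dots, x_{j-1}, \phi(x_i), x_{j+1}, \dots, x_m)$, which replaces the $j$-th coordinate by $\phi(x_i)$ and discards the original $x_j$. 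The same edge-preservation check works. The critical observation is that $\psi$ no longer depends on $x_j$, so varying $x_j$ over its at least two possible values (using that $H_j$ has already been shown to be non-trivial) leaves the output unchanged, so $\psi$ is not injective. Hence $H$ is not a core, a contradiction. Swapping the roles of $i$ and $j$ then yields full incomparability.

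The step that requires the most care is the incomparability argument, where the definition of $\psi$ is genuinely asymmetric in $i$ and $j$, and one must verify both that the constructed tuple lies in $V(H)$ and that discarding the $j$-th coordinate produces an honest non-injection. The proofs of the other two properties are essentially bookkeeping with the definition of the direct product together with \cref{obs:projprop}.
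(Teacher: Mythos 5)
Your proof is correct. The paper itself gives no argument for this observation --- it is stated with a closing $\square$ and the one-line remark that it ``follows from \cref{obs:projprop}'' --- so your write-up is a legitimate filling-in of details rather than a divergence from a written proof. Your three arguments are the natural ones: distributivity of $\times$ over disjoint union for connectedness (the paper's hint suggests the equally short alternative that $H_i$ is the image of the connected graph $H$ under the surjective projection $\pi_i$, or an appeal to \cref{obs:projprop}(d)); ruling out $K_1$ and $K_2$ factors via edgelessness and bipartiteness (which is exactly what \cref{obs:projprop}(a) and (b) encode, since the core of $H$ would then be $K_1$ or $K_2$); and lifting a non-injective endomorphism of a factor, respectively a homomorphism $H_i \to H_j$, to a non-injective endomorphism of $H$. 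The two points that genuinely need care --- that ``not an automorphism'' reduces to ``not injective'' for endomorphisms of a finite graph, and that $\psi$ in the incomparability step is non-injective because it is constant in the $j$-th coordinate while $|V(H_j)| \geq 2$ --- are both handled correctly, and you correctly order the argument so that non-triviality of $H_j$ is available before it is used.
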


Now let us consider the complexity of \homo{H}, where $H$ has a prime factorization $H_1 \times H_2 \times \ldots H_m$ for $m \geq 2$. By \cref{thm:non-prime}, the problem can be solved in time $\Oh\left(|H|^4+\max_{j \in [m]}|H_j|^{t+1}\cdot n\right)$, where $n$ and $t$ are, respectively, the number of vertices and the width of a given tree decomposition of the input graph.
We believe that this bound is tight, and we prove a matching lower bound, up to the polynomial factor, under some additional assumption.

We say that a graph $H$ is \emph{truly projective} if it has at least three vertices and for every $s \geq 2$ and every connected core $W$ incomparable with $H$, it holds that if $f : H^s \times W \to H$ satisfies $f(x,x,\dots,x,y)=x$ for all $x \in V(H), y \in V(W)$, then $f$ is a projection.

It is easy to verify that truly projective graphs are projective. Indeed, by  \cref{thm:proj-char}, we need to show that any idempotent homomorphism $g \colon H^2 \to H$ is a projection. Consider a core $W$, which is incomparable with $H$,  and a homomorphism $f \colon H^2 \times W \to H$, defined by $f(x_1,x_2,y) := g(x_1,x_2)$. Since $H$ is truly projective, $f$ is a projection, and so is $g$.

Again, we state and prove the stronger version of a lower bound, parameterized by the pathwidth of an instance graph.

\begin{theorem}\label{thm:nonprojective-lower-h-pw}
Let $H$ be a fixed non-trivial connected core, with prime factorization $H_1 \times \ldots \times H_m$. 
Assume there exists $i \in [m]$ such that $H_i$ is truly projective.
Unless the SETH fails, there is no algorithm solving the \homo{H} problem for instance graph $G$ in time $\Oh^*\left((|H_i|-\epsilon)^\pw{G}\right)$, for any $\epsilon >0$.
\end{theorem}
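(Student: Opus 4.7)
The plan is to reduce from \coloring{|H_i|} to \homo{H} and invoke \cref{thm:LMS-pw}. If $m=1$ then $H=H_i$ is truly projective, hence projective, so \cref{thm:projective-lower-pw} already yields the claimed bound; we therefore assume $m \geq 2$. The key technical ingredient will be an edge gadget of constant size: a graph $F$ with two distinguished vertices $u^*, v^*$ satisfying
\begin{compactenum}[(a)]
\item for every pair $x \neq y$ in $V(H_i)$ there exists a homomorphism $f \colon F \to H$ with $f_i(u^*) = x$, $f_i(v^*) = y$, and $f_j(u^*) = f_j(v^*)$ for every $j \neq i$, and
\item every homomorphism $f \colon F \to H$ satisfies $f_i(u^*) \neq f_i(v^*)$.
\end{compactenum}
Given such an $F$, the reduction mimics that of \cref{thm:projective-lower-pw}: for an instance $G$ of \coloring{|H_i|}, replace every edge $xy$ of $G$ with a private copy of $F$ identifying $u^*$ with $x$ and $v^*$ with $y$, yielding $G^*$. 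The equivalence ``$G$ is $|H_i|$-colorable iff $G^* \to H$'' follows from (a) and (b), the common non-$i$-coordinate image guaranteed by (a) letting the individual gadget homomorphisms be glued consistently at shared endpoints in $V(G)$. Since $H$ is fixed, $|F|$ is constant, and the standard bag augmentation gives $\pw{G^*} \leq \pw{G} + |F|$; an $\Ohs((|H_i|-\epsilon)^{\pw{G^*}})$ algorithm for \homo{H} would then contradict \cref{thm:LMS-pw} (which applies since $|H_i| \geq 3$).

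I will set $F := H_i^s \times W$ with $s := |H_i|(|H_i|-1)$, where $W$ is the core of $H' := \prod_{j \neq i} H_j$. Weichsel's theorem (\cref{obs:projprop}\eqref{obs:connected}) together with the connectedness of $H$ gives that $H'$, and hence its core $W$, is connected; $W$ is loopless because each $H_j$ is. Using \cref{obs:necessary-to-core} and \cref{obs:projprop}\eqref{obs:homprod}, $H_i \not\to H'$ (since $H_i \not\to H_j$ for every $j \neq i$); a size argument rules out $H' \to H_i$ as well, because combining such a map with the projections $H' \to H_j$ would give $H' \to H$, making $H$ and $H'$ hom-equivalent, whereas their cores $H$ and $W$ satisfy $|W| \leq |H'| < |H_i| \cdot |H'| = |H|$. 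Hence $W$ is a connected core incomparable with $H_i$. The distinguished vertices will be $u^* := (u_i^*, w_0)$ and $v^* := (v_i^*, w_0)$ for an arbitrary $w_0 \in V(W)$, where $(u_i^*, v_i^*) \in V(H_i^s)^2$ is the pair built in \cref{lem:projective-gadget} applied to $H_i$: the two $s$-tuples differ in every coordinate, and every ordered pair of distinct elements of $V(H_i)$ appears as $(\pi_l(u_i^*), \pi_l(v_i^*))$ for some $l \in [s]$. Property (a) is then immediate on choosing the appropriate $l$, setting $f_i(a,w) := \pi_l(a)$, and for $j \neq i$ setting $f_j(a,w) := \rho_j(w)$, where $\rho_j \colon W \to H_j$ is the composition $W \hookrightarrow H' \xrightarrow{\pi_j} H_j$.

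The main obstacle is property (b); it will follow from the structural claim that every homomorphism $g \colon H_i^s \times W \to H_i$ has the form $g = \alpha \circ \pi_l$ for some automorphism $\alpha$ of $H_i$ and some $l \in [s]$. Indeed, applied to $g := f_i$, the claim yields $f_i(u^*) = \alpha(\pi_l(u_i^*)) \neq \alpha(\pi_l(v_i^*)) = f_i(v^*)$, since $\alpha$ is injective and $u_i^*, v_i^*$ differ coordinate-wise. To prove the claim, restrict $g$ along the diagonal embedding to obtain $h \colon H_i \times W \to H_i$, $h(x, y) := g(x, \dots, x, y)$; for each fixed $y$, the map $\alpha_y := h(\cdot, y)$ is an endomorphism of the core $H_i$, hence an automorphism. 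For adjacent $y, y' \in V(W)$ and every edge $xx' \in E(H_i)$, the fact that $h$ is a homomorphism gives $\alpha_y(x)\alpha_{y'}(x') \in E(H_i)$; applying $\alpha_y^{-1}$ and setting $\gamma := \alpha_y^{-1} \circ \alpha_{y'}$ rewrites this as $x\,\gamma(x') \in E(H_i)$, whence $N_{H_i}(x') \subseteq N_{H_i}(\gamma(x'))$ for every $x'$. Since $\gamma$ is an automorphism the two neighborhoods have equal cardinality, so they coincide, and ramification of the core $H_i$ forces $\gamma = \id$, giving $\alpha_y = \alpha_{y'}$; connectedness of $W$ promotes this to a single automorphism $\alpha$ with $\alpha_y = \alpha$ for every $y \in V(W)$. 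Finally, $\alpha^{-1} \circ g \colon H_i^s \times W \to H_i$ satisfies $(\alpha^{-1} \circ g)(x, \dots, x, y) = x$ for all $x, y$, and since $s \geq 2$ and $W$ is a connected core incomparable with $H_i$, the assumption that $H_i$ is truly projective forces $\alpha^{-1} \circ g$ to equal some projection $\pi_l$, establishing the claim.
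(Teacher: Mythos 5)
Your overall architecture (a constant-size edge gadget of the form $H_i^s\times W$, edge replacement, pathwidth bookkeeping, and a final appeal to true projectivity) matches the paper's, and the reduction itself, the incomparability of $W$ and $H_i$, and the derivation of property (b) from your structural claim are sound. The gap is inside the proof of the structural claim, at the sentence ``for each fixed $y$, the map $\alpha_y:=h(\cdot,y)$ is an endomorphism of the core $H_i$.'' This does not follow from $h$ being a homomorphism: since $W$ is loopless, the vertices $(x,y)$ and $(x',y)$ are \emph{never} adjacent in $H_i\times W$, so $h$ imposes no constraint relating $h(x,y)$ and $h(x',y)$ for an edge $xx'$ of $H_i$. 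Fixed-coordinate slices of homomorphisms out of a direct product genuinely fail to be endomorphisms in general; for instance $C_5\times K_2\simeq C_{10}$ admits a ``folding'' homomorphism onto $C_5$ whose restriction to one fiber sends an edge of $C_5$ to a repeated vertex. So some hypothesis must be invoked at this point, and you invoke none; everything downstream (the ramification argument showing $\alpha_y=\alpha_{y'}$, and the application of true projectivity to $\alpha^{-1}\circ g$) presupposes that each $\alpha_y$ is an automorphism.

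The missing ingredient is precisely the paper's key step. Observe that $(h,\pi_W)\colon H_i\times W\to H_i\times W$ \emph{is} an endomorphism, and that $H_i\times W$ is in fact the core $H$ itself: your own counting argument forces $W=\prod_{j\neq i}H_j$, since otherwise the core of $H_i\times W$, which is $H$, would have more vertices than $H_i\times W$. Hence $(h,\pi_W)$ is an automorphism of $H$, and D\"orfler's theorem (\cref{thm:auth}, via \cref{col:auto}) splits it coordinatewise, yielding $h(x,y)=\alpha(x)$ for a single automorphism $\alpha$ of $H_i$. With that substitution your argument goes through (and your ramification step becomes unnecessary); without it, the central claim --- and therefore property (b) of your gadget --- is unproven.
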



The proof of \cref{thm:nonprojective-lower-h-pw} is similar to the proof of \cref{thm:projective-lower-pw}. We start with constructing an appropriate edge gadget. We will use the following result (to avoid introducing new definitions, we state the theorem in a sightly weaker form, using the terminology used in this paper, see also~\cite[Theorem 8.18]{hammack2011handbook}).

\begin{theorem}[D{\"o}rfler, \cite{dorfler1974primfaktorzerlegung}]\label{thm:auth}
Let $\varphi$ be an automorphism of a connected, non-bipartite, ramified graph $H$, with the prime factorization $H_1 \times \ldots \times H_m$. Then for each $i \in [m]$ there exists an automorphism $\varphi^{(i)}$ of $H_i$ such that $\varphi_i(t_1,\dots, t_m)\equiv \varphi^{(i)}(t_i).$
\end{theorem}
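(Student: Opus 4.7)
The plan is to derive the claim from McKenzie's unique prime factorization theorem (Theorem~\ref{thm:unique-fact}). The automorphism $\varphi$ furnishes a second prime factorization of $H$ by transporting the original one through $\varphi$; uniqueness then forces the two factorizations to coincide, and by tracking what this means on vertices we recover the coordinate-wise form.

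Concretely, I would associate to the factorization $H = H_1 \times \cdots \times H_m$ the system of factor congruences $\theta_1, \ldots, \theta_m$ on $V(H)$ defined by $u \,\theta_i\, v$ iff $u$ and $v$ agree in every coordinate except the $i$-th. I would then push these forward by $\varphi$, setting $\theta_i' := \{(\varphi(u), \varphi(v)) : (u,v) \in \theta_i\}$. Because $\varphi$ is a graph isomorphism, the quotients $H/\theta_i'$ carry the same structure as $H/\theta_i$, so $(\theta_1', \ldots, \theta_m')$ is again a system of factor congruences defining a prime factorization $H \cong H_1' \times \cdots \times H_m'$ with $H_i' \cong H_i$ for every~$i$.

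Applying Theorem~\ref{thm:unique-fact} to these two prime factorizations of the same graph yields a permutation $\sigma \in S_m$ and isomorphisms aligning the factors. The hypotheses that $H$ is connected, non-bipartite, and ramified are exactly what is needed here: connectedness and non-bipartiteness validate the application of McKenzie's theorem, while the ramified assumption removes ``twin'' ambiguity in the factors so that the labeling is rigid and $\sigma$ can be absorbed into the identification of $H_i'$ with $H_i$. Unpacking the resulting identification on vertices, $\varphi$ acts as $(t_1,\dots,t_m) \mapsto (\varphi^{(1)}(t_1), \dots, \varphi^{(m)}(t_m))$ with each $\varphi^{(i)}$ an automorphism of $H_i$, which is precisely the claim $\varphi_i(t_1,\dots,t_m) \equiv \varphi^{(i)}(t_i)$.

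The main obstacle is the verification that the pushed-forward congruences $\theta_i'$ do in fact constitute a factor congruence system for a direct-product decomposition, rather than just an abstract equivalence partition of $V(H)$. This requires checking that $\theta_i'$ is compatible with the edge relation in the precise sense that the quotient by $\bigcap_{j \neq i}\theta_j'$ is a graph whose product over $i$ reconstructs $H$; the ramification hypothesis is crucial here, since without it the edge relation of $H$ does not determine the factor structure uniquely and one could have nontrivial endomorphisms permuting parallel coordinates. Once this step is in hand, McKenzie's uniqueness delivers the rest.
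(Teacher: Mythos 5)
The paper does not prove \cref{thm:auth} at all: it is imported verbatim from D\"orfler's 1974 paper (see also Theorem 8.18 in the Handbook of Product Graphs cited there), so there is no internal proof to compare yours against; what follows assesses your argument on its own terms.

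Your plan---transport the factor congruences through $\varphi$ and invoke uniqueness of the prime factorization---is the standard route, but there is a genuine gap where you invoke \cref{thm:unique-fact}. That theorem, as stated, gives uniqueness of the prime factorization only \emph{up to isomorphism and reordering of the factors}: from the two factorizations $H \cong H_1\times\cdots\times H_m$ and $H\cong H_1'\times\cdots\times H_m'$ you may conclude $H_i'\cong H_{\sigma(i)}$ for some permutation $\sigma$, and nothing more. This says nothing about how $\varphi$ interacts with coordinates. To conclude that $\varphi$ carries each factor congruence $\theta_i$ onto one of the original $\theta_j$, you need the much stronger statement that the \emph{set of prime factor congruences itself} (equivalently, the coordinatization of $V(H)$) is uniquely determined by $H$. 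That is precisely where connectedness, non-bipartiteness and ramification all do their work, and it is essentially the entire content of D\"orfler's theorem; it does not follow from isomorphism-level uniqueness. (Compare: finite abelian groups factor uniquely into indecomposables up to isomorphism, yet their automorphisms generally do not act coordinatewise.) Your final paragraph flags a difficulty, but the wrong one: checking that the transported relations $\theta_i'$ form a factor congruence system is immediate by transport of structure along the isomorphism $\varphi$; what is missing is the \emph{uniqueness} of that system. A secondary problem: even granting unique coordinatization, you only obtain $\theta_i'=\theta_{\sigma(i)}$, so $\varphi$ may permute isomorphic prime factors and $\varphi_i$ then depends on $t_{\sigma(i)}$ rather than $t_i$. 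The claim that ``$\sigma$ can be absorbed'' is false when two prime factors are isomorphic---the coordinate swap on $C_5\times C_5$ (a connected, non-bipartite, ramified graph with two isomorphic prime factors) is a counterexample, and in fact shows that the permutation must appear in the conclusion, as it does in D\"orfler's actual statement. It disappears in the paper's application (\cref{col:auto}) only because the prime factors of a core are pairwise incomparable, hence pairwise non-isomorphic; any complete proof must either carry the permutation or make that extra hypothesis explicit.
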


In particular, it implies the following.
\begin{corollary}\label{col:auto}
Let $\mu$ be an automorphism of a connected, non-trivial core $H =H_1 \times R$, where $H_1$ is indecomposable and $R \not\simeq K_1^*$.
Then there exist automorphisms $\mu^{(1)}:H_1 \to H_1$ and $\mu^{(2)}:R \to R$ such that $\mu(t,t')\equiv(\mu^{(1)}(t),\mu^{(2)}(t'))$.
\end{corollary}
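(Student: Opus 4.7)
The plan is to reduce the statement to \cref{thm:auth} (Dörfler's theorem) applied to the full prime factorization of $H$, and then regroup the coordinates into the pair $(H_1, R)$. First I would confirm that \cref{thm:auth} applies to $H$: it is connected by hypothesis; it is non-bipartite because any non-trivial core fails to be bipartite (a bipartite graph with at least one edge has $K_2$ as its core by \cref{obs:trivial}, so cannot be a non-trivial core); and it is ramified because every core is ramified. Moreover, since $H$ has no loops, we have $H_1 \not\simeq K_1^*$, and by assumption $R \not\simeq K_1^*$.

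Next I would show that $R$ itself admits a prime factorization $R = R_1 \times \ldots \times R_{m-1}$ with $m \geq 2$ (taking $m=2$, $R_1 := R$ if $R$ is indecomposable). To invoke \cref{thm:unique-fact} on $R$, I need $R$ to be connected and non-bipartite. Connectedness: if $R = R' + R''$ were disconnected, then directly from the definition of the direct product, $H_1 \times R = (H_1 \times R') + (H_1 \times R'')$, contradicting connectedness of $H$. Non-bipartiteness: if $R$ were bipartite then $R \to K_2$, hence $H = H_1 \times R \to H_1 \times K_2$, whose core is $K_2$ by \cref{obs:projprop}~\eqref{obs:ktwo}, forcing $H$ to be bipartite --- a contradiction. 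Thus $H = H_1 \times R_1 \times \ldots \times R_{m-1}$ is a prime factorization of $H$, and by uniqueness (\cref{thm:unique-fact}) it coincides with the unique prime factorization.

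Now I apply \cref{thm:auth} to this factorization: there exist automorphisms $\varphi^{(0)}$ of $H_1$ and $\varphi^{(j)}$ of $R_j$ for $j \in [m-1]$ such that, writing a vertex of $H$ as the flattened tuple $(t, t_1, \ldots, t_{m-1})$, the coordinates of $\mu$ satisfy $\mu_0(t, t_1, \ldots, t_{m-1}) = \varphi^{(0)}(t)$ and $\mu_j(t, t_1, \ldots, t_{m-1}) = \varphi^{(j)}(t_j)$. I then set $\mu^{(1)} := \varphi^{(0)}$ and $\mu^{(2)}(t_1, \ldots, t_{m-1}) := (\varphi^{(1)}(t_1), \ldots, \varphi^{(m-1)}(t_{m-1}))$; the latter is an automorphism of $R$ as the coordinatewise product of automorphisms of its prime factors. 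Identifying $(t,t')$ with $(t, t_1, \ldots, t_{m-1})$ then yields $\mu(t,t') = (\mu^{(1)}(t), \mu^{(2)}(t'))$, as required. There is no real obstacle here --- the content is entirely in verifying that \cref{thm:auth}'s hypotheses hold for $H$ and that the prime factors of $R$, together with $H_1$, form the unique prime factorization of $H$.
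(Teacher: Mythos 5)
Your proof is correct and follows essentially the same route as the paper's: both apply D\"orfler's theorem (\cref{thm:auth}) to the unique prime factorization of $H$ obtained by refining $R$ into its prime factors, and then regroup the resulting coordinate-wise automorphisms into $\mu^{(1)}$ and $\mu^{(2)}$. The only cosmetic difference is that the paper obtains connectedness and non-bipartiteness of $R$ (hence the existence of its unique prime factorization) by citing \cref{obs:necessary-to-core}, whereas you re-derive these facts directly.
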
 
\begin{proof}
By \cref{obs:necessary-to-core}, $R$ is a non-trivial core, so it admits the unique prime factorization, say $R=H_2 \times \ldots \times H_m$. Therefore $H_1 \times H_2 \times \ldots \times H_m$ is the unique prime factorization of $H$.
From \cref{thm:auth} we know that for each $i \in [m]$ there exists an automorphism $\varphi^{(i)}$ of $H_i$ such that $\mu(t_1,\dots, t_m)\equiv (\varphi^{(1)}(t_1),\dots,\varphi^{(m)}(t_m))$.
Define $\mu^{(1)}$ by setting $\mu^{(1)}(t):=\varphi^{(1)}(t)$ for every vertex $t \in V(H_1)$.
Analogously, we define $\mu^{(2)}$ by setting $\mu^{(2)}(t_2,\ldots,t_m):=(\varphi^{(2)}(t_2), \ldots, \varphi^{(m)}(t_m))$ for every vertex $(t_2,\ldots,t_m)$ of $R$ (for each $i \in [m] \setminus \{1\}$ we have $t_i \in V(H_i)$).
It is straightforward to verify that $\mu^{(1)}$ and $\mu^{(2)}$ satisfy the statement of the corollary.
\end{proof}


In the following lemma we construct an edge gadget, that will be used in the hardness reduction. The construction is similar to the one in \cref{lem:projective-gadget}, but more technically complicated.

\begin{lemma}\label{lem:nonprojective-gadget}
Let $H=H_1 \times R$ be a connected, non-trivial core, such that $H_1$ is truly projective and $R \not\simeq K_1^*$. Let $w$ be a fixed vertex of $R$. Then there exists a graph $F$ and vertices $u^*, v^*$ of $F$, satisfying the following conditions:
\begin{compactenum}[(a)]
\item for every $xy \in E(H_1)$ there exists $f: F \to H$ such that $f(u^*)=(x,w)$ and $f(v^*)=(y,w)$,
\item for any $f: F \to H$ it holds that $f_1(u^*)f_1(v^*) \in E(H_1)$.
\end{compactenum}
\end{lemma}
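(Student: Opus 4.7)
My plan is to set $F := H_1^s \times R$ with $s := 2|E(H_1)|$ and index the $H_1$-coordinates of $F$ by the oriented edges of $H_1$: enumerating these oriented edges as $e_j = (a_j, b_j)$ for $j \in [s]$, I take
\[ u^* := (a_1, a_2, \ldots, a_s, w), \qquad v^* := (b_1, b_2, \ldots, b_s, w). \]
This directly mirrors the construction of \cref{lem:projective-gadget}, except that the $H_1$-coordinates are indexed by oriented edges (rather than by ordered distinct pairs), and the $R$-coordinate of both distinguished vertices is pinned to the prescribed vertex $w$.

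For part (a), given an edge $xy \in E(H_1)$, I pick $j \in [s]$ with $(a_j,b_j)=(x,y)$ and take $f \colon F \to H$ defined by $(z_1,\ldots,z_s,r) \mapsto (z_j,r)$. On any edge of $F$ the $j$-th $H_1$-coordinate already traces out an edge of $H_1$ and the $R$-coordinate an edge of $R$, so $f$ is a homomorphism; by construction $f(u^*)=(x,w)$ and $f(v^*)=(y,w)$.

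For part (b), let $f \colon F \to H$ be arbitrary and consider the diagonal
\[ \Delta := \{(x,x,\ldots,x,y) : x \in V(H_1),\ y \in V(R)\} \subseteq V(F). \]
The induced subgraph $F[\Delta]$ is isomorphic to $H_1 \times R = H$ via $(x,y) \mapsto (x,\ldots,x,y)$, so $f|_\Delta$ is an endomorphism of $H$; because $H$ is a core it must be an automorphism $\mu$. By \cref{col:auto}, $\mu$ splits as $\mu(x,y) = (\mu^{(1)}(x), \mu^{(2)}(y))$ with $\mu^{(1)} \in \mathrm{Aut}(H_1)$ and $\mu^{(2)} \in \mathrm{Aut}(R)$, so in particular $f_1(x,\ldots,x,y) = \mu^{(1)}(x)$ for all $x,y$. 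Normalising by $\tilde f_1 := (\mu^{(1)})^{-1} \circ f_1$ yields a homomorphism $H_1^s \times R \to H_1$ that is now diagonally idempotent, $\tilde f_1(x,\ldots,x,y) = x$. By \cref{obs:necessary-to-core} the factor $R$ is a connected non-trivial core incomparable with $H_1$, and $s \geq 2$ because $H_1$ has at least one edge. Truly projectivity of $H_1$ then forces $\tilde f_1 \equiv \pi_j$ for some $j \in [s]$, so $f_1(z_1,\ldots,z_s,y) = \mu^{(1)}(z_j)$ for every input. Evaluating at the distinguished vertices, $f_1(u^*) = \mu^{(1)}(a_j)$ and $f_1(v^*) = \mu^{(1)}(b_j)$; since $(a_j,b_j)$ is an oriented edge of $H_1$ and $\mu^{(1)}$ is an automorphism of $H_1$, we conclude $f_1(u^*)f_1(v^*) \in E(H_1)$.

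The main obstacle is to bridge the gap between the hypothesis of truly projectivity, which only constrains \emph{idempotent} homomorphisms $H_1^s \times R \to H_1$, and the arbitrary $f_1$ produced by an arbitrary $f \colon F \to H$. The diagonal-restriction trick is what bridges this gap: the diagonal copy of $H$ sitting inside $F$ is an induced subgraph, so restricting $f$ to it produces an endomorphism of the core $H$, which \cref{col:auto} cleanly splits across the factorisation $H_1 \times R$, and the $H_1$-factor of this automorphism is exactly what is needed to re-normalise $f_1$ into the idempotent form on which truly projectivity can be invoked.
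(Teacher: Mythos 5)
Your proof is correct and follows essentially the same route as the paper's: the same gadget $H_1^{2|E(H_1)|}\times R$ with the distinguished vertices listing all oriented edges of $H_1$ in the $H_1$-coordinates and $w$ in the $R$-coordinate, the same diagonal-restriction trick to extract an automorphism of the core $H$, and the same use of \cref{col:auto} to split that automorphism and reduce to the idempotent case where truly projectivity applies. The only (cosmetic) difference is that you normalise $f_1$ by $(\mu^{(1)})^{-1}$ directly, whereas the paper normalises the full map $f$ by $\mu$ before projecting.
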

\begin{proof}
Let $E(H_1)=\{e_1,\ldots, e_s\}$ and let $e_i=u_iv_i$ for every $i \in [s]$ (clearly, one vertex can appear many times as some $u_i$ or $v_j$). Consider the vertices 
\begin{align*}
u:= &(u_1, \dots, u_s, v_1, \dots, v_s)\\
v:= &(v_1, \dots, v_s, u_1, \dots, u_s)
\end{align*} of $H_1^{2s}$. Let $F:=H_1^{2s} \times R$, and let $u^*:=(u,w)$ and $v^*:=(v,w)$. We will treat vertices $u$ and $v$ as $2s$-tuples, and vertices $u^*$ and $v^*$ as $(2s+1)$-tuples.

Observe that, if $xy \in E(H_1)$, then, by the definition of $u^*$ and $v^*$, there exists $i \in[2s]$ such that $x=\pi_i(u)$ and $y=\pi_i(v)$. Define a function $f: V(F) \to V(H)$ as $f(x_1,\ldots,x_{2s},r): = (\pi_i(x_1,\dots,x_{2s}), r)$. Observe that this is a homomorphism, for which $f(u^*)=f(u,w)=(x,w)$ and $f(v^*)=f(v,w)=(y,w)$, which is exactly the condition (a) in the statement of \cref{lem:nonprojective-gadget}. 

We prove (b) in two steps. First, we observe the following.
\begin{claim*}\label{cla:gadget-oranother}
Let $\varphi: F \to H$. If for every $z \in V(H_1)$ and $r \in V(R)$ it holds that $\varphi_1(z,\dots,z,r)=z$ then $\varphi_1(u^*)\varphi_1(v^*)\in E(H_1)$.
\end{claim*}
\begin{inproof}
Recall that $R$ is a connected core incomparable with $H_1$, and $H_1$ is truly projective. It means that if $\varphi_1:H_1^{2s} \times R \to H_1$ satisfies the assumption of the claim, then it is equal to $\pi_i$ for some $i \in [2s]$. From the definition of $u^*$ and $v^*$ we have that $\pi_i(u^*)\pi_i(v^*)\in E(H_1)$.
\end{inproof}

Note that the set $\{(z,\dots,z, r) \in F~|~ z \in V(H_1), r \in V(R)\}$ induces in $F$ a subgraph isomorphic to $H$, let us call it $\widetilde{H}$. Let $\sigma$ be an isomorphism from $\widetilde{H}$ to $H$ defined as $\sigma(z,\dots,z,r):=(z,r)$. 

Consider any homomorphism $f \colon F \to H$. We observe that $f|_{V(\widetilde{H})}$ is an isomorphism from $\widetilde{H}$ to $H$, because $H$ is a core. If $f |_{V(\widetilde{H})}\equiv \sigma$ then for every $z \in V(H)$ and $r \in V(R)$ it holds that $f_1(z,\dots,z,r)= \sigma_1(z,\dots,z,r)=z$, so, by the Claim above, we are done.
If not, observe that there exists the inverse isomorphism $g:H \to \widetilde{H}$ such that $g \circ f |_{V(\widetilde{H})}$ is the identity function on $V(\widetilde{H})$.
Define $\mu:=\sigma \circ g$. Observe that $\mu$ is an endomorphism of $H_1 \times R$, so an automorphism, since $H_1 \times R$ is a core. Also note that $(\mu \circ f): F \to H_1 \times R$ is a homomorphism such that for every $(z,\ldots,z,r) \in V(\widetilde{H})$ it holds that
\[(\mu \circ f)(z,\dots,z,r)=(\sigma \circ g \circ f)(z,\dots,z,r)=(\sigma \circ \id)(z,\dots,z,r)=\sigma(z,\dots,z,r) = (z,r),\]
so $(\mu \circ f)_1(z,\dots,z,z')=z$. This means that $\mu \circ f$ satisfies the assumption of the Claim, so 
\begin{align}\label{eq:edge}
(\mu \circ f)_1(u^*)(\mu \circ f)_1(v^*) \in E(H_1).
\end{align}
Clearly, for every vertex $\bar{z}$ of $F$ it holds that
\begin{align}\label{eq:mu}
\begin{split}
\left(\mu \circ f\right)(\bar{z}) =\mu\Big(f_1(\bar{z}), f_2(\bar{z})\Big)
= \Bigg( \mu_1\Big(f_1(\bar{z}), f_2(\bar{z})\Big),\mu_2\Big(f_1(\bar{z}), f_2(\bar{z})\Big)\Bigg).
\end{split}
\end{align}
Note that \cref{col:auto} implies that there exist automorphisms $\mu^{(1)}$ and $\mu^{(2)}$ of $H_1$ and $R$, respectively, such that for every $\bar{z} \in V(F)$ it holds that
\begin{align}\label{eq:autom}
\begin{split}
\mu_1\big(f_1(\bar{z}), f_2(\bar{z})\big) = & \mu^{(1)} (f_1(\bar{z}))\\
\mu_2\big(f_1(\bar{z}), f_2(\bar{z})\big) = & \mu^{(2)} (f_2(\bar{z})),
\end{split}
\end{align}
In particular, \eqref{eq:mu} and \eqref{eq:autom} imply that $\left(\mu \circ f\right)_1 = \mu^{(1)} \circ f_1$. Combining this with \eqref{eq:edge} we get that
\begin{align}\label{eq:edge2}
\left(\mu^{(1)} \circ f_1\right)(u^*)\left(\mu^{(1)} \circ f_1\right)(v^*) \in E(H_1).
\end{align}
Since $\mu^{(1)}$ is the automorphism of $H_1$, there exists the inverse automorphism $\left(\mu^{(1)}\right)^{-1}$ of $H_1$. Because $\left(\mu^{(1)}\right)^{-1}$ is an automorphism, \eqref{eq:edge2} implies that  $f_1(u^*)f_1(v^*) \in E(H_1)$, which completes the proof.
\end{proof}

Now we can proceed to the proof of \cref{thm:nonprojective-lower-h-pw}.
\begin{proof}[Proof of \autoref{thm:nonprojective-lower-h-pw}]
Since $\times$ is commutative, without loss of generality we can assume that $H_1$ is truly projective. Define $R:=H_2 \times \ldots \times H_m$, so $H = H_1 \times R$. Since $H_1$ is truly projective, it is projective, so \cref{thm:projective-lower-pw} can be applied here. Hence we known that assuming the SETH, there is no algorithm which solves instances of \homo{H_1} with $n$ vertices and pathwidth $t$ in time $\Oh^*\left((|H_1|-\epsilon)^t\right)$, for any $\epsilon >0$.

Let $G$ be an instance of \homo{H_1} with $n$ vertices and pathwidth $t$.
The construction of the instance $G^*$ of \homo{H} is analogous as in the proof of \cref{thm:projective-lower-pw}.
Let $w$ be a fixed vertex of $R$ and let $F$ be a graph obtained by calling \cref{lem:nonprojective-gadget} for $H$ and $w$.
For every vertex $z$ of $G$, we introduce to $G^*$ a vertex $z'$. Then we add a copy $F_{xy}$ of $F$ for every pair of vertices $x',y'$, which corresponds to an edge $xy$ in $G$, and identify vertices $x'$ and $y'$ with vertices $u^*$ and $v^*$ of $F_{xy}$, respectively.
As in the proof of \cref{thm:projective-lower-pw}, we observe that $G^*$ is a yes-instance of \homo{H} if and only if $G$ is a yes-instance of \homo{H_1}. Moreover,  $|G^*| \leq |F| \cdot n^2$ and $\pw{G^*} \leq t + |F|$. Thus, if we could decide if $G^*\to H$ in time $\Oh^*\left((|H_1|-\epsilon)^{\pw{G^*}}\right)$, then we would be able to decide if $G \to H_1$ in time $\Oh^* \left( (|H_1|-\epsilon)^t \right)$. By \cref{thm:projective-lower-pw}, such an algorithm contradicts the SETH.
\end{proof}

Note that combining the results from \cref{thm:non-prime} and \cref{thm:nonprojective-lower-h-pw} we obtain a tight complexity bound for graphs $H$, whose largest factor is truly projective.

\begin{corollary}\label{col:non-projective}
Let $H$ be a non-trivial, connected core with prime factorization $H_1 \times \ldots \times H_m$ and let $H_i$ be the factor with the largest number of the vertices. Assume that $H_i$ is truly projective. Let $n$ and $t$ be, respectively, the number of vertices and the treewidth of an instance graph $G$.
\begin{compactenum}[(a)]
\item Even if $H$ is a part of the input, the \homo{H} problem can be solved in time $\Oh\left(|H|^4 + |H_i|^{t+1}\cdot n\right)$, if a tree decomposition of $G$ of width $t$ is given.
\item Even if $H$ is fixed, there is no algorithm solving the \homo{H} problem in time $\Oh^*\left((|H_i|-\epsilon)^t\right)$ for any $\epsilon >0$, unless the SETH fails. {\hfill$\square$\smallskip}
\end{compactenum}
\end{corollary}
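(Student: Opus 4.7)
The plan is to observe that both parts of the corollary are immediate from previously established results, so the proof amounts to reading off the bounds in the correct way.

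For part (a), the graph $H$ is a non-trivial connected core with prime factorization $H_1 \times \ldots \times H_m$, and by assumption $|H_i| = \max_{j \in [m]}|H_j|$. I would simply invoke \cref{thm:non-prime}: even when $H$ is supplied as part of the input, the \homo{H} problem can be solved in time $\Oh\left(|H|^4+\max_{j \in [m]}|H_j|^{t+1}\cdot n\right)$, given a tree decomposition of width $t$. Substituting the maximum yields exactly the claimed bound $\Oh\left(|H|^4 + |H_i|^{t+1}\cdot n\right)$.

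For part (b), since $H_i$ is truly projective, \cref{thm:nonprojective-lower-h-pw} applies and rules out, under the SETH, any algorithm that solves \homo{H} in time $\Oh^*\left((|H_i|-\epsilon)^{\pw{G}}\right)$ for $\epsilon > 0$. The statement I want to establish is about the treewidth parameter rather than the pathwidth parameter, but this weakens (rather than strengthens) the conclusion: since $\tw{G} \leq \pw{G}$ for every graph, an algorithm running in time $\Oh^*\left((|H_i|-\epsilon)^{\tw{G}}\right)$ would in particular run in time $\Oh^*\left((|H_i|-\epsilon)^{\pw{G}}\right)$ on instances presented with an optimal path decomposition, contradicting \cref{thm:nonprojective-lower-h-pw} and therefore the SETH.

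Since everything is a direct assembly of earlier results, there is no genuine obstacle here; the only thing to double-check is the treewidth-versus-pathwidth quantifier flip in part (b), and this is trivially handled by the inequality $\tw{G} \leq \pw{G}$ recorded in \cref{sec:prelim}.
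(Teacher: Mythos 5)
Your proposal is correct and matches the paper's own (implicit) argument: the paper derives this corollary exactly by combining \cref{thm:non-prime} for part (a) with \cref{thm:nonprojective-lower-h-pw} for part (b), and the passage from the pathwidth lower bound to the treewidth lower bound is indeed the trivial direction via $\tw{G} \leq \pw{G}$.
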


\section{Conclusion} \label{sec:conclusion}

Recall that in \cref{thm:nonprojective-lower-h-pw} and \cref{col:non-projective} we presented lower complexity bounds for \homo{H} in the case that one of factors of $H$ is truly projective. In the light of \cref{con:strongly}, we would like to weaken this assumption by substituting ``truly projective'' with ``projective''. Let us discuss the possibility of obtaining such a result.

As mentioned in the introduction, a class of graphs very close to truly projective graphs was considered by Larose~\cite{Larose2002FamiliesOS}. In the same paper, he defined and studied the so-called \emph{strongly projective graphs}. A graph $H$ on at least three vertices is strongly projective, if for every connected graph $W$ on at least two vertices and every $s \geq 2$, the only homomorphisms $f: H^s \times W  \to H$ satisfying $f(x, \dots, x, y) = x$ for all $x \in V(H)$ and $y \in V(W)$, are projections.
Note that this definition is very similar, but more restrictive than the definition of truly projective graphs. Indeed, for truly projective graphs $H$ we restricted the homomorphisms from $H^s \times W$ to $H$ only for connected cores $W$, that are incomparable with $H$.
Thus it is clear that every strongly projective graph is truly projective, and, as observed before, every truly projective graph is projective.
Among other properties of strongly projective graphs, Larose~\cite{Larose2002FamiliesOS,larose2002strongly} shows that their recognition is decidable -- note that this does not follow directly from the definition.

Let us recall some results on strongly projective graphs, as they show that many natural graphs satisfy the assumptions of \cref{thm:nonprojective-lower-h-pw} and \cref{col:non-projective}. 
We say that graph is \emph{square-free} if it does not contain a copy of $C_4$ as a (not necessarily induced) subgraph. Larose proved the following.

\begin{theorem}[Larose~\cite{Larose2002FamiliesOS}]\label{thm:squarefreee}
If $H$ is a square-free, connected, non-bipartite core, then it is strongly projective.
\end{theorem}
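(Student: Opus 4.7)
The plan is to combine the strong local rigidity that square-freeness imposes on homomorphisms into $H$ with the restrictions coming from the diagonal condition $f(x, \ldots, x, y) = x$. I would proceed by induction on $s$: the base case $s = 2$ is the technical heart; for $s > 2$, one considers the restriction of $f$ to a ``partial diagonal'' $\Delta_{12} := \{(x, x, x_3, \ldots, x_s, y) \in V(H^s \times W)\}$, which as an induced subgraph is isomorphic to $H^{s-1} \times W$ and on which $f$ inherits a diagonal-type condition in the $s-1$ remaining $H$-coordinates. The inductive hypothesis forces $f|_{\Delta_{12}}$ to be a projection onto one of these $s-1$ relabelled coordinates. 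Iterating by collapsing different pairs $\Delta_{ij}$ and assembling the resulting constraints should pin down $f$ as a projection onto some single coordinate in $[s]$.

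For the base case $s = 2$, we must show that every $g : H^2 \times W \to H$ with $g(x, x, y) = x$ is a projection onto one of the two $H$-factors. Fix $y_0 \in V(W)$ and set $h := g(\cdot, \cdot, y_0) : H^2 \to H$. The plan is first to show $h$ is a projection $\pi_1$ or $\pi_2$, and then to propagate this across all of $W$: for adjacent $y \sim y'$ in $W$, the homomorphism condition gives $g(\bar x, y)\, g(\bar x', y') \in E(H)$ whenever $\bar x \bar x' \in E(H^2)$, and using that $H$ is a ramified core one rules out the possibility that the projection index jumps between $1$ and $2$ across an edge of $W$. To analyze $h$, fix $(a, b) \in V(H^2)$ with $a \ne b$. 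The diagonal neighbors of $(a, b)$ in $H^2$ are exactly the vertices $(c, c)$ with $c \in N_H(a) \cap N_H(b)$, a set of size at most one by square-freeness, and each such $c$ forces $h(a, b) \in N_H(c)$. Iterating this along walks in $H^2$ that return to the diagonal through odd closed walks of $H$ (available because $H$ is non-bipartite), I expect the constraints to sharpen to $h(a, b) \in \{a, b\}$.

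The main obstacle is the globalization step: the pointwise conclusion $h(a, b) \in \{a, b\}$ is strictly weaker than $h \equiv \pi_1$ or $h \equiv \pi_2$, and one must rule out an inconsistent patchwork choice across $V(H^2)$. The core hypothesis becomes essential here: an inconsistent choice would induce, via the diagonal embedding $H \hookrightarrow H^2$, $x \mapsto (x, x)$, an endomorphism of $H$ that is not an automorphism, contradicting that $H$ is a core. Equivalently, two adjacent vertices $(a_1, b_1)(a_2, b_2) \in E(H^2)$ with $h(a_1, b_1) = a_1$ and $h(a_2, b_2) = b_2$ would, after unfolding the implied images of the surrounding vertices, force a $C_4$ in $H$, contradicting square-freeness. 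Piecing this together via the connectedness of $H^2$ (which holds by \cref{obs:projprop}~\eqref{obs:connected}, since $H$ is non-bipartite) yields $h \in \{\pi_1, \pi_2\}$. The delicate interplay of the three hypotheses -- square-free, non-bipartite, core -- at this globalization step is where the bulk of the case analysis is expected to lie.
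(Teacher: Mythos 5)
First, note that the paper does not prove this statement: it is imported verbatim from Larose~\cite{Larose2002FamiliesOS}, so there is no internal proof to compare yours against, and I am assessing your argument on its own terms. Your overall shape (a pointwise analysis forcing $f(a,b,\cdot)\in\{a,b\}$, then globalization via connectivity, square-freeness and the core property) is a reasonable guess at the flavour of such arguments, but there is a genuine structural gap at the very first step of your base case. The slice $\{(x_1,x_2,y_0) : x_1,x_2 \in V(H)\}$ is an \emph{independent set} in $H^2\times W$ whenever $y_0$ has no loop (e.g.\ for $W=K_2$, which the theorem must cover): two vertices of $H^2\times W$ are adjacent only if their $W$-coordinates are adjacent in $W$. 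Hence $h:=g(\cdot,\cdot,y_0)$ is merely an arbitrary function on $V(H^2)$, not a homomorphism from $H^2$ to $H$, and everything you do with $h$ afterwards --- walking in $H^2$, reading off adjacency constraints on $h$-values, deducing a $C_4$ from an ``inconsistent'' pair of adjacent vertices of $H^2$ --- has no justification. Every constraint on $g$ necessarily couples \emph{different} slices $y\sim y'$; this is precisely the extra difficulty that makes strong projectivity stronger than projectivity (where the idempotent map on $H^2$ really is a homomorphism), and your reduction erases that difficulty rather than addressing it.

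Two further steps are asserted rather than proved and would need real work even if the slice issue were repaired. First, the sharpening from ``$g(a,b,y)$ is adjacent to the at most one common neighbour of $a$ and $b$'' to ``$g(a,b,y)\in\{a,b\}$'' is flagged by you only as an expectation; note that for most pairs $a,b$ the set $N_H(a)\cap N_H(b)$ is empty, so this source gives no constraint at all, and the ``odd closed walk'' iteration is not spelled out. Second, in the induction on $s$, the partial diagonals $\Delta_{ij}$ do not cover $H^s\times W$ (already for $s=3$ the tuples with pairwise distinct $H$-coordinates lie in none of them), so knowing that each restriction $f|_{\Delta_{ij}}$ is a projection does not yet determine $f$; one needs an additional reduction argument in the spirit of \cref{thm:proj-char}, which Larose establishes separately for the strongly projective setting. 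As it stands, the proposal is a plausible outline of where square-freeness, non-bipartiteness and the core property should enter, but the base case --- the heart of the theorem --- is not proved.
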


\begin{example*}Consider the graph $G_B$ on 21 vertices, shown on~\cref{fig:brinkmann} (left), it is called \emph{the Brinkmann graph}~\cite{brinkmann1997smallest}. It is connected, its chromatic number is 4 and its girth is 5. In particular, it is square-free. Thus by \cref{thm:squarefreee} we know that $G_B$ is strongly projective. 
By exhaustive computer search we verified that $K_3 \times G_B$ is a core. Let us consider the complexity of \homo{K_3 \times G_B} for input graphs with $n$ vertices and treewidth $t$.
The straightforward dynamic programming approach from \cref{thm:algo} results in the running time $\Ohs(63^t)$. However, \cref{thm:non-prime} gives us a faster algorithm, whose running time is $\Ohs(21^t)$. Moreover, by~\cref{col:non-projective} we know that this algorithm is likely to be asymptotically optimal, i.e., there is no algorithm with running time $\Ohs((21-\epsilon)^t)$ for any $\epsilon >0$ and any constants $c,d$, unless the SETH fails.
\end{example*}

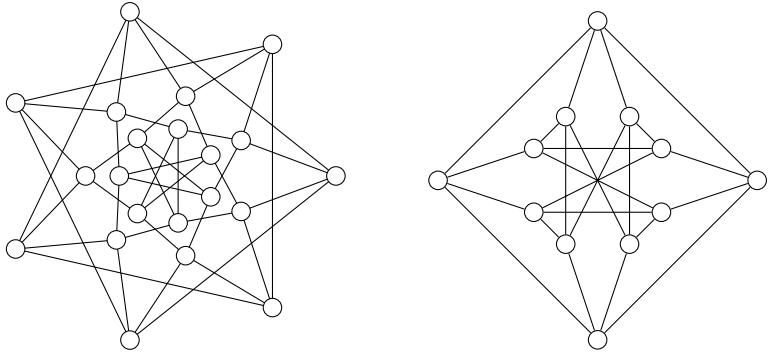
\begin{figure}[ht]
\centering{
\begin{tikzpicture}[scale=0.8,every node/.style={draw,circle,fill=white,inner sep=0pt,minimum size=7pt},every loop/.style={},scale=0.8]
\def\n{7}
\foreach \i in {1,...,\n}
{
	\draw (360/\n*\i-720/\n:3.5) -- (360/\n*\i:3.5);
	\draw (360/\n*\i+180:1.7) -- (360/\n*\i-1080/\n:3.5);
	\draw (360/\n*\i+180:1.7) -- (360/\n*\i+1080/\n:3.5);
	\draw (360/\n*\i+180:1.7) -- (360/\n*\i+180+360/\n:1);
	\draw (360/\n*\i+180:1.7) -- (360/\n*\i+180-360/\n:1);
	\draw (360/\n*\i+180+720/\n:1) -- (360/\n*\i+180-360/\n:1);
}

\foreach \i in {1,...,\n}
{
	\node (a\i) at (360/\n*\i:3.5) {};	
	\node (a\i) at (360/\n*\i+180:1.7) {};
	\node (a\i) at (360/\n*\i+180:1) {};
}
\end{tikzpicture} \hskip 1cm
\begin{tikzpicture}[scale=0.75,every node/.style={draw,circle,fill=white,inner sep=0pt,minimum size=7pt},every loop/.style={}, rotate=45,scale=0.8]
\node (a) at (0,0) {};
\node (b) at (5,0) {};
\node (d) at (0,5) {};
\node (c) at (5,5) {};

\node (e) at (1,2) {};
\node (f) at (2,1) {};
\node (j) at (3,4) {};
\node (i) at (4,3) {};
\node (g) at (3,1) {};
\node (h) at (4,2) {};
\node (l) at (1,3) {};
\node (k) at (2,4) {};

\draw (a) -- (b) -- (c) -- (d) -- (a);
\draw (a) -- (e) -- (i) -- (h) -- (l) -- (e) -- (j);
\draw (a) -- (f) -- (g) -- (k) -- (j) -- (f) -- (i);
\draw (g) -- (b) -- (h) -- (k) -- (d) -- (l) -- (g);
\draw (i) -- (c) -- (j);
\end{tikzpicture}
}
\caption{The Brinkmann graph (left) and the Chv\'{a}tal graph (right).}
\label{fig:brinkmann}
\end{figure}

A graph is said to be \emph{primitive} if there is no non-trivial partition of its vertices which is invariant under all automorphisms of this graph (see e.g.~\cite{smith1971primitive}).

\begin{theorem}[Larose~\cite{Larose2002FamiliesOS}]\label{thm:primitive}
If $H$ is an indecomposable primitive core, then it is strongly projective.
\end{theorem}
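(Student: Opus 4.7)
The plan is to fix an arbitrary homomorphism $f \colon H^s \times W \to H$ satisfying $f(x,\ldots,x,y) = x$ for every $x \in V(H)$ and $y \in V(W)$, and show that $f$ must be a projection onto one of the first $s$ coordinates. For each $y \in V(W)$, define the slice map $f_y \colon H^s \to H$ by $f_y(\bar x) := f(\bar x, y)$; the standing hypothesis forces $f_y$ to be an idempotent homomorphism. The argument splits into two steps: first, each $f_y$ is a coordinate projection $\pi_{i(y)}$ for some $i(y) \in [s]$; second, the function $y \mapsto i(y)$ is constant on $V(W)$, so that $f = \pi_i$ globally.

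Step two is the easier half and uses edge preservation. Suppose for contradiction that there is an edge $yy' \in E(W)$ with $i(y) = a \neq b = i(y')$. For any tuples $\bar x, \bar x' \in V(H^s)$ with $x_j x'_j \in E(H)$ for every $j$, the pair $((\bar x, y),(\bar x', y'))$ is an edge of $H^s \times W$, so $x_a x'_b = f(\bar x, y) f(\bar x', y') \in E(H)$. Fix any edge $uv \in E(H)$, set $x_a = u$, $x'_a = v$, $x_b = v$, $x'_b = u$, and fill the remaining coordinates with some fixed edge of $H$. Then the derived condition forces $uu \in E(H)$ — a loop at $u$ — contradicting the fact that $H$, as a non-trivial core, is loopless. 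Hence $i$ is constant on every edge of $W$, and the connectivity of $W$ makes $i$ constant overall.

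Step one is where indecomposability and primitivity enter essentially, and is the main obstacle. Given an idempotent $g \colon H^s \to H$, I would show $g$ depends on exactly one coordinate. The approach is to exploit the $\mathrm{Aut}(H)$-action together with the structure of $\mathrm{Aut}(H^s)$: since $H$ is indecomposable, the uniqueness of the prime factorization (\cref{thm:unique-fact}) combined with D{\"o}rfler's \cref{thm:auth} identifies $\mathrm{Aut}(H^s)$ with the wreath product $\mathrm{Aut}(H) \wr S_s$, so composing $g$ with an automorphism of $H$ on the left and permuting coordinates on the right produces other idempotent homomorphisms. This action induces an $\mathrm{Aut}(H)$-invariant equivalence relation on $V(H)$, capturing, roughly, which pairs of values can be interchanged on a fixed non-diagonal slice of $H^s$ without changing $g$. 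Primitivity forces this relation to be either the identity or the total relation; the latter is excluded because $g$ fixes the diagonal pointwise, so $g$ can depend non-trivially on only a single coordinate — that is, $g = \pi_i$.

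The main obstacle is extracting the correct $\mathrm{Aut}(H)$-invariant partition in step one and verifying that its only primitive-compatible form corresponds to a projection. The crux is bridging a primitivity hypothesis about the action on $V(H)$ with a rigidity statement about the full $V(H^s)$; D{\"o}rfler's theorem is the key tool, but one must rule out "mixed" idempotents that non-trivially combine more than one coordinate without violating the diagonal-fixing condition. This is where indecomposability of $H$ is essential, as it restricts the automorphisms of $H^s$ to those coming from $\mathrm{Aut}(H) \wr S_s$ and prevents any extra "diagonal" symmetries that would otherwise permit non-projection idempotents to survive primitivity.
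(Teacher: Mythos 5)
This theorem is not proved in the paper at all: it is quoted verbatim from Larose~\cite{Larose2002FamiliesOS} as a black box, so there is no in-paper argument to compare against. Judged on its own terms, your attempt has a fatal gap at the very first step. You claim that each slice map $f_y(\bar x) := f(\bar x, y)$ is an idempotent \emph{homomorphism} from $H^s$ to $H$. It is idempotent, but it is not a homomorphism: an edge of $H^s \times W$ joins $(\bar x, y)$ and $(\bar x', y')$ only when $yy' \in E(W)$, so for a loopless $W$ there are \emph{no} edges inside the slice $H^s \times \{y\}$, and the homomorphism condition on $f$ imposes no constraint whatsoever on $f_y$ as a map on $H^s$. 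Consequently you cannot invoke projectivity (or any classification of idempotent homomorphisms of $H^s$) to conclude $f_y = \pi_{i(y)}$, and the whole reduction collapses. A useful sanity check: if your Step 1 were valid, then for \emph{any} projective core $H$ every slice would be a projection, and your (correct) Step 2 edge argument would make the coordinate constant, proving that every projective core is strongly projective --- which is precisely Larose's open question, restated in weakened form as \cref{con:strongly}. The entire difficulty of strong projectivity, and the reason primitivity is needed, lies in controlling maps whose slices are unconstrained.

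Two further remarks. First, even granting the (false) claim that the slices are homomorphisms, your Step 1 still has to establish that idempotent homomorphisms $H^s \to H$ are projections, i.e., that $H$ is projective; the paragraph about an $\mathrm{Aut}(H)$-invariant equivalence relation extracted via \cref{thm:auth} and killed by primitivity is a plausible-sounding outline but is not an argument --- you never define the relation, never verify it is a partition invariant under all automorphisms (primitivity as defined here concerns invariant partitions, not arbitrary invariant relations), and never rule out the ``mixed'' idempotents you yourself flag as the obstacle. Second, Larose's actual proof does not proceed by slicing; it works with invariant reflexive relations on $H$ and a separate characterization of strong projectivity, which is why the hypothesis ``primitive'' enters in a form quite different from the wreath-product picture you describe.
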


In particular, \cref{thm:primitive} implies that Kneser graphs are strongly projective \cite{Larose2002FamiliesOS}. Note that Kneser graphs might have 4-cycles, so this statement does not follow from~\cref{thm:squarefreee}.

Interestingly, Larose~\cite{Larose2002FamiliesOS,larose2002strongly} proved that members of all known families of projective graphs are in fact strongly projective (and thus of course truly projective). He also asked whether the same holds for all projective graphs. We recall this problem in a weaker form, which would be sufficient in our setting.

\conjstrongly*

Clearly, if both \cref{con:proj-iff-prime} and \cref{con:strongly} are true, there is another characterization of non-trivial indecomposable connected cores.

\begin{observation}\label{obs:star-iff-prime}
Assume that \cref{con:proj-iff-prime} and \cref{con:strongly} hold. Let $H$ be a connected non-bipartite core. Then $H$ is indecomposable if and only if it is truly projective. {\hfill$\square$\smallskip}
\end{observation}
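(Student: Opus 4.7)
The plan is that the observation follows by a short chain of implications from statements already in place, with essentially no new calculation required. Before invoking anything, I would first verify that a connected non-bipartite core $H$ is automatically non-trivial: $H$ is loopless (no non-trivial core has a loop, and the only loopy trivial core is $K_1^*$) and contains an odd cycle, so $H \not\simeq K_1, K_2, K_1^*$. This ensures that both \cref{con:proj-iff-prime} and \cref{con:strongly} are applicable to $H$.

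For the direction ``indecomposable $\Rightarrow$ truly projective'', I would first apply \cref{con:proj-iff-prime} to pass from indecomposability to projectivity, and then apply \cref{con:strongly} to upgrade projectivity to true projectivity. For the direction ``truly projective $\Rightarrow$ indecomposable'', neither conjecture is actually needed. The remark stated right after the definition of truly projective graphs---specializing a truly projective witness via $f(x_1,x_2,y) := g(x_1,x_2)$ for any incomparable core $W$, whose existence is guaranteed by the construction of arbitrarily large families of pairwise incomparable cores at the end of \cref{sec:pre-homo}---already shows that truly projective implies projective. The unconditional observation recorded immediately after \cref{con:proj-iff-prime}, exhibiting the idempotent non-projection $((x,y),(x',y')) \mapsto (x,y')$ on $(H_1 \times H_2)^2$ for any decomposition $H = H_1 \times H_2$ with both factors not isomorphic to $K_1^*$, then yields projective $\Rightarrow$ indecomposable.

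The main (and really only) ``obstacle'' is the non-triviality bookkeeping above; once that is in hand, the whole proof is just the two-step chain described in each direction, and no genuinely new argument is introduced.
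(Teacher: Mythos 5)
Your proof is correct and matches the paper's intended argument exactly: the paper states this observation without an explicit proof, and the justification it has in mind is precisely the chain indecomposable $\Rightarrow$ projective (\cref{con:proj-iff-prime}) $\Rightarrow$ truly projective (\cref{con:strongly}), combined with the unconditional implications truly projective $\Rightarrow$ projective $\Rightarrow$ indecomposable that the text records after the definition of truly projective graphs and after \cref{con:proj-iff-prime}, respectively. Your non-triviality bookkeeping is the right thing to verify; the only slight wrinkle is that excluding $H \simeq K_1^*$ really rests on the paper's standing convention that the cores under consideration are loopless (a loop is an odd closed walk, so ``contains an odd cycle'' alone does not rule out $K_1^*$), but this is a degenerate case the paper itself glosses over.
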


Note that \cref{thm:projective}, \cref{col:non-projective}, and \cref{obs:star-iff-prime} imply the following result.

\thmeverything*

We believe that the bounds from \cref{thm:everything} are tight for all connected cores. Recall that in \cref{obs:disconnected-not-fixed} and \cref{lem:disconnected-fixed} we showed that in order to prove tight bounds for the \homo{H} problem, we can restrict ourselves to connected cores $H$. Observe that combining these reductions with the result of \cref{thm:everything}, we obtain the following complexity bounds.

\begin{theorem}\label{thm:everything+discon}
Assume that \cref{con:proj-iff-prime} and \cref{con:strongly} hold.
Let $H=H_1+ \ldots + H_\ell$ be a non-trivial core and let $H_{i,1 }\times \ldots \times H_{i,m_i}$ be the prime factorization of $H_i$, for every $i \in [\ell]$.
Define $k:= \max_{i \in [\ell], j \in [m_i]} |H_{i,j}|$. Let $n$ and $t$ be, respectively, the number of vertices and the treewidth of an instance graph $G$.
\begin{compactenum}[(a)]
\item Even if $H$ is given as an input, the \homo{H} problem can be solved in time $\Oh(|H|^4+n \cdot k^{t+1}\cdot |H|)$, assuming a tree decomposition of $G$ of width $t$ is given.
\item Even if $H$ is fixed, there is no algorithm solving the \homo{H} problem in time $\Oh^*\left((k-\epsilon)^t\right)$ for any $\epsilon >0$, unless the SETH fails.
\end{compactenum}
\end{theorem}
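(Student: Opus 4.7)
The strategy is to reduce the disconnected case to the connected case component-by-component and then invoke \cref{thm:everything} on each component. Let $H_1,\dots,H_\ell$ be the connected components of $H$; by \cref{obs:components} each $H_i$ is a non-trivial connected core, so \cref{thm:everything} applies to it. Set $k_i := \max_{j \in [m_i]} |H_{i,j}|$, so that $k = \max_{i \in [\ell]} k_i$, and fix indices $i^{\star}, j^{\star}$ with $|H_{i^{\star},j^{\star}}| = k$.

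For part~(a), I would apply \cref{obs:disconnected-not-fixed} with \cref{thm:everything}~(a) as the per-component algorithm. \cref{thm:everything}~(a) solves \homo{H_i} in time $\Oh(|H_i|^4 + k_i^{t+1}\cdot n) \le \Oh(|H_i|^4 + k^{t+1}\cdot n)$ given a tree decomposition of width $t$. Following the recipe of \cref{obs:disconnected-not-fixed}, a disconnected input $G$ is handled by splitting into its connected components (each inherits a tree decomposition of width at most $t$), while a connected input is handled by trying each $H_i$ separately. Summing the per-component running times gives $\Oh\bigl(|H|^2 + \sum_i |H_i|^4 + n\cdot \ell \cdot k^{t+1}\bigr)$, which is at most $\Oh(|H|^4 + n \cdot k^{t+1} \cdot |H|)$, using $\sum_i |H_i|^4 \le |H|^4$ and $\ell \le |H|$.

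For part~(b), I would argue by contradiction. Suppose there is an algorithm solving \homo{H} in time $\Ohs((k-\epsilon)^{\tw{G}})$ for some $\epsilon > 0$. Since $\tw{G} \le \pw{G}$ for every $G$, this algorithm in particular runs in time $\Ohs((k-\epsilon)^{\pw{G}})$. Applying \cref{lem:disconnected-fixed} with $\alpha = k - \epsilon$ to the fixed disconnected core $H$ and its component $H_{i^{\star}}$ yields an algorithm solving \homo{H_{i^{\star}}} in time $\Ohs((k-\epsilon)^{\pw{G}})$. Since $H_{i^{\star}}$ is a non-trivial connected core whose prime factorization contains a factor of size $k$, and under \cref{con:proj-iff-prime} and \cref{con:strongly} this factor is truly projective by \cref{obs:star-iff-prime}, the existence of such an algorithm contradicts the pathwidth-parameterized SETH lower bound that underlies \cref{thm:everything}~(b): namely, \cref{thm:projective-lower-pw} if $H_{i^{\star}}$ itself is indecomposable (so $H_{i^{\star}} = H_{i^{\star}, j^{\star}}$ is projective of size $k$), or \cref{thm:nonprojective-lower-h-pw} applied to the truly projective factor $H_{i^{\star}, j^{\star}}$ of size $k$ otherwise.

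The proof amounts to routine bookkeeping on top of \cref{thm:everything} and the two auxiliary reductions \cref{obs:disconnected-not-fixed} and \cref{lem:disconnected-fixed}. The one point that requires a small observation is that the treewidth upper bound assumed in~(b) must be relaxed to a pathwidth upper bound before \cref{lem:disconnected-fixed} can be invoked, which is immediate from $\tw{G} \le \pw{G}$; correspondingly, the underlying lower bounds \cref{thm:projective-lower-pw} and \cref{thm:nonprojective-lower-h-pw} are already stated in terms of pathwidth, so no further strengthening of \cref{thm:everything} itself is needed.
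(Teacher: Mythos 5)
Your proposal is correct and follows exactly the route the paper intends: the paper derives this theorem by combining \cref{obs:disconnected-not-fixed} and \cref{lem:disconnected-fixed} with \cref{thm:everything} (and, for the lower bound, the underlying pathwidth-parameterized statements \cref{thm:projective-lower-pw} and \cref{thm:nonprojective-lower-h-pw}), which is precisely what you do. Your explicit handling of the $\tw{G}\leq\pw{G}$ step and the bookkeeping $\sum_i |H_i|^4 \leq |H|^4$, $\ell \leq |H|$ matches the intended argument.
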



Finally, let us point out one more problem, related to the ones discussed in this paper. Recall that if $H=H_1 \times H_2$ is a connected, non-trivial core and $H_1 \not\simeq K_1^*, H_2 \not\simeq K_1^*$, then $H_1$ and $H_2$ must be incomparable cores. We believe it would be interesting to know if the opposite implication holds as well. To motivate the study on this problem, we state the following conjecture.

\begin{conjecture}\label{con:product-core}
Let $H_1$ and $H_2$ be connected, indecomposable, incomparable cores. Then $H_1 \times H_2$ is a core.\end{conjecture}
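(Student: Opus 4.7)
Let $H := H_1 \times H_2$ and observe first that both factors must be non-trivial and non-bipartite: the only connected bipartite cores are $K_1$ and $K_2$, but $K_1$ maps into every graph and $K_2$ is comparable with any graph containing an edge, contradicting incomparability. Hence $H_1$ and $H_2$ are loopless, connected, non-bipartite cores, so by \cref{obs:projprop} the product $H$ is connected, and by \cref{thm:unique-fact} the expression $H_1 \times H_2$ is the unique prime factorization of $H$. The plan is to take an arbitrary endomorphism $f \colon H \to H$, decompose it as $f(x,y) = (f_1(x,y),\, f_2(x,y))$ with each $f_i \colon H \to H_i$ a homomorphism, and then establish the following \emph{factoring claim}: there exist maps $g_1 \colon V(H_1) \to V(H_1)$ and $g_2 \colon V(H_2) \to V(H_2)$ such that $f_1(x,y) = g_1(x)$ and $f_2(x,y) = g_2(y)$ for every $x \in V(H_1)$ and $y \in V(H_2)$.

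Once the factoring claim is in hand, the rest is routine. Fix any edge $y_0y_0'$ of $H_2$; then for every edge $xx' \in E(H_1)$ we have $(x,y_0)(x',y_0') \in E(H)$, and applying $f$ gives $g_1(x)g_1(x') = f_1(x,y_0)\,f_1(x',y_0') \in E(H_1)$, so $g_1$ is an endomorphism of $H_1$. The symmetric argument makes $g_2$ an endomorphism of $H_2$. Since each $H_i$ is a core, both $g_i$ are actually automorphisms, and so $f$, now acting coordinatewise as $(g_1,g_2)$, is an automorphism of $H$. As $f$ was arbitrary, $H$ is a core.

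The main obstacle is proving the factoring claim, which is where incomparability and indecomposability must really be used. The natural strategy is by contradiction: suppose $f_1(x_0,y) \neq f_1(x_0,y')$ for some $x_0 \in V(H_1)$ and some $y, y' \in V(H_2)$, and try to extract from $f_1$ a nontrivial homomorphism $H_2 \to H_1$, contradicting $H_2 \not\to H_1$. The naive move of restricting $f_1$ to the slice $\{x_0\} \times V(H_2)$ fails because that slice is edgeless in $H$, so one must instead embed a homomorphic copy of $H_2$ into $H$ by combining an odd closed walk in $H_1$ (which exists as $H_1$ is non-bipartite) with walks in $H_2$, and then compose with $f_1$. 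Even so, a single such construction is unlikely to force a contradiction, and the most promising route passes through higher powers $H_1^{s} \times H_2$ and a rigidity statement in the spirit of Larose--Tardif~\cite{larose2001strongly} together with D\"orfler's factor theorem (\cref{thm:auth}). Essentially, what is required is the analogue of \cref{thm:auth} for \emph{arbitrary} endomorphisms, rather than only automorphisms, of a product of indecomposable incomparable cores; since no such general structure theorem is known, the factoring claim resists a direct attack, which is precisely why the statement is recorded here as a conjecture.
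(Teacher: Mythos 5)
This statement is a \emph{conjecture} in the paper: the authors offer no proof, only computational verification for a handful of small products such as $K_3 \times G_G$, $K_3 \times G_B$, and $K_3$ times small $4$-vertex-critical triangle-free graphs. So there is nothing in the paper to compare your argument against, and your write-up is correctly calibrated: you do not claim a proof, and you end by conceding that the central step is open.

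What you do establish is sound and worth recording. The preliminary reductions are correct: incomparability rules out trivial and bipartite factors, so $H_1 \times H_2$ is connected by \cref{obs:projprop} and has $H_1 \times H_2$ as its unique prime factorization by \cref{thm:unique-fact}; and your derivation of the conjecture from the factoring claim is valid (each $g_i$ is an endomorphism of the core $H_i$, hence an automorphism, and a coordinatewise pair of automorphisms is an automorphism of the product). The gap is exactly the factoring claim, and you should be aware that it is not merely sufficient but \emph{equivalent} to the conjecture: if the conjecture holds, every endomorphism of $H_1 \times H_2$ is an automorphism, and \cref{thm:auth} (D\"orfler) then forces it to act coordinatewise, since a product of cores is ramified. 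So your reformulation repackages the open problem rather than reducing it to something weaker; the missing ingredient is precisely an analogue of D\"orfler's theorem for arbitrary endomorphisms rather than automorphisms, which is not known. Your diagnosis of why the naive slice argument fails (the fibers $\{x_0\} \times V(H_2)$ are edgeless) is also correct. In short: no error, but no proof either --- which matches the status of the statement in the paper.
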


Note that it is straightforward to verify that if $H_1$ and $H_2$ are ramified, then their direct product $H_1 \times H_2$ is ramified as well. As every core is in a particular ramified, this is a necessary condition for \cref{con:product-core} to hold.

We confirmed the conjecture by exhaustive computer search for some small graphs.
In particular, the conjecture is true for graphs $K_3 \times H$, where $H$ is any 4-vertex-critical, triangle-free graph with at most 14 vertices~\cite{DBLP:journals/dam/BrinkmannCGM13}, the Gr\"otzsch graph (see ~\cref{fig:grotzsch}), the Brinkmann graph (see~\cref{fig:brinkmann} (left)), or the Chv\'{a}tal graph (see~\cref{fig:brinkmann} (right)). 

Let us point out that the spirit of \cref{con:product-core} is similar to the spirit of the recently disproved \emph{Hedetniemi's conjecture}~\cite{hedetniemi1966homomorphisms, sauer2001hedetniemi,shitov2019counterexamples,tardif2008hedetniemi}, which also asked how the properties of homomorphisms of factor graphs affect the properties of homomorphisms of their product.

\paragraph*{Acknowledgment.} The authors are grateful to D. Marx for introducing us to the problem, and to B. Larose, C. Tardif, B. Martin, and Mi. Pilipczuk for useful comments.


\end{document}